\documentclass[11pt,oneside]{article}
\usepackage{graphicx}
\usepackage{amssymb}
\usepackage{amsthm}
\usepackage{amsmath}
\usepackage{tikz}
\usetikzlibrary{shapes}
\usetikzlibrary{patterns}
\usetikzlibrary{arrows,positioning,decorations.pathmorphing,trees}
\usetikzlibrary{arrows.meta}
\usepackage{cancel}
\usepackage{tikz-cd}
\usepackage[ruled, lined, linesnumbered, commentsnumbered, longend]{algorithm2e}
\makeatletter
\tikzset{
  column sep/.code=\def\pgfmatrixcolumnsep{\pgf@matrix@xscale*(#1)},
  row sep/.code   =\def\pgfmatrixrowsep{\pgf@matrix@yscale*(#1)},
  matrix xscale/.code=%
    \pgfmathsetmacro\pgf@matrix@xscale{\pgf@matrix@xscale*(#1)},
  matrix yscale/.code=%
    \pgfmathsetmacro\pgf@matrix@yscale{\pgf@matrix@yscale*(#1)},
  matrix scale/.style={/tikz/matrix xscale={#1},/tikz/matrix yscale={#1}}}
\def\pgf@matrix@xscale{1}
\def\pgf@matrix@yscale{1}
\makeatother
\usepackage{stackengine}
\usepackage{pgfplots}
\usepackage{float}
\usepackage{braket}
\usepackage{authblk}
\usepackage{fullpage}
\usepackage[numbers]{natbib}
\usepackage{hyperref}
\hypersetup{
    colorlinks=true,
    linkcolor=purple,
    citecolor=purple,
    pdftitle="On Uniqueness of BCCE in Single-Item Auctions"
    }

\newtheorem{theorem}{Theorem}[section]

\newtheorem{corollary}[theorem]{Corollary}

\newtheorem{lemma}[theorem]{Lemma}

\newtheorem{proposition}[theorem]{Proposition}
\newtheorem{definition}[theorem]{Definition}

\newtheorem{assumption}{Assumption}

\newtheorem{example}[theorem]{Example}
\newtheorem*{remark*}{Remark}
\newtheorem*{observation*}{Observation}
\newtheorem*{proposition*}{Proposition}
\newtheorem*{corollary*}{Corollary}
\newtheorem*{theorem*}{Theorem}
\newtheorem*{definition*}{Definition}
\newtheorem*{axiom*}{Axiom}
\newtheorem*{claim*}{Claim}
\newtheorem*{lemma*}{Lemma}
\newtheorem{sup-assumption}{Support Assumption}
\newtheorem*{inftheorem*}{Theorem (informal)}
\newtheorem*{expresult*}{Numerical Result}
\newtheorem*{notation*}{Notation}

\title{On the Uniqueness of Bayesian Coarse Correlated Equilibria in Standard First-Price and All-Pay Auctions}
\author[1]{Mete \c{S}eref Ahunbay}
\author[1]{Martin Bichler}
\affil[1]{Technische Universit\"{a}t München, School of Computation, Information and Technology, Department of Computer Science. Boltzmannstraße 3, 85748, Garching bei München, Germany.}
\date{November 17\textsuperscript{th}, 2024\footnote{A first version of this paper appeared on ArXiV on January 2\textsuperscript{nd}, 2024, and a second version on June 20th. The current version incorporates reviewer feedback following acceptance to SODA'25.}}

\begin{document}
\maketitle

\allowdisplaybreaks

\begin{abstract}
  We study the Bayesian coarse correlated equilibrium (BCCE) of continuous and discretised first-price and all-pay auctions under the standard symmetric independent private-values model. Our study is motivated by the question of how the canonical Bayes-Nash equilibrium (BNE) of the auction relates to the outcomes learned by buyers utilising no-regret algorithms. Numerical experiments show that in two buyer first-price auctions the Wasserstein-$2$ distance of buyers' marginal bid distributions decline as $O(1/n)$ in the discretisation size in instances where the prior distribution is concave, whereas all-pay auctions exhibit similar behaviour without prior dependence. To explain this convergence to a near-equilibrium, we study uniqueness of the BCCE of the continuous auction. Our uniqueness results translate to provable convergence of deterministic self-play to a near equilibrium outcome in these auctions. In the all-pay auction, we show that independent of the prior distribution there is a unique BCCE with symmetric, differentiable, and increasing bidding strategies, which is equivalent to the unique strict BNE. In the first-price auction, we need stronger conditions. Either the prior is strictly concave or the learning algorithm has to be restricted to strictly increasing strategies. Without such strong assumptions, no-regret algorithms can end up in low-price pooling strategies. This is important because it proves that in repeated first-price auctions such as in display ad actions, algorithmic collusion cannot be ruled out without further assumptions even if all bidders rely on no-regret algorithms. 
\end{abstract}

\section{Introduction}

Consider the auction of a single item to $N$ buyers, whose valuations are drawn i.i.d. from a distribution $F$ with support on an interval $[0,\max V]$, with the first-price or all-pay auction rule. It is known that the unique equilibrium of these auctions is symmetric, and in pure strategies \cite{CH18}. In this paper, we seek to answer the question:
\begin{center}
    \emph{``Under which conditions is the unique equilibrium of this auction learnable?''}
\end{center}
By ``learnable'', we mean in the context of each buyer using simple no-external regret learning algorithms, with feedback based on their \emph{ex-interim}\footnote{That is, before considering deviations to a bid, they should know their own valuation for the item.} utility, to decide on their bidding strategies. Of course, for the purposes of practical implementations of these auctions, the set of possible valuations and bids are both discrete \& finite sets. Thus, we seek to classify when these no-regret algorithms necessarily result in an outcome \emph{close} to the equilibrium, and bound its distance to the equilibrium with respect to how \emph{fine} the discretisation is. 

The question is one of fundamental importance at the intersection of computer science and economics. Auctions provide the game theoretic framework to model competitive behaviour of agents interacting in a market, and traditional game theoretic analysis often asserts that the outcome should be a (Bayes)-Nash equilibrium (BNE). However, this provides an incomplete perspective of the problem at hand. 

First, this assertion omits the question on \emph{which} equilibrium is reached. In general, the BNE of an auction need not be unique. In this case, we are faced with an \emph{equilibrium selection problem} \cite{HarsanyiSelten1988}; how do we know which of the equilibria will be the result of play? Still, we avoid this issue in the continuous single-item auction settings we consider, as in a variety of single-parameter auctions with symmetric priors, the unique BNE ends up being the canonical symmetric equilibrium \cite{CH18}. However, discretised versions of a continuous auction with symmetric priors need not have a unique equilibrium \cite{RGC20}. How the equilibria of these discretised auctions compare with the equilibrium of the continuous auction, and in general what the expected outcome of such auctions should be is thus an open problem.

Second, real-world markets are often off equilibrium, which raises the questions of \emph{if \& how} an equilibrium is reached. Here, learning theory provides potential answers. The idea is that we may assume agents implement some learning algorithm, and then attempt to obtain a qualitative description or theoretical guarantees on their behaviour. A classical result is that the class of no-regret learning algorithms converges to the so-called \emph{[coarse] correlated equilibrium} ([C]CE) of a game \cite{fudenberg1998theory}. This is a generalisation of the idea of a Nash equilibrium for a game of complete information and can be solved for in time polynomial in the size of the game. 

However, the set of (C)CE of a game is in general larger than the set of its Nash equilibria and learning algorithms need not converge to a Nash equilibrium. This is true even in the fundamental setting of a complete-information first-price auction, whose CCE may significantly differ from its Nash equilibria \cite{FLN16}; and in these auctions, mean-based learning algorithms need not converge to equilibrium \cite{kolumbus2022auctions,deng2022nash}. In the incomplete information setting, Feng et al.~\cite{FGLMS21} show that in an incomplete information first-price auction with symmetric uniform priors, discretised as $\{1/n,...,1\}$, if all buyers implement mean-based learning algorithms \emph{with a lengthy pretraining phase}, then buyers concentrate their bids with high probability $\pm 1/n$ their equilibrium bids. The analyses of \cite{kolumbus2022auctions,deng2022nash,FGLMS21} all depend on dividing the learning process into several epochs, paired with probabilistic analysis of the evolution of bidding strategies. However, for the complete information first-price auction the convergence result is conditional on the number of buyers with highest valuation \cite{deng2022nash}, whereas in the incomplete information setting the probability lower bound depends on the length of the pretraining round \cite{FGLMS21}.

In other words, present day positive convergence to equilibrium results in literature lack generality. This stands in contradiction with the line of empirical work which has observed that learning algorithms converge to equilibrium in a wide range of complete and incomplete information auctions \cite{soda2023,bichler2023convergence}. This leads us to conjecture that there exists a class of learning algorithms that do converge to equilibrium, for which the theoretical framework to prove so is at present unknown. Known sufficient conditions for learnability are the existence of a potential function or a variationally stable Nash equilibrium \cite{MZ19}, which do not necessarily hold in our auction settings \cite{bichler2023convergence}. Thus, the analysis of learning behaviour in even the most fundamental of auction settings necessitates advances in our theoretical approach. If buyers do not necessarily learn to bid only within a small interval of their equilibrium bids, then we require another \emph{``notion of nearness''}. And if a large class of learning algorithms do converge to equilibrium in some sense, then we require the theoretical framework in which they can be proven in full generality.

This motivates our study of learnable outcomes of single-item first-price and all-pay auctions in the standard independent private values setting with symmetric priors. Both positive and negative results have far-reaching implications. 

On the one hand, settings in which convergence is guaranteed imply the existence of properties of the underlying game which ensure that its equilibrium is learnable. If these conditions hold, this would allow for the computation of (approximate) equilibrium outcomes with convergence guarantees in settings for which we don't have analytical solutions so far \cite{fibich2011numerical}. Importantly, we would have guarantees that at least in repeated auctions a broad class of no-external regret algorithms converges to the Bayes-Nash equilibrium. 

On the other hand, if we cannot expect learning agents to converge even in a static repeated auction setting, this raises fundamental questions about the welfare properties of markets such as display ad auctions, which rely on automated and learning agents. Negative convergence to equilibrium proofs for learning algorithms are often accompanied by adverse behaviour from the players; in settings where all agents implement no-regret learning algorithms, agents' history of play can exhibit cyclic and/or chaotic behaviour \cite{MPP18,CP19}. In the context of auctions, this raises the possibility of algorithmic collusion, price cycles, and chaotic price dynamics. This is especially important in the context of automated display ad and sponsored search auctions, where buyers' bids are often assumed to be consistent with automated no-regret learning algorithms \cite{nekipelov2015econometrics}. 

\subsection{Contributions}

Our main contribution is thus a technical framework, through which settings where no-regret learning algorithms converge to near-equilibrium can be identified experimentally, and this convergence can be proven theoretically. We begin our analysis by formulating a linear program (LP) (\ref{opt:generic-primal}) which can be used to bound the distance of a BCCE of a discretised auction from the unique Nash equilibrium of its continuous counterpart, and in fact for games with a unique pure strategy Nash equilibrium (NE) in general. Moreover, if the unique NE is known in closed form, the LP can be evaluated numerically. The primal variables are probabilities buyers bid following a vector of bidding strategies in the BCCE, and the equilibrium constraints are the CCE constraints for the agent-normal form of the game. 

Diverging from previous work which has bound revenue, welfare or buyers' surplus \cite{FLN16,bergemann2017first,pavlov2023correlated}, we first argue intuitively that an appropriate objective of the LP is the Wasserstein-$2$ distance (\ref{def:Wasserstein-objective-general}) of buyers' marginal bid distributions to the equilibrium. This is motivated by the requirement that our distance measure be robust to choice of discretisations; in particular, for settings where two bids $b,b'$ have small difference and the LP has feasible solutions that can assign high probability to either of them. But bids of small difference have about the same distance to the equilibrium bid, so do not change the objective so much. 

For each discretised auction, the value of the LP (\ref{opt:generic-primal}) thus provides the worst-case Wasserstein-$2$ distance of buyers' marginal bid distributions to the Bayes-Nash equilibrium. Thus, if the value of the LP is decreasing in the discretisation size, i.e. the cardinality of the valuation and bidding spaces, then we are able to conclude that average play converges to a near-equilibrium distribution for no-regret learners. Importantly, the finer the discretisation, the closer the outcome to the equilibrium. Else, there exists some no-regret sequence for which average play is bounded away from the equilibrium, and no-regret is too weak a property to ensure convergence to equilibrium. 

This motivates numerical evaluation of the LP for discretisations of continuous auctions. However, for an $N$ player auction with valuation and bidding spaces of size $n$, this LP has $n^{nN}$ primal variables and $1+N n^2$ constraints, which is clearly prohibitive for any reasonable discretisation size. To get around this limitation, we provide a relaxation (\ref{opt:Wasserstein}) which has $O(n^{2N})$ variables and constraints each for a fixed number of buyers $N$, which is polynomial in the discretisation size, and tractable for two buyer auctions with discretisation size $n \leq 21$.

The relaxed problem provides an overestimate of the worst-case distance bounds; vanishing Wasserstein-$2$ distances still imply learnability, else there might exist no-regret algorithms which fail to converge to a near-equilibrium outcome. Our numerical results in Section \ref{sec:numerical-results} suggests that convergence in all-pay auctions can be explained by the no-regret property, while for the first-price auction this is not necessarily the case. In particular, under suitable assumptions on the prior distribution, buyers independently using no-external regret learning algorithms would result in an outcome close to equilibrium; and these distributional assumptions are mild for the all-pay auction.

\begin{expresult*}
For the BCCE of all-pay auctions, the Wasserstein-$2$ distance decreases like $O(1/n)$ in the discretisation size. However, for the first-price auction, decreasing Wasserstein-$2$ distance bounds are conditional only for prior distributions with a concave cumulative distribution function. 
\end{expresult*}

Our theoretical goal is then to develop the framework within which this behaviour can be proven. Dual solutions to each distance bounding primal LP  (\ref{opt:Wasserstein}) for each discretised auction, of known value and feasibility, would prove these vanishing distance bounds. However, optimal dual solutions turn out to be sensitive to the choice of discretisation, and do not appear to admit any discernable closed form solution in the discretisation size $n$ (c.f. Figure \ref{fig:dual}). Our next, technical contribution is then to formalise an \emph{appeal to continuity}; the continuous auction has a unique BNE, which raises the possibility that its unique BCCE is its unique equilibrium. At a high level, we seek to lift the problem to a continuum of valuations and bids; a BCCE for a continuous auction is a correlated distribution over buyers' bidding strategies; i.e. a \emph{measure} over ordered tuples of functions from the set of valuations to the set of bids. And with arguments analogous to those which would prove uniqueness of a BCCE of a game of incomplete information, we may provide a proof of uniqueness of BCCE for the continuous auction via dual solutions to the continuous analogue of the distance bounding LP. We would then interpret a BCCE of the discretised auction as an approximate BCCE for the continuous auction; as a consequence, weak duality allows us to recover distance bounds.

Dealing with such an infinite LP, there several issues whose handling requires care. The concerns here are mostly analytic in nature, related to measurability of functions within the primal and dual LPs, which are needed to ensure that an asserted dual solution is feasible, and weak duality holds between the pair of infinite primal-dual LPs. For feasibility of the dual solution, the primal objective must be upper bounded by the integral of the dual constraints, multiplied by a non-negative function. Our solution is then to use variational calculus to deduce an appropriate primal objective. We observe that feasibility of the dual solution changes under different support\footnote{The set of bidding profiles assigned positive measure.} assumptions for the BCCE, and we iteratively relax the assumptions we work with.

Our first result then turns out to be a uniqueness result for the BCCEs of first-price and all-pay auctions, which at a high level shows that for symmetric BCCE, uniqueness can only be violated if buyers' bidding strategies fail to strictly increase. 

\begin{inftheorem*}[Theorems \ref{thm:thm-fp-uniq} \& \ref{thm:thm-ap-uniq}]
    In the symmetric independent private values (IPV) model we consider, when players' strategies are symmetric, normalised, differentiable, and strictly increasing in their values, the unique BCCE assigns probability $1$ to the canonical equilibrium bidding strategies for both the first-price and all-pay auction.
\end{inftheorem*}

Our variational proof shows that the change in an appropriate weighted sum of deviation profits is precisely the change in a \emph{modified} Wasserstein-$2$ distance. The modification depends on the number of buyers in the auction; and with two buyers, the distance measure is precisely the Wasserstein-$2$ distance extended linearly beyond the maximum equilibrium bid. Thus, our proof of uniqueness \emph{constructs} the Wasserstein-$2$ distance we used in our numerical work from first principles.

Strictly increasing bidding strategies is a standard assumption in economics literature. However, we cannot in general guarantee that buyers do not employ weakly increasing strategies throughout a learning process. Moreover, our proof, i.e. the associated dual solution, fails to demonstrate uniqueness for the first-price auction when the support of the BCCE may include bidding strategies which are weakly increasing. The reason for this is due to buyers potentially increasing their payoffs by pooling their bids; even though such an outcome might not be an equilibrium strategy, it can cause our specific dual solution to become infeasible. We get around this issue by finding an alternate dual solution; this solution reduces the question of uniqueness with support in weakly increasing strategies to uniqueness with support in strictly increasing strategies (Lemma \ref{lem:v-solution}). Second, we still have to show feasibility when the support is restricted to strictly increasing strategies. It is here that we recover the prior dependence that we observe in our numerical experiments.

\begin{inftheorem*}[Theorem \ref{thm:thm-weak-uniq}]
    In the symmetric IPV model we consider, when players' strategies are symmetric, normalised, differentiable, and weakly increasing in their values, the unique BCCE assigns probability $1$ to the canonical equilibrium bidding strategies for the all-pay auction when the inverse prior c.d.f. is strictly increasing and differentiable. This also holds for the first-price auction when the inverse prior c.d.f. is strictly convex (i.e. the prior c.d.f. is strictly concave).
\end{inftheorem*}

We again construct the associated distance measure here, but under some regularity assumptions (c.f. Proposition \ref{prop:wasserstein-prior}) this distance measure to equilibrium is bounded below by a modified Wasserstein-$2$ distance. For first-price auction, unfortunately, the concavity assumption on the prior turns out to be necessary. In particular, ranges of valuations over which buyers' prior is convex raises the possibility that when they pool their bids, their expected payoff is higher than their payoff if they received the item with probability $1$ for a price equal to their equilibrium bid. When the prior distribution is strictly convex, this holds for all buyers when they pool their bids at $0$ for any valuation they may hold. As a consequence of our characterisation of uniqueness for discretised auctions (Propositions \ref{prop:dual-form}), no primal-dual proof of uniqueness of the form we desire can exist when buyers may pool their bids. And rightfully so; we show via examples in Section \ref{sec:examples} that non-concavity of the prior distribution can result in symmetric BCCE of the continuous auction where buyers with high valuations collude by correlating and pooling their bids.

Finally, we demonstrate how to obtain Wasserstein-$2$ distance bounds for the BCCE of discretised auctions via our result of uniqueness of the continuous BCCE. Our idea is that the dual solution for the uniqueness result under the weak monotonicity assumption is still feasible, and that the BCCE of the discretised auction induces an approximate BCCE of the continuous auction. This holds by representing the discretised bidding strategies as a piecewise-constant bidding strategy for the continuous auction. Then, in our \emph{transport lemma} (Lemma \ref{lem:transport}), we show that considering weakly increasing strategies is without loss of generality. The approximation result then follows from weak duality, and the approximation factor then admits a bound which is decreasing in the fineness of the discretisation.

\begin{inftheorem*}[Theorem \ref{thm:main-disc-result}]
    In the discretised symmetric IPV model, whenever 
    \begin{enumerate}
        \item for an all-pay auction, if the prior distribution satisfies some mild assumptions, e.g. positive finite density on $(0,\max V]$ and a bound on its divergence at $0$,
        \item for a first-price auction, if the prior distribution is also strongly concave,
    \end{enumerate}
    then as the discretisation size increases, the modified Wasserstein-$2$ distance of buyers' marginal bid distributions to the canonical equilibrium in a BCCE supported in symmetric strategies tends to $0$. 
\end{inftheorem*}

We remark that our result holds for \underbar{all discretisations} with respect to their fineness, and not only $\{0,1/n,...,1\}$. An immediate consequence is that, if a buyer employs a no-regret algorithm and receives feedback with deterministic self-play (c.f. Algorithm \ref{alg:self-play}), they will learn to play an approximate equilibrium for the all-pay auction with any reasonable prior, e.g.

\begin{corollary*}{(Informal, of Proposition \ref{prop:bounds})}
    Consider a continuous all-pay auction where the buyers' prior distributions are drawn i.i.d. from a power law prior distribution $F(v) = v^\alpha$ for $\alpha > 0$ on $[0,1]$, and its discretisation where the valuation and bid sets are $1/n$-fine. Let a buyer $i$ decide on their bidding strategies in the discretised auction by using the Hedge algorithm to draw pure bidding strategies, updating the weights on the assumption that all other buyers would choose the same bidding strategy. Then the time-average of buyer $i$'s bidding strategies converges to a distribution at most $\tilde{O}(1/n^{\min\{ \alpha, 1\}})$ away from the equilibrium bidding strategies with respect to a modified Wasserstein-$2$ distance. 
\end{corollary*}

The first-price auction does not allow for such a strong result; we require either algorithms that restrict to \textit{strictly} increasing strategies\footnote{We are not aware of any such algorithm.} or a strictly concave prior distribution to get uniqueness of the BCCE and the associated approximation guarantees. On the bright side, our numerical experiments suggest that strict concavity can be weakened to weak concavity. However, a weakly concave prior is too strong an assumption for most realistic environments; it is violated by Gaussian distributions. 

And besides, for weakly increasing strategies with priors exhibiting convexities, we show that there are off-equilibrium pooling strategies at low prices to which no-regret algorithms might converge to in worst-case. This has important implications because first-price auctions are widely used for display-ad auctions. While the very algorithms being used are not known, the use of some no-regret algorithms is a reasonable assumption \citep{nekipelov2015econometrics}. However, we show that when we can \emph{only} assume no-regret learning, but not make additional assumptions on advertisers' bidding agents, we cannot rule out algorithmic collusion. For specific algorithms that allow for additional assumptions (such as restricting search to strictly increasing strategies) the phenomenon might vanish. However, the use of no-regret algorithms only does not rule out algorithmic collusion in first-price auctions, a phenomenon that was reported in experimental work \cite{banchio2023artificial}. This is a subtle, but important difference to all-pay auctions, where our experimental work suggests near unconditional converge to the BNE for no-regret algorithms, and deterministic self-play provably does so. 

\subsection{Related Work}

Auction theory studies the allocation and prices in markets with self-interested participants in equilibrium, via the methods of game theory \cite{Krishna2010}. Auctions are modelled as games of (in)complete information, with continuous type- and action-spaces. Traditionally, the solution concept of an auction game has then been a \emph{(Bayes)-Nash equilibrium} (BNE); a set of bidding strategies for each agent such that no agent may unilaterally deviate and increase their payoff. Vickrey's initial work \cite{vickrey1961counterspeculation}, which arguably was the start of auction theory, studied the symmetric first-price auction setting we study in this paper. Since then, the equilibria of single-item first-price and all-pay auctions have been the topic of literature too extensive to cite compherensively, iteratively relaxing the assumptions (e.g. affiliation of values \cite{milgrom1982atheory}, risk neutrality \cite{maskin1984optimal}) and studying the uniqueness of equilibria \cite{maskin2003uniqueness,CH18}.

It is in general computationally hard to compute an exact or approximate equilibrium in normal-form games. \citet{CD06} and \citet{DGP09} show that finding a Nash equilibrium in a two-player bimatrix game is $PPAD$-complete. The results extend stronger hardness results for the problem of optimising over the set of Nash equilibria \cite{CS08}. Auction games in general are no exception; finding a BNE in simultaneous Bayesian single-item second-price auctions is $PP$-hard when buyers' valuations are submodular \cite{CP14}. A recent line of literature also studies the case of Bayesian single-item first-price auctions we consider, though here negative hardness results for equilibrium computation require additional degrees of freedom in the rules of the auction or informational assumptions. For continuous value distributions and a discretised bidding space, \citet{chen2023complexity} show that with carefully designed tie-breaking rules, finding an $\epsilon$-BNE is $PPAD$-complete for small enough $\epsilon$; the same hardness of approximation is shown by \citet{filosratsikas2021fixp} for subjective priors and a uniform tie-breaking rule, along with $FIXP$-hardness of computing an exact equilibrium. The latter hardness result was recently extended to discretised valuation and bidding spaces, though in the IPV setting we consider with symmetric priors approximate equilibrium computation admits a PTAS \cite{filos2024computation}. Finally, \citet{wang2020bayesian} give an algorithm to compute an exact Nash equilibrium for discretised valuation spaces and continuous bidding spaces, though here the tie-breaking rule favours buyers with higher valuation. 

We remark that the question of computational complexity is distinct from our question on whether no-regret learners converge to bidding strategies nearby an equilibrium; wherein we inspect the convergence guarantees of specific algorithms which are already known to efficiently compute an approximate no-regret outcome.  But for general games, it is well-known that learning algorithms do not converge. The negative results are in fact even stronger; learning dynamics can cycle \cite{PP19,MPP18} and in fact exhibit chaotic behaviour \cite{CP19}. Multiplicative weights update can even approximate arbitrary dynamical systems \cite{AFP21}, and its continuous counterpart (replicator dynamics) is in fact Turing complete \cite{andrade2023turing}. There are even games where every learning dynamic has an initialisation which does not converge to an equilibrium \cite{MPPS22}. 

This doesn't mean that learning algorithms do not converge in specific game types, and a recent line of literature shows that a variety of no-regret or deep learning algorithms converge to an equilibrium in a \emph{broader} class of auction games than the one we consider in this paper, including single-item auctions with asymmetric priors and risk aversion, combinatorial auction settings, and double auctions \cite{BBLS17,BFHKS21,soda2023}. However, a theoretical justification for why such convergence is observed has proven elusive. \citet{wang2023noregret} point out that existing literature has almost exclusively focused on settings in the presence of a potential function (e.g. \cite{blum2010routing,kleinberg2009multiplicative,heliou2017learning,MZ19}) or monotonicity of the utility gradients (e.g. \cite{MZ19,tatarenko2019learning,golowich2020tight,cai2022finite}) for the convergence of \emph{specific} learning algorithms which have the no-regret property. \citet{wang2023noregret} also define a notion of strong monotonicity to obtain a sufficient condition for general no-regret algorithms to converge. However, the auction settings we consider in this paper are neither known to be monotone nor to admit a potential function. In fact, the first-price auction is not monotone in the alternate learning setting where at each time period, buyers make small modifications to the their bidding strategies \cite{bichler2023convergence}. 

Positive convergence results in literature do exist for complete-information models of auctions, with respect to \emph{mean-based} learning algorithms, a strengthening of the no-regret property proposed by \citet{braverman2018selling}. \citet{kolumbus2022auctions} show that \emph{if} the history of play for agents employing mean-based learning algorithms converges to some distribution, then that distribution is a \emph{co-undominated} CCE, and such CCE are necessarily close to equilibrium. Further work by \citet{deng2022nash} show that the convergence of mean-based learning algorithms to the Nash equilibrium then depends on the number of buyers who have highest valuation for the item. Closer to our setting and our results, for incomplete information auctions, is the work of \citet{FGLMS21}, which shows that buyers using mean-based learning algorithms in a discretised two buyer first-price auction will concentrate their bids about the equilibrium of the continuous auction. Two major caveats apply; the result holds only for valuation and bidding sets $\{1/n,2/n,...,1\}$, and is contingent on a lengthy pretraining period where each buyer bids uniformly at random.

Results are more limited within the framework of no-regret learning in auctions, equivalently, the (C)CE of our auctions. For the complete information case, \citet{dutting2014mechanism} prove the uniqueness of CE in a variety of auction settings, which includes single-item first-price auctions. \citet{FLN16} show that any CE of a first-price auction is a mixture of its Nash equilibria, hence efficient; but there are CCE with suboptimal welfare and revenue. \citet{einy2022strong} identify uniqueness of correlated equilibrium as the \emph{strong robustness} of a Nash equilibrium to incomplete information; this implies the uniqueness of CE of two player all-pay auctions.

In the incomplete information setting, we remark that any discussion of \emph{Bayesian} (coarse) correlated equilibria of auctions necessitates what such a notion should be, as various proposals exist in literature; including communication equilibrium \cite{myerson1982optimal,myerson1986multistage}, Bayes correlated equilibrium \cite{bergemann2016bce} (also studied in \cite{makris2023information,zhang2022polynomial}), and the multiple proposals of Forges \cite{forges1993five,forges2023correlated}. The notion we adopt in this paper is the coarse variant of a correlated equilibrium for the \emph{agent-normal form} of the game, previously considered by \citet{HST15} following one of the five notions of a BCE discussed by \citet{forges1993five}. We adopt this notion, as no-regret learners with ex-interim utility feedback converge to such an equilibrium. In particular, in the other notions of equilibrium, the role of the mediator is central; however, in our setting, there is no such mediator and any correlation of buyers' bids arises from the dynamics of the learning process itself. We refer the interested reader to \citet{Fujii2023} for a discussion of learning algorithms which compute a B(C)CE in a variety of its aforementioned definitions.

With respect to our auction settings, there have been few studies of the B(C)CE of auctions, with respect to any of its definitions. 
\citet{lopomo2011lp} undertake an LP based approach similar to ours in spirit, bounding the maximum buyer surplus in a variant of communication equilibrium where transfers between buyers are allowed. 
\citet{HST15} extend the smoothness framework of \citet{roughgarden2015intrinsic} to the (agent-normal form) BCCE of general games, from which welfare bounds for incomplete information auctions follow immediately. \citet{bergemann2017first} study the Bayes correlated equilibria of the first-price auction under common priors and provide best- and worst-case bounds for revenue and bidder surplus for any such information structure. The result implies that Bayes correlated equilibria of first-price auctions is in general not unique; a similar conclusion can be drawn for the communication equilibria of all-pay auctions via the results of \citet{pavlov2023correlated}. While the approach of Bergemann et al. \cite{bergemann2017first} is in fact similar to our infinite LP framework, we remark that their proof methods do not extend to the notion of BCCE we consider in this paper. This is because the sets of Bayes CE and BCCE do not in general contain each other; agent-normal form Bayesian \emph{coarse} correlated equilibria have weaker equilibrium constraints, while Bayes CE and communication equilibria can allow for more general type-action correlations. We mention how precisely this is the case in Section \ref{sec:prelim-games-equilibria}.

\subsection{Organisation of the Paper}

We begin in Section \ref{sec:prelims} by introducing the game theoretic definitions, especially notions of (Bayesian / coarse) correlated equilibria, and the discretised or continuous symmetric single-item auction models we use throughout the paper. We also discuss the connections of our equilibrium notions to no-regret learning. We proceed in Section \ref{sec:computation} with our initial study of discretised auction settings. In Section \ref{sec:distance-bounding-lps} we introduce our LP based framework for certifying when all Bayesian (coarse) correlated equilibria [B(C)CE] of a discretised auction are \emph{``near''} its unique pure strategy Nash equilibrium, or that of its continuous counterpart. The resulting LP can be exponential in the size of the discretised valuation and bid sets; to enable numerical experiments, in Section \ref{sec:relaxations} we consider a relaxation of the distance bounding primal LP which is amenable to numerical experiments. The results are presented in Section \ref{sec:numerical-results}, which show the prior dependence of convergence for first-price auctions, and apparent unconditional convergence for all-pay auctions.

Our numerical results raise the question of whether we may certify the value of these distance bounding LP via a family of solutions for its dual, which the focus of the rest of the paper. First in Section \ref{sec:structure} we momentarily conclude our discussion of discretised auctions, by investigating the form of dual solutions which certify structural properties of bidding strategies, such as monotonicity or uniqueness. These dual solutions require uniqueness of the primal solution, which need not necessarily hold for discretised auctions. This motivates our study of BCCE of \emph{continuous} first-price and all-pay auctions in Section \ref{sec:theory-cont}; to improve readability, we defer the analogous proofs for the all-pay auction to the appendix. In Section \ref{sec:formalism} we introduce our infinite LP based framework for certifying the uniqueness of BCCE in continuous auctions; measure theoretic considerations necessitate care in the specification in the set of its \emph{supporting strategies}, which we iteratively relax. Section \ref{sec:uniqueness-strict} then contains our first uniqueness result; for the first-price and all-pay auctions with symmetric priors, the canonical equilibrium \underbar{only} BCCE with support symmetric, strictly increasing, differentiable, normalised, and bounded bidding strategies. In Section \ref{sec:weak-uniqueness} we weaken our strict monotonicity assumption to weak monotonicity, but at the cost of the dependence on concave priors for first-price auctions we had observed in our numerical results. Section \ref{sec:examples}, we show that this assumption is necessary, by presenting examples of BCCE with support in non-equilibrium strategies for first-price auctions with non-concave priors.

Finally, in Section \ref{sec:cont-to-disc}, we show how the results on uniqueness of BCCE of a continuous auction can be translated to \emph{near}-uniqueness of the BCCE of its discretisations; as long as the discretised BCCE has support in symmetric strategies. In Section \ref{sec:disc-to-approx-cont-BCCE} we discuss how BCCE of a discretised auction induces an approximate BCCE of its continuous counterpart, and modify our infinite LP framework to account for the discretised strategies. Section \ref{sec:bounds-theory} then contains the machinery, which allows us to translate the proof of uniqueness for the weakly monotone setting to proofs of bounds for the distance between the BCCE of the discretised auction and the canonical equilibrium.  

\section{Preliminaries}\label{sec:prelims}

In this section, we provide an overview of the setting of our paper. In Section \ref{sec:prelim-games-equilibria}, we recall the notion of a normal form game of (in)complete information, the definitions of (Bayes-)Nash equilibria, and for finite normal form games of complete information, the notions of (coarse) correlated equilibria. In Section \ref{sec:learning}, we present the learning model we consider, and the notion of a \emph{Bayesian} (coarse) correlated equilibrium appropriate for its convergence analysis. Finally, we discuss the continuous and discrete models of single-item first-price and all-pay auctions we analyse. All throughout, we restrict attention to the \emph{independent private values} (IPV) model; a player's utility is determined by the vector of actions of all players, plus \emph{only} their own type / valuation for the item. 

\subsection{Games and Notions of Equilibria}\label{sec:prelim-games-equilibria}

\begin{notation*}
  %\vspace{11pt}\noindent\textbf{Notation.} 
  In what follows, $\mathbb{N}$ is the set of counting numbers\footnote{i.e. we assume $0 \notin \mathbb{N}$.}, $\mathbb{R}$ is the set of real numbers, and $\mathbb{R}_+$ is the set of non-negative reals. For any pair of sets $S, T$, we denote by $\Delta(S)$ the set of probability distributions over $S$, and by $S^T$ the set of functions $T \rightarrow S$. Whenever $S$ is finite, we identify each $s \in \Delta(S)$ as an element of $\mathbb{R}_+^{|S|}$ whose entries sum up to $1$. Moreover, for any finite sequence $(s_\ell \in \mathbb{R}^{S_\ell})_{\ell \in \Lambda}$ of vectors indexed by $\Lambda$, we denote by $\otimes_{\ell \in \Lambda} s_\ell$ their outer product. Finally, for $s, s' \in S$, the indicator function $\mathbb{I}[s = s']$ evaluates to $1$ if $s = s'$ and $0$ otherwise. Due to concerns of length and readability, we will sometimes write the indicator function via the Kronecker delta, $\mathbb{I}[s = s'] = \delta_{s s'}$.   
\end{notation*}

\begin{definition}
  A \textbf{normal-form game} $\Gamma$ is specified as a tuple $\Gamma = \left(N,\Theta,F^N,A,u\right)$, where $N \in \mathbb{N}$ is the number of players, and we denote by $[N] = \{n \in \mathbb{N} | n \leq N\}$ the set of players. The set of vectors of types of players is $\Theta = \times_{i \in [N]} \Theta_i$, and $F^N \in \Delta(\Theta)$ is the prior distribution over players' types. Meanwhile, the set of vectors of actions of players is $A = \times_{i \in [N]} A_i$, and we call a pair $(a,\theta) \in A \times \Theta$ an outcome of the game. Players' utilities $u = \times_{i \in [N]} u_i$ are then functions $u_i : A \times \Theta_i \rightarrow \mathbb{R}$, mapping an outcome of the game to a payoff attained by a player.
  If $\Theta,A$ are both finite sets, then $\Gamma$ is called a \textbf{finite game}. If $|\Theta| = 1$ ($> 1$) then we say that $\Gamma$ is a game of \textbf{(in)complete information}. If $\Gamma$ is a game of complete information then we shall write $\Gamma = (N,A,u)$ to simplify notation.
\end{definition}

Games model settings in which agents interact strategically, seeking to maximise their payoff from an outcome. Thus analysis of games are meaningful with a concept of \emph{equilibrium}, a \emph{stable} outcome, specified via strategies $s_i : \Theta_i \rightarrow \Delta(A_i)$ for each player $i \in [N]$ such that no player $i$ of type $\theta_i$ has any incentive to choose a probability distribution other than $s_i(\theta_i)$, given the information available to them. A common benchmark is a so-called (Bayes)-Nash equilibrium.

\newcommand{\BNE}{\textnormal{BNE}}
\newcommand{\CE}{\textnormal{CE}}
\newcommand{\CCE}{\textnormal{CCE}}

\begin{definition}
    A \textbf{Bayes-Nash equilibrium} (BNE) of a game $\Gamma$ is a tuple of players' strategies $s$ such that for any player $i$, any type $\theta_i \in \Theta_i$ and any action $a'_i \in A_i$ of player $i$,
    \begin{equation}\mathbb{E}_{\theta_{-i} \sim F_{-i}, a \sim s}[u_i(a|\theta_i)] \geq \mathbb{E}_{\theta_{-i} \sim F_{-i}, a_{-i} \sim s_{-i}}[u_i(a'_i,a_{-i}|\theta_i)],\end{equation}
    where, following standard notation, $\theta_{-i} = (\theta_{j})_{j \in I \setminus \{i\}}, a_{-i} = (a_{j})_{j \in I \setminus \{i\}}$ and $s_{-i} = (s_{j})_{j \in I \setminus \{i\}}$ denote the vector of the respective term for every player except $i$, and $F_{-i}$ is the joint probability distribution of other players' types. For a game $\Gamma$, we denote by $\BNE(\Gamma)$ the set of its BNE. 
\end{definition}

The relaxation of the assumption of independent play leads to the (often strictly greater) sets of \emph{correlated} and \emph{coarse correlated} equilibria. For finite games of complete information, there is consensus in the literature on their definition, following the work of \citet{Aumann74}:
\begin{definition}\label{def:CE}
    A \textbf{correlated equilibrium} of a game $\Gamma$ of complete information is a distribution $\sigma \in \Delta(A)$ such that for any player $i$, and any pair of actions $a_i, a'_i$ of player $i$, 
    \begin{equation}\label{def:CE-CI}
        \mathbb{E}_{a_{-i} \sim \sigma_{-i}|_{a_i}} \left[ u_i(a) \right] \geq \mathbb{E}_{a_{-i} \sim \sigma_{-i}|_{a_i}} \left[ u_i(a_i', a_{-i}) \right].
    \end{equation}
    Here, $\sigma_{-i}|_{a_i}$ is the joint distribution of actions of players other than $i$, drawn from $\sigma$ conditional on player $i$'s action drawn to be $a_i$. In turn, a \textbf{coarse correlated equilibrium} is a distribution $\sigma \in \Delta(A)$ such that for any player $i$ and any action $a_i$ of player $i$, 
    \begin{equation}\label{def:CCE-CI}
        \mathbb{E}_{a \sim \sigma} \left[ u_i(a) \right] \geq \mathbb{E}_{a \sim \sigma} \left[ u_i(a_i', a_{-i}) \right].
    \end{equation}
    Here, a player can instead unilaterally commit to a fixed action $a_i'$, but may not exchange only the occurances of some action $a_i$ with $a'_i$ as the deviations considered in the case of correlated equilibria. For a game $\Gamma$, we denote by $\CE(\Gamma)$ $[\CCE(\Gamma)]$ the set its [C]CE. 
\end{definition}
As the inequalities (\ref{def:CCE-CI}) are a conical combination of the inequalities (\ref{def:CE-CI}), we have $\CCE(\Gamma) \supseteq \CE(\Gamma)$ for any game $\Gamma$. The inclusion $\CE(\Gamma) \supseteq \BNE(\Gamma)$ also holds, if we interpret $\BNE(\Gamma)$ as a subset of $\Delta(A)$ via the injection mapping a mixed strategy profile $s$ to the corresponding distribution induced by $\otimes_{i \in [N]} s_i$ over $A$; i.e. if $s \in \BNE(\Gamma)$ and  for any outcome $a \in A$ we let $\sigma(a) = \prod_{i \in N} s_i(a_i)$, then $\sigma \in \CE(\Gamma)$.

\subsection{Learning Model \& Bayesian (Coarse) Correlated Equilibria}\label{sec:learning}

One application of correlated equilibria is in modelling settings where agents are recommended actions via a mediator; in this case, at equilibrium they do not have any incentive to not follow their recommended actions. While for finite normal form games of complete information, both the definitions of CE and CCE are agreed upon, what their definition \emph{should be} in normal-form games of incomplete information has been the topic of extensive discussion, with different approaches discussed in \citep{aumann1987correlated,myerson1986multistage,forges1993five,CP14,HST15,bergemann2016bce,forges2023correlated}.

The notion of Bayesian (C)CE we will use in this paper is motivated by \emph{no-regret}\footnote{Also referred to as, perhaps more accurately, \emph{``vanishing regret''}.} learning algorithms~\citep{cesa-bianchiPredictionLearningGames2006} in the Bayesian setting. Specifically, we will consider distributions of play which arise when players receive feedback on their ex-interim payoffs, and choose their strategies independently of each other at each time period. The latter assumption will imply that any correlation between \emph{type realisations} and players' strategies would arise not due to the presence of a mediator, but due to players learning to do so organically through the \emph{mechanics} of the learning process, e.g. due to cyclic learning behaviour. However, our assumptions on the feedback rule out the possibility of such type-strategy correlations, implying convergence to a (C)CE for the \emph{agent-normal form} of the game, previously considered in \citep{forges1993five,HST15}.

\begin{definition}\label{def:learning-discrete}
  A learning process is simply given by its sequences in time $t \in \mathbb{N}$ for its \textbf{history of play}, i.e. strategies $s_i^t : \Theta_i \rightarrow \Delta(A_i)$ for each player $i$, and the \textbf{ex-interim utility feedback} to each player $i$, $u_i^t : A_i \times \Theta_i \rightarrow \mathbb{R}$. Such a learning process is said to satisfy \textbf{no internal} (or \textbf{swap}) \textbf{regret} if for any player $i$, and any type $\theta_i$ and any pair of actions $a_i, a'_i$ of player $i$,
  $$ \lim_{t \rightarrow \infty} \frac{1}{t} \sum_{\tau=1}^t s_i^t(\theta_i,a_i) \cdot (u_i^t(a_i|\theta_i) - u_i^t(a'_i|\theta_i)) \geq 0.$$
  In turn, a learning process is said to satisfy \textbf{no external regret} if for any player $i$, and any type $\theta_i$ and any action $a'_i$ of player $i$,
  $$ \lim_{t \rightarrow \infty} \frac{1}{t} \sum_{\tau=1}^t \sum_{a_i \in A_i} s_i^t(\theta_i,a_i) \cdot (u_i^t(a_i|\theta_i) - u_i^t(a'_i|\theta_i)) \geq 0.$$
  We then refer to $\sigma^T = \frac{1}{T} \sum_{\tau = 1}^{T} \otimes_{i \in [N]} s_i^t$, as the \textbf{aggregate history of play} at time $T$.
\end{definition}

\begin{example}\label{ex:full-feedback}
  In the \emph{full-feedback setting}, players observe their exact ex-interim utility for each action, 
  $u_i^t(a_i|\theta_i) = \sum_{\theta_{-i} \in \Theta_{-i},a_{-i} \in A_{-i}} F_{-i}(\theta_{-i}) \cdot \left(\prod_{j \in [N] \setminus \{i\}} s_j(\theta_j,a_j) \right) \cdot u_i(a|\theta_i)$. One example is the dual averaging based algorithms considered in \citep{soda2023} for learning bidding strategies in auctions, which satisfies no external regret \cite{Xiao10}.
\end{example}

\begin{example}\label{ex:self-play}
  With \emph{deterministic self-play}, we consider the problem of a single player learning to play in an $N$-player symmetric game. The player chooses a pure strategy at each time period, and receives feedback based on their ex-interim utility assuming the $N-1$ other players play the same strategies against them. The determinism here refers to the strategies chosen by the player being pure, and not to how they are chosen. Namely, we draw $s_i^t: \Theta_i \times A_i \rightarrow \{0,1\}$ for some fixed player $i$ in an appropriate manner that ensures the no-regret property, and further assume that $s^t_j = s_i^t$ for every player $j$. The utility feedback is then set as in the full-feedback setting.
\end{example}

\begin{example}\label{ex:population}
  In the \emph{population interpretation} of the Bayesian game, \citet{HST15} consider deterministic strategies $s_i^t: \Theta_i \times A_i \rightarrow \{0,1\}$, and types $\theta^t$ at each time $t$ drawn independently from the distribution $F^N$. The utility feedback to players is then fixed, $u^t_i(a_i|\theta_i) = \mathbb{I}[\theta_i = \theta_i^t] \cdot u_i(s(\theta)|\theta_i)$.
\end{example}

For a game of complete information, a folklore result is that for no internal (external) regret algorithms, any convergent subsequence of the aggregate history of play does so to a (C)CE of the game. The key observation then is that, for the learning processes considered in Examples \ref{ex:full-feedback}, \ref{ex:self-play}, and \ref{ex:population}, the aggregate history of play converges to that of a (C)CE of the agent-normal form of the game; for the former two, the result is immediate from evaluating the limit, while the latter was proven by \citet{HST15} (Lemma 10). 

\begin{definition}
  For a game $\Gamma = \left(N,\Theta,F,A,u\right)$ of incomplete information, the \textbf{agent-normal form} of $\Gamma$ is the game $\Gamma' = \left(J,A',u'\right)$ of complete information, with a set of players $J = \sqcup_{i \in [N]} \Theta_i$, the \emph{disjoint union} of agents' type sets. Each player $(i,\theta_i)$ has an action set $A_i$, and the set of vectors of actions of this game are given $A' = \times_{(i,\theta_i) \in J} A_i$. An outcome of this game is then an element $a'$ of $A'$; by the isomorphism $\times_{(i,\theta_i) \in J} A_i \cong \times_{i \in [N]} A_i^{\Theta_i}$ an outcome of $\Gamma'$ is a full specification of players' pure strategies in the incomplete information game $\Gamma$. Finally, players utilities $u_{(i,\theta_i)} : A' \rightarrow \mathbb{R}$ are defined via considering their expected utilities in $\Gamma$, $ u_{(i,\theta_i)}(a') = \mathbb{E}_{\theta_{-i} \sim F_{-i|\theta_i}}[ u_i(a(\theta)|\theta)]$.
\end{definition} 

Intuitively, in the agent-normal form of the game, we encode as a distinct player each type realisation of each player in the original game. Each player/type pair has, as their set of actions, the action set of the original player. An outcome of the agent-normal form game is thus an action chosen for each player of each type, which is equivalent to choosing a strategy for each player in the original game. Utilities are computed by expectation, where given a player/type pair, other players' types are drawn from the (in our setting, common and independent) prior distribution. 

\begin{definition}\label{def:BCE}
  A \textbf{Bayesian correlated equilibrium} of a game $\Gamma$ of incomplete information is a distribution $\sigma \in \Delta(\times_{i \in [N]} A_i^{\Theta_i})$ over the set of \emph{strategy profiles}, such that for any player $i$, any type $\theta_i$ of player $i$, and any pair of actions $a_i, a'_i$ of player $i$,
  \begin{equation}\label{def:CE-II}
      \mathbb{E}_{s \sim \sigma|_{s_i(\theta_i) = a_i}, \theta_{-i} \sim F_{-i}}[u_i(a_i, s_{-i}(\theta_{-i})|\theta_i)] \geq \mathbb{E}_{s \sim \sigma|_{s_i(\theta_i) = a_i}, \theta_{-i} \sim F_{-i}}[u_i(a'_i, s_{-i}(\theta_{-i})|\theta_i)].
  \end{equation}
  Here, $\sigma|_{s_i(\theta_i) = a_i}$ is the joint distribution of strategies of players other than $i$, conditional on player $i$ of type $\theta_i$ has their action drawn to be $a_i$. In turn, a \textbf{Bayesian coarse correlated equilibrium} is a distribution $\sigma \in \Delta(\times_{i \in [N]} A_i^{\Theta_i})$ such that for any player $i$, any type $\theta_i$ of player $i$, and any action $a'_i$ of player $i$,
  \begin{equation}\label{def:CCE-II}
      \mathbb{E}_{s \sim \sigma, \theta_{-i} \sim F_{-i}}[u_i(s(\theta)|\theta_i)] \geq \mathbb{E}_{s \sim \sigma, \theta_{-i} \sim F_{-i}}[u_i(a'_i, s_{-i}(\theta_{-i})|\theta_i)].
  \end{equation}
  Here, a player can unilaterally commit to a fixed action $a'_i$ \emph{after learning their type}, but as in the CCE of games of complete information, may not exchange only the occurrences of some $(\theta_i,a_i)$ with $(\theta_i,a'_i)$. The set of B[C]CE of $\Gamma$ is then again denoted $\CE(\Gamma)$ $[\CCE(\Gamma)]$.
\end{definition}

The key difference between the notion of Bayesian (C)CE we consider and the notions of communication equilibria or Bayes correlated equilibria discussed in \citep{forges1993five,bergemann2016bce,forges2023correlated} is the assumption that $\sigma \in \Delta(\times_{i\in N} A_i^{\Theta_i})$. Indeed, these alternative notions of Bayesian CE only assume that $\sigma \in \Delta(A)^\Theta$, which in general is a strict superset of $\Delta(\times_{i\in N} A_i^{\Theta_i})$. This is referred to as \emph{strategy representability} in \citep{Fujii2023}, which we refer the interested reader to for an extended discussion. 

As our eventual primal-dual formalism will result in guarantees for no-regret algorithms coupled with deterministic self-play, we find it illustrative to conclude this section with an explicit algorithm which uses such feedback paired with the Hedge algorithm \cite{arora2012multiplicative}. Denoting $\bar{U} = \max_{\theta_1, a} | u_1(a|\theta_1) |$, the algorithm is given:

\begin{algorithm}[H]
\caption{Deterministic self-play with Hedge in a symmetric game}\label{alg:self-play}
\SetKwComment{Comment}{/* }{ */}
\SetKwInput{Input}{Input}
\SetKwInput{Output}{Output}
\Input{Symmetric game $(N, \Theta, F, A, u)$, time length $T = 2^\tau$ for integer $\tau$}
\Output{$\sigma_T$, an $O(\sqrt{1/2^\tau})$ approximate BCCE}
$t \gets 1$\;
$w^1(\theta_1,a_1) \gets 0 \ \forall \ \theta_1 \in \Theta_1, a_1 \in A_1$\;
\While{$t \leq T$}{
    Draw strategies $s_1^t : \Theta_1 \times A_1 \rightarrow \{ 0,1 \}$, $s_1^t(\theta_1) = a_1$ w.p. $\sim e^{w^t(\theta_1,a_1)}$\;
    Fix utility feedback, $u_1^t(a_1|\theta_1) = \mathbb{E}_{\theta_{-1} \sim F_{-1}}[u_1(a_1,s_{-i}^t(\theta_{-1})|\theta_1)]$.\;
    Update weights, $w^{t+1}(\theta_1,a_1) = w^t(\theta_1,a_1) + u_1^t(a_1|\theta_1) \cdot \sqrt{\frac{ \ln{|A_i|}}{T \bar{U}^2}}$\;
    $t \gets t+1$ \;
} 
$\sigma^T(\theta,a) \gets \frac{1}{T} \sum_{1 \leq t \leq T} \prod_{j \in [N]} s_1^T(\theta_j,a_j)$. 
\end{algorithm}

Specifically, a $T = 2^\tau$ round implementation of the algorithm then ensures that the algorithm has regret at most $2\bar{U} \sqrt{ \ln |A_1| / T}$. The standard trick to ensure vanishing regret is to then bootstrap Algorithm \ref{alg:self-play}; we run Algorithm \ref{alg:self-play} for $T = 2^\tau$ for every $0 \leq \tau < \tau'$ and consider the overall history of play. Whereas the regret guarantee of Hedge is in expectation in general \cite{arora2012multiplicative}, in the setting of a normal form game the regret of all players and types vanishes almost surely as $T \rightarrow \infty$ for the resulting empirical distribution of play \cite{greenwald2003general}. As a consequence, any convergent subsequence of the history of play converges to a BCCE, supported only in symmetric pure strategies.

\subsection{Continuous Single-Item Auctions and their Discretisations}\label{sec:auction-models}

In this section, we introduce the continuous and discrete single-item auction models we use in this paper, and discuss known results on the (C)CE of complete and incomplete auction games. We focus on the standard symmetric setting where buyers are risk-neutral and have independent private valuations drawn i.i.d. from some distribution $F$ and quasi-linear utilities.

\newcommand{\valset}{\mathcal{V}}
\newcommand{\bidset}{\mathcal{B}}
\newcommand{\profset}{\Lambda}

\begin{definition}
    The data of a \textbf{single-item auction} is specified by a tuple $A = \left( N, \valset, F^N, \bidset, u \right)$, where $N$ is the number of buyers. The set of vectors of valuations buyers may have for the item for sale form the set of vectors of types, and equals $\valset = \times_{i \in [N]} V$ for some $V \subseteq \mathbb{R}$. For some distribution $F \in \Delta(V)$, the prior distribution over buyers' valuations is the product distribution $F^N \in \Delta(\valset)$. The set of allowed bids is $B \subseteq \mathbb{R}$, and the set of vectors of actions of players are vectors of their bids $\bidset$. Moreover, we denote the set of pure bidding profiles, $\profset \equiv \times_{i \in [N]} B^V$. The allocation rule $x : \bidset \rightarrow \Delta([N])$ and the payment rule $p : \bidset \rightarrow \mathbb{R}^N$ specify respectively the probability a buyer wins an item and their expected payment as a function of buyers' bids. Then, given valuations $v \in \valset$ and bids $b \in \bidset$, the utility of buyer $i$ is given
    $ u_i(b|v_i) = x_i(b) \cdot v_i - p_i(b).$ The auction is said to be \textbf{continuous} if $V \subseteq B$ are closed intervals in $\mathbb{R}_+$ containing $0$, and $F$ has a probability distribution function on $V$ with full support. If instead $|V|, |B| < \infty$, the auction is called \textbf{discrete}.
\end{definition}

Thus all we need to determine for our auctions are the allocation and payment rules. In this paper, we focus our attention on single-item first-price and all-pay auctions. The continuous versions of the auctions we consider are extensively studied in the literature (cf. for example, \cite{Krishna2010}), so our exposition of their allocation and payment rules along with their canonical equilibria will be brief.

These auctions award the item to the highest bidder; moreover, with continuous valuation and bid spaces, the tie-breaking rule does not matter \emph{for the canonical symmetric equilibrium}, as they occur with probability zero. However, this ceases to be the case when anti-symmetric or mixed-equilibria need to be ruled out, and we remark that both \citet{CH18} and \citet{RGC20} point to the importance of the choice of tie-breaking rule. In the continuous case, \citet{CH18} show that allocation probability strictly decreasing upon a tie (\emph{``win-vs-tie-strict''}) is a sufficient condition to rule out asymmetric mixed Bayes-Nash equilibria in the continuous auctions, when types are drawn from a distribution with continuous, bounded support. In turn, \citet{RGC20} illustrate that in a discretised auction setting, the choice of a tie-breaking rule has an effect on the number of equilibria and their functional form. 

In our setting, we shall consider a uniform tie-breaking rule; the allocation rule will be defined 
$$x_i(b) = \begin{cases}
  \frac{1}{| \arg\max_{j \in [N]} b_j |} & i \in \arg\max_{[N] \in I} b_j, \\
  0 & \textnormal{else.} 
\end{cases}$$
We remark that we have a simpler reason to restrict our attention so. For the first-price auction, awarding an item to no player upon ties forms a symmetric equilibrium where every buyer bids $\max_{b \in B} b$, while awarding an item to all players upon a tie forms a Bayes-Nash equilibrium (and hence, a CCE) where every buyer bids $0$. Our linear programming based approach will then be unable to rule out such equilibria without imposing additional linear constraints. We find such an approach unnatural.

The auctions we consider then differ in the payment rule. In the first-price auction, the winning buyer makes a payment equal to their own bid. Thus the expected payment of a buyer, given bids $b$, is $p_i(b) = x_i(b) \cdot b_i$. The canonical symmetric equilibrium of the first-price auction is for a buyer with valuation $v_i$ to bid the expectation of the highest valuation $v'$ amongst other buyers, conditional on $v' \leq v_i$,
$$ \beta^{FP}(v_i) = \mathbb{E}_{v_{-i} \sim F^{(N-1)}}[\max_{j \neq i} v_j | \ \forall \ j \neq i, v_j \leq v_i].$$

In turn, in an all-pay auction, every bidder makes a payment equal to their own bid, no matter whether they are allocated the item or not. Thus in this case, we have $p_i(b) = b_i$. The canonical equilibrium of the all-pay auction scales the first-price auction bids with the cumulative distribution function of the first-order statistic of other buyers' valuations, or 
$$ \beta^{AP}(v_i) = F(v_i)^{n-1} \cdot \mathbb{E}_{v_{-i} \sim F^{(N-1)}}[\max_{j \neq i} v_j | \ \forall \ j \neq i, v_j \leq v_i].$$

One of our goals will be to show that the B(C)CE of discretised first-price and all-pay auctions are \emph{``close''} to the canonical equilibria of their continuous counterparts. This will necessitate a formal notion of parametrised discretisations of a continuous auction model.

\begin{definition}\label{def:discretisations}
  For a continuous single-item auction $A$ and a decreasing function $\delta : \mathbb{N} \rightarrow \mathbb{R}_+$ such that $\lim_{n\rightarrow \infty} \delta(n) = 0$, a $\delta$-\textbf{fine family of discretisations parametrised by} $n \in \mathbb{N}$ is a sequence $(V_n,B_n,F_n)_{n \in \mathbb{N}}$ of valuation spaces $V_n \subseteq V$, bidding spaces $B_n \subseteq B$ and prior distributions $F_n$ over $V_n$ which determine a sequence of auction games $A_n$ such that $F_n \rightarrow F$ in probability, and 
$$
    \sup_{v \in V} \inf_{v' \in V_n} |v-v'|, \sup_{b \in B} \inf_{b' \in B_n} |b-b'| \leq \delta(n).$$
\end{definition}

To conclude this section, we remark that the set of CCE of single-item auctions need not equal the set of its Nash equilibria, even in the setting with complete information. For instance, \citet{FLN16} show that the complete information first-price auction with two buyers, when each buyer has value $1$, there is a CCE\footnote{We remark that this is a CCE for an auction of complete information, with an infinite set of feasible bids. With an infinite action space, existence theorems for Nash or (coarse) correlated equilibria break down; but in their setting, existence is assured \emph{by construction}. \citet{FLN16} also display CCE with discretised bidding spaces which have ``similar form'' to the continuous CCE they construct, which already hints at connections between BCCE of discretised auctions and the notion of continuous BCCE we consider in Section \ref{sec:theory-cont}.} where each buyer submits identical bids, whose correlated distribution $S$ on $[0,1-1/e]$ is given by the c.d.f. $S(b) = e^{-1}/(1-b)$. On the other hand, the unique Nash equilibrium of the auction has both buyers bidding $1$ with a probability $1$. We will see, however, that in the case of discretised auctions with a suitably concave prior $F$, the BCCE tends to be close to the unique Bayes-Nash equilibrium of the auction. On the other hand, the negative result of \citet{FLN16} will provide a counterexample that shows that some regularity assumption on the priors is necessary.

\section{LP-Based Analysis of Bayesian CCE of Discretised Auctions}\label{sec:computation}

In this section, we introduce an extended form LP based formalism to bound the distance of BCCE from a pure strategy, such as the BNE of an auction. Asking for the notion of distance to be robust to small changes in bids leads us to consider the Wasserstein-$2$ distance as an objective. Numerically evaluating the extended form LP turns out to be a difficult problem due to the exponential growth of its size in the size of the set of valuations and bids. As a result, we derive an LP relaxation that is tractable for numerical analysis and reasonable levels of discretisation. 

Our numerical results show that the Wasserstein distance is indeed an appropriate way to model the problem, and it allows for an interesting observation: in first-price and all-pay auctions, the worst-case distance of the BCCEs to the BNE shrinks with increasing levels of discretisation. For the first-price auction, we identify a dependence on the concavity of the prior distribution, which is not required for all-pay auctions.

That even relaxed LP formulations can in principle certify convergence bounds motivates us to inspect the form of dual solutions when the only BCCE of the auction is its unique pure strategy Nash equilibrium. We thus conclude the section with an analysis of dual solutions which certify properties of equilibrium, such as uniqueness or monotonicity. This final analysis enables our analysis of the \emph{continuous} BCCE we consider in Section \ref{sec:theory-cont}.

\subsection{Distance Bounding Linear Programs}\label{sec:distance-bounding-lps}

We first discuss what it means for the set of Bayesian CCE to be \emph{``close''} to a(n implicitly) given reference bidding profile (for our purposes, the canonical equilibrium), and how such ``closeness'' can be certified for fixed discretisations. To accomplish this, we first formulate an LP which measures how different buyers' bidding strategies are compared to the pure strategy Nash equilibrium of the continuous auction. Considerations of well-posedness and robustness against small changes in the magnitude of bids then lead us to pinpoint the expected Wasserstein-$2$ distance between the BCCE and the equilibrium bidding strategies of the continuous auction as the appropriate objective. 

To wit, consider a continuous auction $A$ of format $\mathcal{A} \in \{FP,AP\}$\footnote{Corresponding respectively to first-price or all-pay payment rules.}, its unique symmetric pure strategy equilibrium bidding strategy $\beta^\mathcal{A}$, and a family of its discretisations $A_n$. We are interested in proving that \emph{all} BCCE of the auction are close to $\beta^\mathcal{A}$. To do so, we consider formulating an LP whose value is greater whenever buyers place bids bounded away from the equilibrium bidding strategies, and maximise its value to obtain worst-case bounds for the BCCE of the auction. Therefore, the LP we consider should be of the form 
\begin{align}
    \max_{\sigma \in \mathbb{R}^{\profset_n}_+} \sum_{\lambda \in \profset_n} c_{n\lambda} \cdot \sigma(\lambda) \textnormal{ s.t.} & \label{opt:generic-primal} \\
    \sum_{\lambda \in \profset_n, v_{-i} \in V_n^{N-1}} \sigma(\lambda) \cdot F_{n(-i)}(v_{-i}) \cdot  ( u_i(b'_i,\lambda_{-i}(v_{-i})|v_i) - u_i(\lambda(v)|v_i) ) & \leq 0 \ \forall \ i \in [N], v_i \in V_n, b'_i \in B_n \tag{$\epsilon_i(v_i,b'_i)$} \\
    \sum_{\lambda \in \profset_n} \sigma(\lambda) & = 1, \tag{$\gamma$}
\end{align}
where $c_{n\lambda} \in \mathbb{R}^{\profset_n}$, and $F_{n(-i)}$ is the discrete joint prior distribution for buyers other than $i$. 

Our theoretical analysis will require iteratively modifying the extended form primal LP (\ref{opt:generic-primal}) and its dual, as well as their \emph{continuous, infinite} variants. Thus we mark each constraint with its corresponding dual variable, and for each \emph{class} of constraint we choose to maintain the same notation. Thus here and in extended form LPs studied throughout, constraints $(\epsilon_i)$ correspond to the (Bayesian) CCE constraints (\ref{def:CCE-II}), while the constraint $(\gamma)$ imposes that $\sigma$ is a probability distribution over pure strategy bidding profiles $\lambda : \valset_n \rightarrow \bidset_n$. We also write $\lambda_{-i} \equiv (\lambda_j)_{j \in [N] \setminus \{i\}}$ for the pure strategy bidding profile of buyers other than $i$. In particular, our distance-to-equilibrium measure is defined by $c_{n\lambda}$. Three simple choices for $c_{n\lambda}$ are:
\begin{enumerate}
    \item We let $c_{n\lambda} = 0$ whenever $\lambda_i = \beta^\mathcal{A}|_{V_n}$ for every buyer $i$, and $c_{n\lambda} = 1$ otherwise. Here, $\beta^\mathcal{A}|_{V_n}$ is the restriction of the equilibrium bidding strategy of the continuous auction to the set of discretised valuations. Therefore,  the value of (\ref{opt:generic-primal}) is the probability some buyer plays off-equilibrium strategies with respect to the continuous auction.
    \item For $\epsilon_n > 0$, we let $c_{n\lambda} = 0$ if for every buyer $i$ and every valuation $v_i$, $\lambda_i(v_i) \in [\beta^\mathcal{A}(v_i)-\epsilon_n,\beta^\mathcal{A}(v_i)+\epsilon_n]$ and $c_{n\lambda} = 1$ otherwise. As a consequence, the value of (\ref{opt:generic-primal}) is the maximum probability some buyer $i$ of valuation $v_i$ places a bid at least $\epsilon_n$ away from their equilibrium bid in the continuous auction.
    \item For a set of positive weights $w : [N] \times V_n$ and a differentiable, strictly increasing function $\phi : \mathbb{R}_+ \rightarrow \mathbb{R}_+$ such that $\phi(0) = 0$, we let 
    $$ c_{n\lambda} = \sum_{i \in [N], v_i \in V_n} w(i,v_i) \cdot \phi(|\lambda_i(v_i) - \beta^\mathcal{A}(v_i)|).$$
    Then the value of (\ref{opt:generic-primal}) measures a weighted average distance of buyers' bids from their equilibrium bids in the continuous auction.
\end{enumerate}

The first option turns out to be useful for studying the form of dual solutions in auctions which have a unique BCCE which is also a pure strategy NE, which enables our analysis in Section \ref{sec:structure} and the following study of the BCCE of continuous auctions. However, for general discretised auctions $A_n$, it is not necessarily the case that $\beta^\mathcal{A}(V_n) \subseteq B_n$; so the objective cannot be defined for every discretised auction. Moreover, even in cases when the inclusion holds, the value of (\ref{opt:generic-primal}) may fail to decrease. This is because the set of bids $B_n \subseteq \mathbb{R}_+$ also inherits a notion of distance from the reals; given $b, b' \in B_n$, the difference $|b-b'|$ can itself be small, and so too the change in players' utilities. This raises the possibility that, as we increase the discretisation size, for a worst-case distance BCCE the probability on the equilibrium bidding strategies decreases because we get to place more probability on bids between $\beta^\mathcal{A} \pm \epsilon_n$ for some strictly positive and decreasing sequence $\epsilon_n > 0$. Our numerical results suggest that this is indeed the case, and our choice of objective thus must be able to account for convergence when buyers' bids concentrate \emph{about} the canonical equilibrium bidding strategies, even though they might not concentrate \emph{on} it. Intuitively, we ask our linear objective to satisfy a robustness property, wherein our distance measure changes in a small amount in response to a small change in buyers' bids.

The second option is the form of concentration bounds considered by \citet{FGLMS21} in their work on convergence of mean-based learning algorithms to equilibria in uniform prior first-price auctions. However, our numerical results in Section \ref{sec:numerical-results} suggest that for BCCE, when we fix $\epsilon_n$ as in \cite{FGLMS21}, the worst-case concentration in this sense can be \emph{degrading} as the discretisation becomes finer. We thus consider the third option. Our solution is to define our objective function using the \emph{sum of the expectation} over buyers' valuations of the Wasserstein-$2$ distances between bidding strategies $\lambda$ they utilise in a BCCE and the equilibrium bidding strategy $\beta^\mathcal{A}$. In other words, we consider setting
\begin{equation}\label{def:Wasserstein-objective-general}
    c_{n\lambda} = \frac{1}{N} \sum_{i \in [N]} W^2_2(\lambda,\beta^\mathcal{A}) \equiv \frac{1}{N} \sum_{i \in [N]} \sum_{v_i \in V_n} F_n(v_i) \cdot (\lambda_i(v_i) - \beta^\mathcal{A}(v_i))^2
\end{equation}
for a family of discretisations parametrised by $n$, and check whether the value of (\ref{opt:generic-primal}) goes to $0$ as $n \rightarrow \infty$. We remark that if this holds, then the set of BCCE of discretised auctions converges to the canonical equilibrium of the continuous auction; in the sense that for any $\epsilon > 0$, the probability placed by the BCCE of $A_n$ on bidding profiles $\lambda \in \Lambda_n$ such that $W^2_2(\lambda,\beta^\mathcal{A}|_{V_n}) > \epsilon$ converges to $0$.

\subsection{Tractable Relaxations for Numerical Analysis}\label{sec:relaxations}

We are interested in, amongst other things, numerically evaluating bounds on the value of (\ref{opt:generic-primal}). However, the given extended LP formulation of the BCCE problem is intractable for numerical evaluation given any discretisation of meaningful size. For each $A_n$, we have a finite game in agent-normal form, and the set of CCE of a finite complete-information game may be encoded via a number of constraints polynomial in the size of the normal-form game. However, for an auction game $A_n$ with two buyers whose type space has size $|V_n|$ and bidding space has size $|B_n|$, the agent-normal form of the auction game has $2 \cdot |V_n|$ agents and thus a set of outcomes of size $(|B_n|)^{2|V_n|}$; resulting in an exponential blowup in the number of variables of the primal LP. Of course, (\ref{opt:generic-primal}) has only $1 + N \cdot |V_n| \cdot |B_n|$ constraints, which raises the possibility that its dual LP could be solved given access to a polytime separation algorithm. We do not know whether one exists for the discretised auctions we consider, however, so we shall instead aim to reduce the number of primal variables. 

To wit, note that any distribution $\sigma \in \Delta(\Lambda_n)$ induces a distribution $\sigma^{[N]} \in \Delta(\bidset_n)^{\valset_n}$ via the mapping $\sigma \mapsto \sum_{\lambda \in \Lambda_n} \sigma(\lambda) \cdot \otimes_{i \in N} \lambda_i$, i.e. for each bidding function $\lambda_i : V_n \rightarrow B_n$, we write $\lambda_i(v_i,b_i) \equiv \mathbb{I}[\lambda_i(v_i) = b_i]$, and let
$$ \sigma^{[N]}(v,b) = \sum_{\lambda \in \Lambda_n} \sigma(\lambda) \cdot \prod_{i \in N} \lambda_i(v_i,b_i).$$
In particular, for a discretised auction game, such a probability tensor $\sigma^{[N]}$ is a convex combination of the outer product of buyers' bidding strategies, represented as $|V_n|\times|B_n|$ matrices with entries in $\{0,1\}$. We shall abuse notation and denote this convex set also as $\Delta(\times_{i \in [N]} B_{n}^{V_n})$. Formulating the problem over such $\sigma^{[N]}$, we then \emph{``only''} have $(|V_n||B_n|)^N$ variables instead; as a result, for a fixed number of buyers, the number of variables in the LP is polynomial in the size of the discretisation $|V_n|, |B_n|$. 

However, we remark that we have simply traded the complexity arising from the number of variables with the complexity of separating over the set of such probability tensors; which is, in general, NP-hard even when the number of buyers and actions both equal $2$. This is due to the following reduction to bilinear programming over the hypercube, which is known to be NP-hard \cite{alon2004approximating}: given any $D \in \mathbb{R}^{d \times d}$, the values of the problems
    $$
        \max_{x,y \in \{-1,1\}^d} x^T D y \textnormal{ and } \max_{\sigma \in \Delta(\times_{i \in [2]} [d]^2)} \sum_{v_1, v_2 \in [d], b_1, b_2 \in [2]} \sigma(v,b) D_{v_1 v_2} (-1)^{b_1 + b_2}
    $$
are equal. Moreover, solutions of the first problem are in bijection with the extremal solutions of the second problem. Indeed, such an extremal solution $\sigma$ is of the form $\lambda_1 \otimes \lambda_2$; we then let $\lambda_1(v_1, 1) = 1$ if and only if $x_{v_1} = -1$ and likewise for $\lambda_2(v_2,1)$ and $y_{v_2}$.

Consequently, as we do not expect a polynomial size LP for an NP-hard problem, we do not hope to provide a succint set of linear inequalities which describe $\Delta(\times_{i \in [N]} B_{n}^{V_{n}})$ for any discretisation $A_n$ of reasonable size. Instead, we shall enforce the \emph{``obvious''} valid inequalities, obtaining a polynomial-size relaxation. For any $\sigma^{[N]} \in \Delta(\times_{i \in [N]} B_n^{V_n})$, the entries of $\sigma^{[N]}$ are non-negative; moreover $\sigma^{[N]}(v,\cdot)$ is a probability distribution over $B_n^N$ for any vector of buyers' valuations $v \in \valset_n$. 

We remark also that the assumption of a symmetric prior distribution and a uniform tie-breaking rule together imply that, whenever the objective is chosen to be a symmetric function over the buyers' indices, then there always exists an optimal solution $\sigma^{[N]}$ of the LP also symmetric over the buyers' indices, obtained by averaging an optimal solution over all permutations of buyers\footnote{In particular, this notion of symmetry of outcomes is independent from the \emph{symmetric support assumption} considered in our theoretical analysis in subsequent Section \ref{sec:theory-cont} and onwards.}. Thus, despite the (constant factor) increase in the number of variables, for convenience in writing the constraints we will also have variables $\sigma^{[\ell]} : V_n^\ell \times B_n^\ell \rightarrow [0,1]$ for each $1 \leq \ell \leq N$ for the joint distribution of bids of $\ell$ buyers. We then enforce supersymmetry constraints (i.e., symmetry over all indices of a tensor) over the buyers' indices,
\begin{equation}\label{cons:sym}
    \forall \textnormal{ bijections } p : [N] \rightarrow [N], v \in \valset_n, b \in \bidset_n, \sigma^{[N]}(v,b) = \sigma^{[N]}(v_{p(N)},b_{p(N)}),
\end{equation}
where for any $i \in [N]$, $(v_{p(N)})_i = v_{p^{-1}(i)}$ and likewise for $b_{p(N)}$. Such supersymmetry constraints then allow us to not have to consider $\sigma^J$ for any $\emptyset \subsetneq J \subseteq [N]$ instead. In this case, we can write the constraint that $\sigma^{[N]}$ is a probability distribution as 
\begin{equation}
    \sum_{b_1 \in B_n} \sigma^{[1]}(v_1,b_1) = 1 \ \forall \ v_1 \in V_n, \label{cons:prob1}
\end{equation}
provided that $\sigma^{[N]}$ contracts to $\sigma^{[1]}$ in the appropriate manner. This is assured by enforcing the constraint that, for any $\ell > 1$, any valuation vector $v_{[\ell]} \in V_n^\ell$ and any bid vector $b_{[\ell-1]}$ of the first $\ell-1$ buyers,
\begin{equation}\label{cons:restrict1}
    \sum_{b_\ell \in B_n} \sigma^{[\ell]}\left(v_{[\ell]},b_{[\ell]}\right) - \sigma^{[\ell-1]}_n\left( v_{[\ell-1]},b_{[\ell-1]}\right) = 0. 
\end{equation}
This is because, for buyers' strategies $(\lambda_i : V_n \rightarrow B_n)_{i \in [N]}$,
\begin{equation}\label{eqn:derive-valid}
    \sum_{b_\ell \in B_n} \prod_{i \in [\ell]} \lambda_i(v_i,b_i) = \left( \prod_{i \in [\ell-1]} \lambda_i(v_i, b_i)\right) \cdot \left( \sum_{b_\ell \in B_n} \lambda_\ell(v_\ell,b_\ell) \right) = \prod_{i \in [\ell-1]} \lambda_i(v_i, b_i),
\end{equation}
independent of the valuation of buyer $\ell$. Intuitively, each $\sigma^{[\ell]}(v^{\ell},\cdot)$ corresponds to the joint probability distribution of the bids placed by the buyers in $[\ell]$, given their valuations are $v^{[\ell]}$. Meanwhile, a BCCE is obtained when, in the learning process, buyers implement their bidding strategies independently in each round. This implies that the distribution of bids for the buyers in $[\ell-1]$ should not depend on the valuation of buyer $\ell$. As a consequence, we may obtain the distribution of bids for the buyers in $[\ell-1]$ by simply taking  $\sigma^{[\ell]}(v^{\ell},\cdot)$ for any valuation $v_\ell$ of buyer $\ell$, and then marginalising the distribution by summing over the probabilities of buyer $\ell$’s bids. 

Finally, the BCCE constraints (\ref{def:CCE-II}) are then given, for each buyer $i$, valuation $v_i$, and bid $b'_i$,
\begin{equation}
    \sum_{v_{-i} \in V^{N-1}_n, b \in \bidset_n} \sigma^{[N]}(v,b) \cdot F_{n(-i)}(v_{-i}) \cdot \left( u_i\left(b'_i, b_{-i}|v_i \right) - u_i(b|v_i) \right) \leq 0. \label{cons:BCCE}
\end{equation}

Combining our insights on the preceding discussions on variables, constraints and the objective, the class of LPs we solve have the form
$$ \max_{\sigma \geq 0} \sum_{0 < \ell \leq N} \sum_{v^{[\ell]} \in V_n^{[\ell]}, b^{[\ell]} \in B_n^{[\ell]}} \sigma^{[\ell]}(v_{[\ell]},b_{[\ell]}) \cdot c^{[\ell]}(v_{[\ell]},b_{[\ell]}) \textnormal{ s.t. } (\ref{cons:sym}), (\ref{cons:prob1}), (\ref{cons:restrict1}), (\ref{cons:BCCE})$$
for a positive linear objective $c$. When we take $c$ to be the Wasserstein-$2$ distance (\ref{def:Wasserstein-objective-general}) to the equilibrium bidding strategy $\beta^\mathcal{A}$ in the objective of (\ref{opt:generic-primal}) and employ our change of variables, we find that the objective linear function contracts suitably along the equalities (\ref{cons:restrict1}), yielding the LP 
\begin{equation}\label{opt:Wasserstein}
    \max_{\sigma \geq 0} \sum_{v_1 \in V_n, b_1 \in B_n} \sigma^{[1]}(v_1,b_1) \cdot F_n(v_1) (b_1 - \beta^\mathcal{A}(v_1))^2  \textnormal{ s.t. } (\ref{cons:sym}), (\ref{cons:prob1}), (\ref{cons:restrict1}), (\ref{cons:BCCE}).
\end{equation}

We will also implement an objective testing \emph{concentration bounds}. Our aim to check whether a result similar to that of \citet{FGLMS21} holds for the BCCE of discretised auctions. In particular, \citet{FGLMS21} show that for the single-item first price auction with uniform priors, and its discretisations with $V_n = B_n = \{1/n, 2/n \ldots, 1\}$, when buyers use mean based learning algorithms with a randomised initial learning period to learn how to bid, each buyer $i$ of valuation $v_i$ learns to bid $B_n \cap [\beta^{FP}(v_i), \beta^{FP}(v_i)+1/n)$ with high probability; and we would like to confirm whether convergence in this sense can be verified by our relaxed LPs for reasonably sized discretisations. Towards this end, we consider the problem 
\begin{equation}\label{opt:concentration}
    \min_{\sigma \geq 0} \sum_{v_1 \in V_n, b_1 \in [\beta^\mathcal{A}(v_1)-1/n,\beta^\mathcal{A}(v_1)+1/n] \cap B_n} \sigma^{\{[1]\}}(v_1,b_1) \textnormal{ s.t. } (\ref{cons:sym}), (\ref{cons:prob1}), (\ref{cons:restrict1}), (\ref{cons:BCCE}).
\end{equation}

Finally, we remark that the discussion in this section extends immediately also for the BCE of discretised auctions. In particular, to numerically bound the concentration of the agent-normal form BCE on the equilibrium bidding strategies or to bound its Wasserstein-$2$ distance to the equilibrium, we may use exactly the same LPs (\ref{opt:concentration}) and (\ref{opt:Wasserstein}), by simply exchanging the family of constraints (\ref{cons:BCCE}) with the constraints
\begin{equation}
    \sum_{v_{-i} \in V_n^{N-1}, b_{-i} \in B_n^{N-1}} \sigma^{[N]}(v,b) \cdot F_{n(-i)}(v_{-i}) \cdot \left( u_i\left(b'_i, b_{-i}|v_i \right) - u_i(b|v_i) \right) \leq 0\label{cons:BCE}
\end{equation}
for each buyer $i$, valuation $v_i$, and pair of bids $b_i, b'_i$, which is equivalent to (\ref{def:CE-II}).

\subsection{Numerical Results}\label{sec:numerical-results}

In this section, we report on our numerical results on the B(C)CE of discretised first-price and all-pay auctions with two buyers. Our results in general reflect a dependence on the \emph{information structure} of the auction for first-price auctions. In particular, for discretised first-price auctions, whether their BCCE are necessarily close to its continuous equilibrium bidding strategies depends on the concavity of the prior distribution. Such a dependence on prior distributions is notably absent for discretised all-pay auctions, for which we observe unconditional convergence.

We will evaluate our measures of worst-case distance bounds for B(C)CE of these auctions for priors when buyers' valuations are drawn i.i.d. from (a discretisation of) a power law distribution on $[0,1]$, an exponential distribution on $[0,\infty)$, or a truncated Gaussian distribution on $[0,1]$.  We evaluate our results on \emph{naive} discretisations unless otherwise stated, where the valuation and bid sets consist of equally separated points within the interval $[0,1]$, 
$$V_n = B_n = \{0,1/n,\ldots,1-1/n,1\}.$$
Meanwhile, the discrete prior distribution $F_n$ over $V_n$ is fixed in proportion to the prior p.d.f. $f$ for the continuous auction. We elaborate on the exact form of the discretisations below.

\newcommand{\erf}{\textnormal{erf}}

\begin{enumerate}
  \item The power law distribution on $[0,1]$ with parameter $\alpha > 0$ has c.d.f. $F^{\alpha}(v) = v^\alpha$. For the naive discretisation of size $n$, for $\alpha \in (0,1)$, note that the p.d.f. $f(v) \rightarrow \infty$ as $v \rightarrow 0$. To rectify this,  we opt to fix the discretised probability function $F_n$ by setting
  $$ F_n(v) = F(v+1/n) - F(v) \ \forall \ v \in V_n.$$

  \item The exponential distribution on $[0,\infty)$ with parameter $\lambda$ follows the c.d.f. $F^\lambda(v) = 1 - e^{-\lambda v}$. For the naive discretisation of size $n$, we set 
  $V_n, B_n,$ as before, 
  cutting off the distribution on $[0,1]$. Furthermore, for each $v \in V_n$, we set $F_n(v) \sim \lambda e^{-\lambda v}$.

  \item The truncated Gaussian distribution on $[0,1]$ with mean $\mu$ and variance $\sigma$ follows the c.d.f. 
  $$F^{\mu,\sigma}(v) = \frac{\textnormal{erf}\left(\frac{v-\mu}{\sigma\sqrt{2}}\right)-\textnormal{erf}\left(\frac{-\mu}{\sigma\sqrt{2}}\right)}{\textnormal{erf}\left(\frac{1-\mu}{\sigma\sqrt{2}}\right)-\textnormal{erf}\left(\frac{-\mu}{\sigma\sqrt{2}}\right)}.$$ 
  For this reason, for the naive discretisation  we fix $F_n(v) \sim e^{-\frac{1}{2}\left(\frac{x-\mu}{\sigma}\right)^2}$ for each $v \in V_n$. 
\end{enumerate}

We run all our numerical experiments on a desktop computer with an Intel i5 12600KF CPU and 32GB RAM. We build our models in MATLAB and use MOSEK as our solver. We inspect the B(C)CE of power law distributions for parameters $\alpha \in \{.5,1,\ldots,3\}$, the exponential distribution for $\lambda \in \{1,2,3,4\}$, and the truncated normal distribution for $(\mu,\sigma) \in \{0,.25,.5,.75\} \times \{.2,.5,1\}$. For two buyers, we can test instances with $|V_n| \leq 21$ for BCCE and $|V_n| \leq 18$ for BCE.

Our general result is that, for first-price auctions (with concave priors) and for all-pay auctions, buyers' marginal probability distributions on their bids concentrates heavily about the canonical equilibrium bidding strategies; i.e. most of the weight of the distribution $\sigma^{[1]}(v,\cdot)$ is placed about the equilibrium bid $\beta^\mathcal{A}(v)$. Figure \ref{fig:choice_1} demonstrates this phenomenon. Our numerical results then show that Wasserstein distance bounds certify convergence in this sense, insofar that they are declining in the size of the discretisation whenever such concentration occurs for a prior distribution. This is despite the probability mass placed on $[\beta(v)-1/n,\beta(v)+1/n]$ by the buyers declining in the discretisation size.

\begin{figure}
  \begin{center}
      \includegraphics[width=0.49\textwidth]{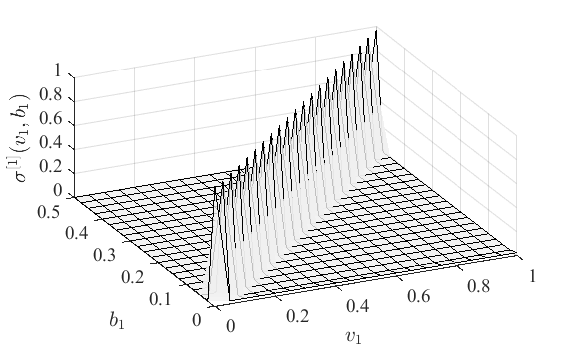}   
      \includegraphics[width=0.49\textwidth]{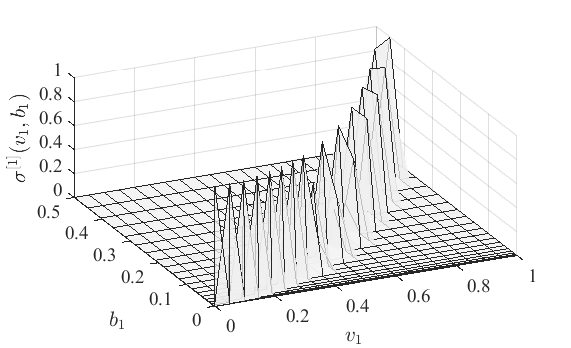}
  \end{center}
  \caption{A buyer's distribution of bids for Wasserstein distance maximising BCCE of first-price and all-pay auctions with two buyers, for $V_n = \{0,1/20,\ldots,1\}$ and $B_n = \beta^\mathcal{A}(V_n)$. For the first-price auction, the BCCE correlates on both buyers bidding the equilibrium strategy or both buyers bidding $0$. The form of the BCCE for the all-pay auction is less clear instead, and the maximal Wasserstein-$2$ distance BCCE does not appear to admit an obvious parametric form in $n$.}\label{fig:choice_1}
\end{figure}

\subsubsection{Convergence in first-price auctions}

For the BCCE of first-price auctions, we observe that whether the values of (\ref{opt:Wasserstein}) are declining in the discretisation size $n$ is connected to the concavity of the prior distribution; we illustrate this dependence in Figure \ref{fig:fp_2}. For naive discretisations with concave priors, the Wasserstein distance bounds are declining, with an observed rate of $O(1/n)$. However, these bounds in general do not exhibit a specific pattern in how exactly they decline as a function of the discretisation size; in particular, they need not be monotonically decreasing. One exception is when the prior distribution is uniform, in which case the concentration bounds decline over even $n$ or odd $n$, with lower Wasserstein distance bounds for even $n$. Meanwhile, for significantly non-concave priors like the power law distribution for $\alpha > 1$ or a truncated normal distribution on $[0,1]$ with positive mean and small variance, the values of (\ref{opt:Wasserstein}) remain bounded away from $0$ as the discretisation size grows large. We also note that BCE of first-price auctions still exhibit declining Wasserstein distance bounds without any condition on the priors. 

\begin{figure}
  \begin{center}
      \includegraphics[width=0.4\textwidth]{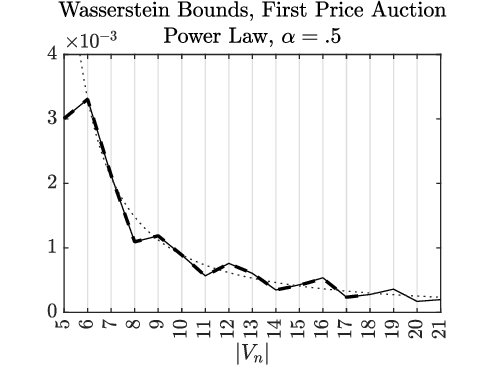}   
      \includegraphics[width=0.4\textwidth]{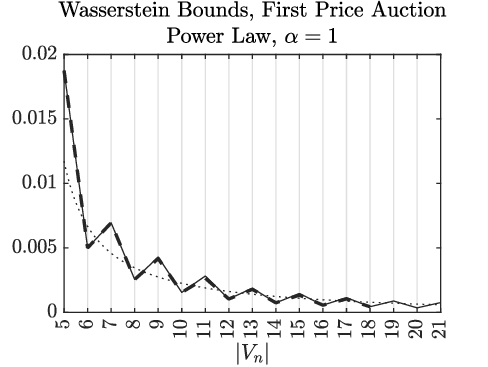}
      \includegraphics[width=0.4\textwidth]{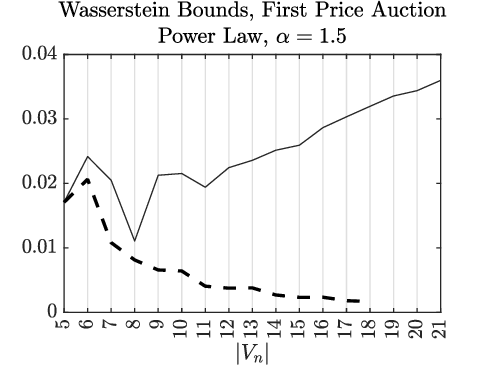}   
      \includegraphics[width=0.4\textwidth]{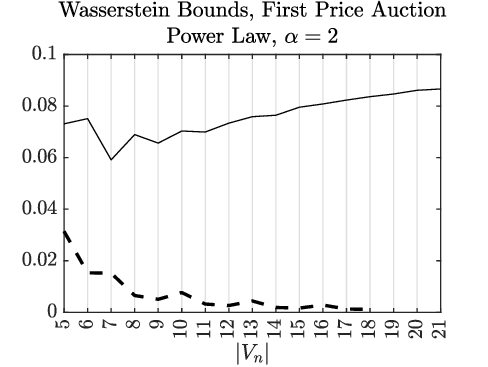}
      \includegraphics[width=0.4\textwidth]{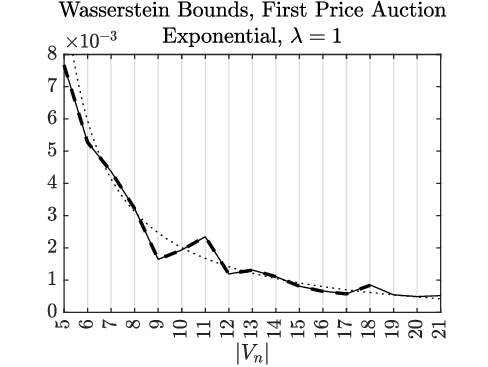}   
      \includegraphics[width=0.4\textwidth]{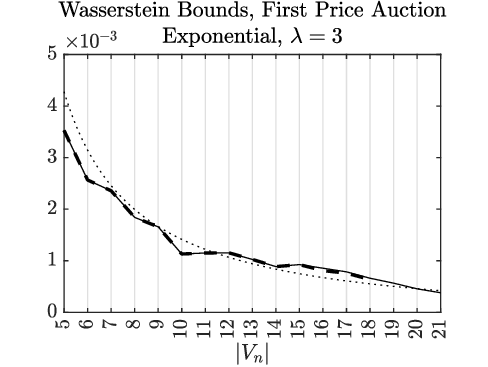}
      \includegraphics[width=0.4\textwidth]{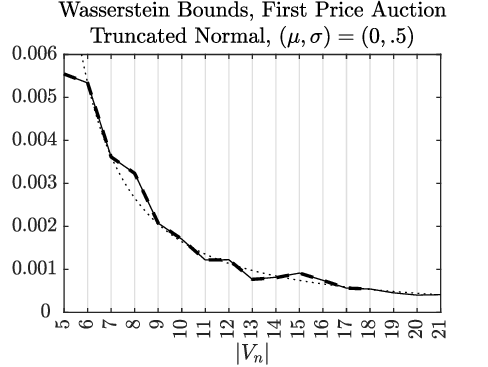}   
      \includegraphics[width=0.4\textwidth]{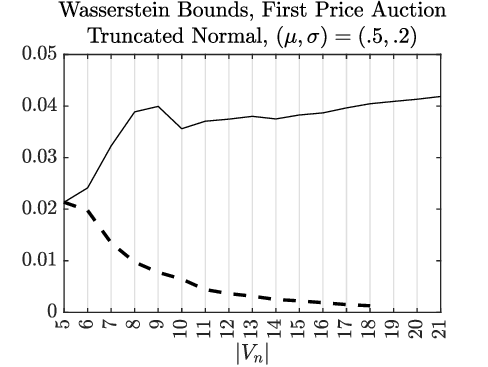}  
    \end{center}
  \caption{Values of (\ref{opt:Wasserstein}) for discretised first-price auctions with two buyers. In this and all further figures, (1) dashed (solid) lines correspond to bounds for B(C)CE, and (2) dotted lines correspond to a $O(n)$ rational function fit for the BCCE bounds with a degree $1$ numerator and degree $2$ denominator. \label{fig:fp_2}}
\end{figure}

The non-monotonicity of the Wasserstein bounds in the size of the discretisation belies the difficulty in obtaining an explicit parametric bound via the size of the discretisations we consider. However, by considering a different family of discretisations, we are able to identify an explicit family of primal optimal solutions. Specifically, for the discretisation $V_n = \{0,1/n,\ldots,1\}$ and $B_n = V_n / 2$ with a uniform prior, if $n\leq 22$ \emph{a} Wasserstein $2$-distance maximising BCCE with respect to the canonical equilibrium bids in the continuous auction is given 
$$\sigma^{[2]}_n(v_1,b_1,v_2,b_2) = \frac{2|V_n|}{2|V_n|+1}\mathbb{I}[(b_1,b_2) = (v_1,v_2)/2] + \frac{1}{2|V_n|+1}\mathbb{I}[(b_1,b_2) = (0,0)].$$
This is in fact precisely the BCCE depicted in Figure \ref{fig:choice_1}. For other distributions, we remain unable to identify a similar explicit family of Wasserstein-distance maximal BCCEs. We conclude that in general, the type of discretisation matters when going from a continuous to a discretised model. This motivates the analysis of the continuous case in the next section.

\subsubsection{Convergence in all-pay auctions}

For the Wasserstein bounds on the BCCE of discretised all-pay auctions, we no longer observe the dependence on the concavity of the prior distribution exhibited in the setting of the first-price auction for two buyers. Here, the Wasserstein distance bounds instead decline with $n$ unconditionally in all instances we consider. However, as in the case of first-price auctions, the decline of these bounds are non-monotone in general. Furthermore, while there is a difference between the Wasserstein distance bounds for BCE and BCCE of each auction instance, these differences turn out to be insignificant; this is illustrated in Figure \ref{fig:ap_2}.

\begin{figure}
  \begin{center}
      \includegraphics[width=0.4\textwidth]{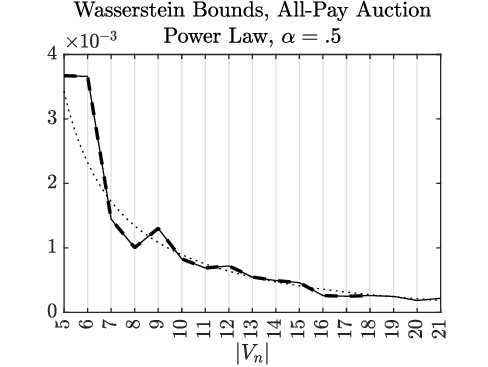}   
      \includegraphics[width=0.4\textwidth]{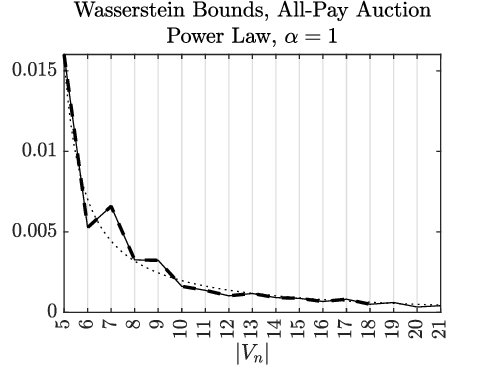}
      \includegraphics[width=0.4\textwidth]{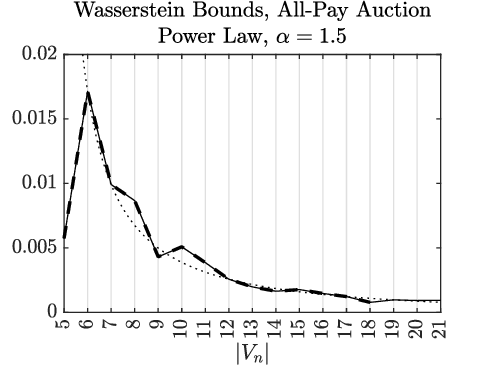}   
      \includegraphics[width=0.4\textwidth]{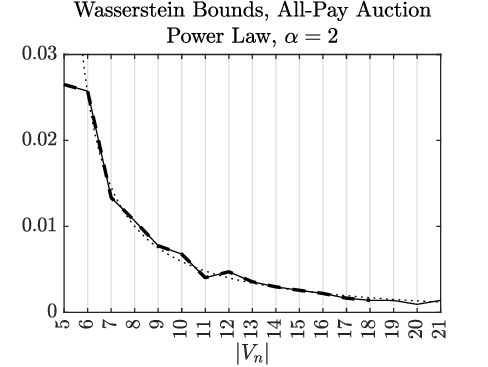}
      \includegraphics[width=0.4\textwidth]{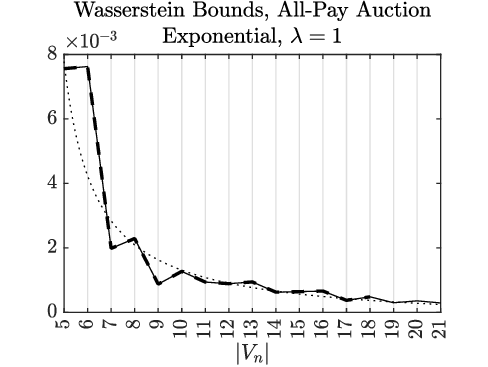}   
      \includegraphics[width=0.4\textwidth]{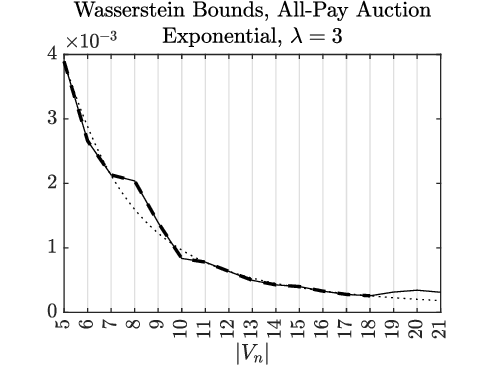}
      \includegraphics[width=0.4\textwidth]{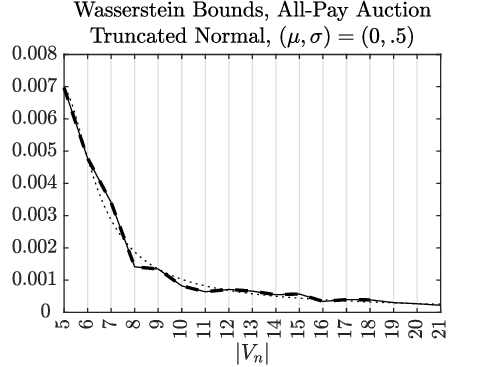}   
      \includegraphics[width=0.4\textwidth]{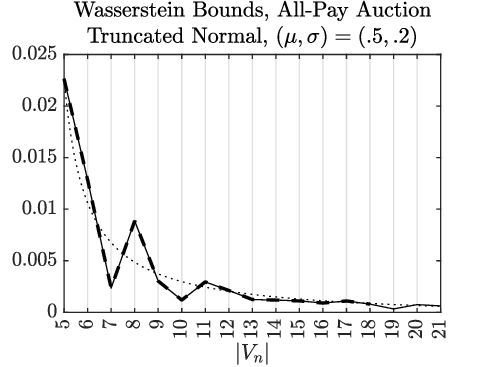}    
  \end{center}
  \caption{Values of (\ref{opt:Wasserstein}) for various discretised all-pay auctions with two buyers. \label{fig:ap_2}}
\end{figure}

\subsubsection{On concentration} Our final result is that concentration in the sense of \cite{FGLMS21} is not necessarily assured for the BCCE first-price and all-pay auctions, at least for the LP relaxations we consider. For the naive discretisations above, we consider the problem (\ref{opt:concentration}); we remark that this is a relaxation of the concentration bounds of \cite{FGLMS21}, who showed that with a pre-training round, mean-based learners concentrate their bids on $[\beta(v),\beta(v)+1/n)$ for first-price auctions with uniform priors. We demonstrate in Figure \ref{fig:concentration} that, while buyers may bid about the canonical BNE with high probability, there are discretisations for which the relaxed B(C)CE of the auction includes bids not in $[\beta(v)-1/n,\beta(v)+1/n]$. As a consequence, we infer that one or more of the following holds; (1) the results of \cite{FGLMS21} are truly dependent on the pre-training period, (2) that our LP based methodology is too weak to certify convergence for general mean-based learning algorithms, or (3) convergence in this sense is exhibited at discretisation sizes much higher than ones we can solve for numerically. In Figure \ref{fig:concentration} we also see that concentration about the BNE does not exhibit any clear pattern for the discretisation sizes we consider; althought there is some general trend of declining concentration bounds for odd $|V_n|$.

\begin{figure}
  \begin{center}
      \includegraphics[width=0.45\textwidth]{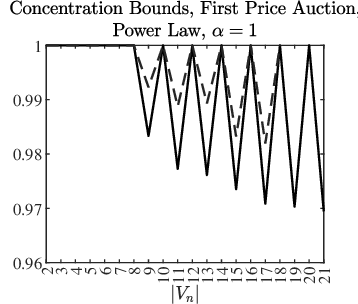}   
      \includegraphics[width=0.45\textwidth]{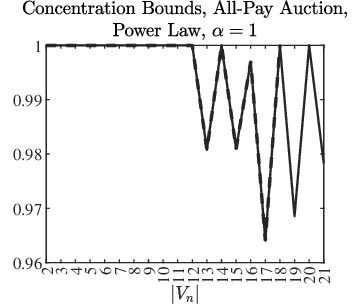}
  \end{center}
  \caption{Values of (\ref{opt:concentration}) for the uniform discrete distribution on $V_n = \{0,1/n,\ldots,1.\}$ and $B_n = V_n$, for first-price and all-pay auctions.\label{fig:concentration}}
\end{figure}

\subsection{Structural Properties of Bayesian CCE in Discretised Auctions}\label{sec:structure}

Our numerical results in Section \ref{sec:numerical-results} suggest that, for a continuous first-price auction with concave priors or for a continuous all-pay auction, the BCCE of a $O(1/n)$-fine family $(V_n,B_n,F_n)$ of its discretisations has $O(1/n)$ Wasserstein-$2$ distance to its canonical equilibrium. Furthermore, in principle, these vanishing bounds admit a \emph{dual proof certificate}, i.e. a family of solutions to the dual of (\ref{opt:generic-primal}) or (\ref{opt:Wasserstein}) for each discretisation $A_n$ with value $O(1/n)$, which by the weak duality of linear programming proves that the BCCE of $A_n$ converges in probability to the canonical equilibrium as $n \rightarrow \infty$.

However, the dual problem of (\ref{opt:Wasserstein}) does not appear to admit easy to characterise families of such dual solutions. By further leveraging the symmetries of the problem, for an auction with two buyers, the dual LP of (\ref{opt:Wasserstein}) may be written 
\begin{align}
  \min_{\epsilon \geq 0, \mu, \rho} \sum_{v_1 \in V_n} \mu(v_1) \textnormal{ s.t.} & \label{opt:dual-Wasserstein} \\
  \mu(v_1) - \sum_{v_2 \in V_n} \rho(v_1,b_1,v_2) & \geq F_n(v_1) \cdot (b_1 - \beta^\mathcal{A}(v_1))^2 \ \forall \ v_1 \in V_n, b_1 \in B_n \nonumber\\
  \rho(v_1,b_1,v_2) + \rho(v_2,b_2,v_1) & \ldots  \nonumber\\
  + \sum_{b' \in B_n} \epsilon(v_1,b') \cdot F_n(v_2) \cdot (u_1(b',b_2|v_1) - u_1(b^{[2]}|v_1)) & \ldots  \nonumber\\
   + \sum_{b' \in B_n} \epsilon(v_2,b') \cdot F_n(v_1) \cdot (u_2(b',b_1|v_2) - u_1(b^{[2]}|v_2)) & \geq 0 \ \forall \ v_1, v_2 \in V_n, b_1, b_2 \in B_n. \nonumber
\end{align}
Here, $\epsilon_i$ corresponds to the equilibrium constraints (\ref{cons:BCCE}), while $\mu$ and $\rho$ are dual variables associated respectively with the probability constraint (\ref{cons:prob1}) and (\ref{cons:restrict1}). 

However, we are unable to derive explicit solutions to (\ref{opt:dual-Wasserstein}) for a family of discretisations. The issue is that the solutions to (\ref{opt:dual-Wasserstein}) are in general poorly behaved. Figure \ref{fig:dual} demonstrates this phenomenon in the context of a first-price auction with uniform priors on $V_n = \{0,1/n,\ldots,1\}$ and $B_n = V_n$. We observe that $\epsilon(v_1,b_1) > 0$ when $b_1 \simeq \beta(v_1) = v_1/2$, however, \emph{how} $\epsilon$ disperses about the equilibrium bids appears to follow no easily discernible pattern. Any attempts to characterise the form of $\mu$ or $\rho$ turn out to be equally unfruitful.

\begin{figure}
  \begin{center}
      \includegraphics[width=0.45\textwidth]{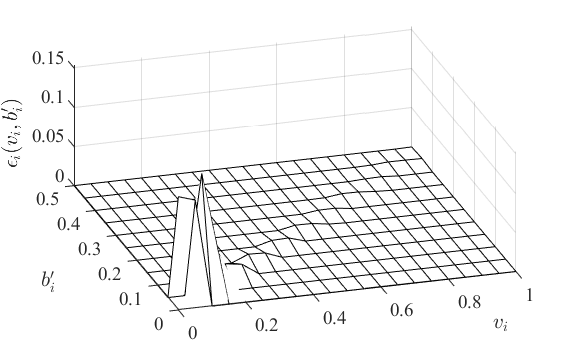}   
      \includegraphics[width=0.45\textwidth]{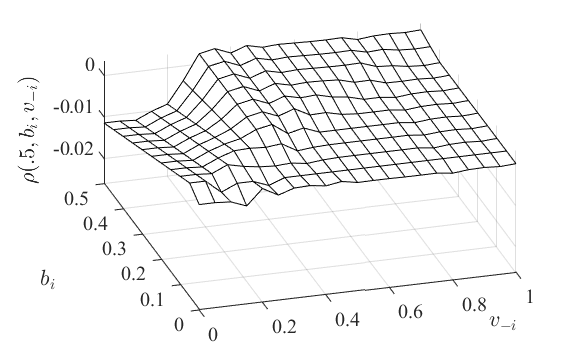}
  \end{center}
  \caption{Values of $\epsilon_i(v_i,b'_i)$ and $\rho(.5,b_i,v_{-i})$ for the first-price auction with uniform discrete prior distribution on $V_n = B_n = \{0,1/20,\ldots,1.\}$.\label{fig:dual}}
\end{figure}

This motivates us to adopt an alternative approach. In this section, we study properties of the dual LP of (\ref{opt:generic-primal}), its solutions, and its corresponding binding primal variables. This leads us to a characterisation of when a discretised auction has a unique BCCE in pure strategies in terms of the form of the dual variables $\epsilon$ for an optimal solution. Then in Section \ref{sec:theory-cont} we define a notion of BCCE for the continuous auction, which intuitively corresponds to the $n \rightarrow \infty$ limit of (\ref{opt:generic-primal}). For this infinite LP, our results in this section allow us to specify the forms of appropriate dual solutions which yield proofs of all our main theoretical results.

To wit, consider the dual of the extended form LP (\ref{opt:generic-primal}), which may be written
\begin{align}
  \min_{\epsilon \geq 0, \gamma} \gamma \textnormal{ s.t.} & \label{opt:generic-dual} \\
  \gamma + \sum_{i \in [N], v \in \valset_n, b'_i \in B_n} \epsilon_i(v_i,b'_i) \cdot F_{n(-i)}(v_{-i}) (u_i(b'_i,\lambda_{-i}(v_{-i})|v_i) - u_i(\lambda(v)|v_i)) & \geq c_{n\lambda} \ \forall \ \lambda \in \Lambda_n. \tag{$\sigma(\lambda)$} \label{cons:eq}
\end{align}

Our first result aims to simply the dual problem (\ref{opt:generic-dual}) when the auction has a unique BCCE, which places probability $1$ on a pure strategy Nash equilibrium $\lambda^*$. The following proposition accomplishes that, by showing that we may assume that $\epsilon_i(v_i,b'_i)$ is non-zero only when $b'_i$ is an equilibrium bid. The result in fact applies to uniqueness of the BCCE in pure strategies for general normal-form games, though we express it in the language of our discretised auction settings.

\begin{proposition}\label{prop:dual-form}
  Suppose that $c_{n\lambda} = 1-\delta_{\lambda \lambda^*}$. If the optimal value of (\ref{opt:generic-dual}) equals $0$, then for every dual optimal solution $\epsilon$, every buyer $i$ and every type $v_i$, $\epsilon_i(v_i,b'_i) > 0 \Leftrightarrow b'_i = \lambda^*_i(v_i)$.
\end{proposition}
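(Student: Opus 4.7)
The plan is to invoke strong LP duality for the finite LPs (\ref{opt:extended-LP})--(\ref{opt:extended-dual-LP}) and read off the support of any dual optimum via complementary slackness. Strong duality gives that the primal optimum is also zero; since the objective coefficient is $1$ off of $\lambda^*$ and $0$ at $\lambda^*$, while $\sigma \geq 0$ must sum to one, the unique primal optimum is the Dirac mass $\sigma = \delta_{\lambda^*}$. Feasibility of this choice is exactly the condition that $\lambda^*$ be a pure-strategy BNE. To obtain the $\Leftrightarrow$ in the conclusion one in addition needs $\lambda^*$ to be a \emph{strict} BNE (the best response against $\lambda^*_{-i}$ for each type $v_i$ is the singleton $\{\lambda^*_i(v_i)\}$)---the natural regime anticipated by the paper's main result.

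For the forward direction $\epsilon_i(v_i,b'_i) > 0 \Rightarrow b'_i = \lambda^*_i(v_i)$, I would apply complementary slackness at the primal variable $\sigma(\lambda^*) = 1 > 0$: the dual constraint at $\lambda = \lambda^*$ must bind, so with $\gamma = 0$ one obtains
\begin{equation*}
    \sum_{i,\, v_i,\, b'_i} \epsilon_i(v_i,b'_i) \cdot \underbrace{\sum_{v_{-i}} f(v) \bigl( u_i(b'_i, \lambda^*_{-i}(v_{-i}) | v_i) - u_i(\lambda^*(v) | v_i) \bigr)}_{\leq\, 0 \text{ by BNE}} \;=\; 0.
\end{equation*}
Each summand is a product of a non-negative $\epsilon$ with a non-positive expectation, so every summand vanishes. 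In particular, $\epsilon_i(v_i,b'_i) > 0$ forces the inner expectation to equal zero, meaning $b'_i$ is tied as a best response to $\lambda^*_{-i}$ for type $v_i$; strictness of $\lambda^*$ then narrows this to $b'_i = \lambda^*_i(v_i)$.

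For the reverse direction I would argue by contradiction. Assume $\epsilon_i(v_i, \lambda^*_i(v_i)) = 0$ for some $(i,v_i)$; the forward direction then gives $\epsilon_i(v_i, \cdot) \equiv 0$. Pick any $\tilde b \in B \setminus \{\lambda^*_i(v_i)\}$ and form $\lambda^{\tilde b}$ by overwriting $\lambda^*$ only at the slot $(i,v_i)$ with $\tilde b$, so $\lambda^{\tilde b} \neq \lambda^*$. Evaluate the dual constraint at $\lambda = \lambda^{\tilde b}$: by the forward direction only summands with $b'_j = \lambda^*_j(v_j)$ can contribute. For every $(j,v_j) \neq (i,v_i)$ one has $\lambda^{\tilde b}_j(v_j) = \lambda^*_j(v_j)$, so the hypothetical deviation to $\lambda^*_j(v_j)$ already matches the bid played by $\lambda^{\tilde b}_j(v_j)$, and the contribution $\sum_{v_{-j}} f(v)\bigl(u_j(\lambda^*_j(v_j),\lambda^{\tilde b}_{-j}(v_{-j})|v_j) - u_j(\lambda^{\tilde b}(v)|v_j)\bigr)$ cancels term by term. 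The one remaining slot $(i,v_i)$ has $\epsilon_i(v_i,\cdot) \equiv 0$ by hypothesis. Thus the dual LHS equals zero while the RHS equals $\mathbb{I}[\lambda^{\tilde b} \neq \lambda^*] = 1$, a contradiction.

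I expect the main obstacle to be the reverse direction: the decisive move is recognising that a local perturbation of $\lambda^*$ at a single slot $(i,v_i)$ produces a dual constraint in which, after the support restriction extracted from the forward direction, every coordinate outside $(i,v_i)$ vanishes automatically, so the constraint \emph{isolates} that slot. Once this local-perturbation trick is in hand, the rest is bookkeeping with BNE inequalities and complementary slackness. It is worth flagging that the argument tacitly uses $f(v_i) > 0$ for every type---otherwise the dual is infeasible at a null-mass type---and that strictness of $\lambda^*$ is essential for the forward direction; without it, dual optima supported on tied, off-equilibrium best responses can coexist and the $\Leftrightarrow$ genuinely fails.
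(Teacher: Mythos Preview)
Your complementary-slackness argument for the forward direction establishes only that $\epsilon_i(v_i,b'_i)>0$ forces $b'_i$ to be a \emph{best response} to $\lambda^*_{-i}$ for type $v_i$. To collapse this to $b'_i=\lambda^*_i(v_i)$ you invoke strictness of $\lambda^*$, but strictness is not among the hypotheses of the proposition. Worse, the paper derives strictness as a \emph{corollary} of this very proposition, so adding it as an assumption would make that corollary circular. This is a genuine gap, not a cosmetic one: the statement as written must rule out dual mass on tied, off-equilibrium best responses, and your binding-at-$\lambda^*$ argument cannot do that on its own.

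The paper's proof of the forward direction takes a different route that avoids strictness entirely. Having first shown (via a separate argument) that $\epsilon_i(v_i,\cdot)\not\equiv 0$ for every slot, it normalises the dual variables into mixed strategies $s^*_i(v_i,b_i)\propto\epsilon_i(v_i,b_i)$, then takes the $s^*$-weighted average of \emph{all} dual constraints $(\sigma(\lambda))$. A direct computation shows this weighted left-hand side is identically zero: after the change of variables the ``deviate to $b'_i$ with weight $\epsilon_i(v_i,b'_i)$'' term and the ``play $b_i$ with weight $s^*_i(v_i,b_i)$'' term cancel algebraically, independently of any best-response structure. Meanwhile the weighted right-hand side is the $s^*$-probability of drawing a profile $\lambda\neq\lambda^*$, which is strictly positive whenever $\epsilon$ has any support off $\lambda^*$. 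This contradicts dual feasibility with $\gamma=0$. Note that this argument needs $\sum_{b}\epsilon_i(v_i,b)>0$ to define $s^*$, which is why the paper proves the reverse direction first.

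Your reverse-direction argument via the single-slot perturbation $\lambda^{\tilde b}$ is correct and in fact cleaner than the paper's (which instead commits $(i,v_i)$ to a non-equilibrium bid and invokes existence of a Nash equilibrium in the residual finite game to build a primal witness). However, your version leans on the forward direction to restrict the nonzero $\epsilon$-terms to equilibrium bids; since your forward direction is where the gap lies, the overall argument does not close. If you replace your forward direction with the paper's $s^*$-averaging trick (which itself requires the reverse direction to be established first), the order of the two parts must flip, and then your local-perturbation idea no longer saves work over the paper's approach.
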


\begin{proof}
  First suppose that we have a dual optimal solution $\epsilon$ such that for some buyer $i$ and a type $v_i$, $\epsilon_i(v_i,b'_i) = 0$ for every bid $b'_i$. Then by complementary slackness, dropping the equilibrium constraints for buyer $i$ of type $v_i$ does not change the value of the primal problem (\ref{opt:generic-primal}). Then consider solution $\sigma$ to the relaxed problem, obtained by committing $(i,v_i)$ to bidding $\lambda_i(v_i) \neq \lambda^*_i(v_i)$\footnote{Right now we implicitly assume that $|B| > 1$.} in the agent-normal form of the auction, and then finding a Nash equilibrium in the resulting finite game. By construction, $\lambda \neq \lambda^*$, hence the primal LP with the $\epsilon_i(v_i,\cdot)$ constraints omitted has value $> 0$; contradiction.  
  
  So it remains to show that for any $b'_\ell \neq \lambda_\ell(v_\ell), \epsilon_i(v_\ell,b'_\ell) = 0$. Again suppose for a contradiction that for some buyer $\ell$, type $v_\ell$ and bid $b'_\ell \neq \lambda^*_\ell(v_\ell)$, $\epsilon_\ell(v_\ell,b'_\ell) > 0$. Consider the mixed bidding strategies 
  $$ s_i^*(v_i,b_i) = \frac{\epsilon_i(v_i,b_i)}{\sum_{b'_i \in B} \epsilon_i(v_i,b'_i)} \ \forall \  i\in [N], v_i \in V_n, b_i \in B_n. $$
  We then consider the value of the sum, 
  \begin{align}\label{eqn:uniqueness}
    &\sum_{\lambda \in \profset_n} \left( \prod_{(j,v'_j) \in \sqcup_{j \in [N]} V_n} s^*_j(v'_j,\lambda_j(v'_j)) \right) \ldots \\ & \cdot \sum_{i \in [N], v \in \valset_n, b'_i \in B_n} \epsilon_i(v_i,b'_i) F_{n(-i)}(v_{-i})  \left( u_i(b'_i, \lambda_{-i}(v_{-i}) | v_i) - u_i(\lambda(v)|v_i) \right).\nonumber
  \end{align}
  Note that each term $\left( \prod_{(j,v'_j) \in \sqcup_{j \in [N]} V_n} s^*_j(v'_j,\lambda_j(v'_j)) \right)$ is the probability the bidding profile $\lambda$ is realised, when each buyer $j$ \emph{independently} implements randomised bidding strategies $s^*_j \in \Delta(B_n)^{V_n}$. By dual feasibility, i.e. by considering (\ref{eqn:uniqueness}) as a weighted sum over constraints (\ref{cons:eq}), (\ref{eqn:uniqueness}) is $> 0$. This is since by assumption, for some buyer $\ell$ and type $\ell$, $s^*_\ell(v_\ell, \lambda^*_\ell(v_\ell)) < 1$. On the other hand, 
  \begin{align*}
      & \sum_{\lambda \in \profset_n, v_{-i} \in V_{n}^{N-1}} \left( \prod_{(j,v'_j) \in \sqcup_{j \in [N]} V_n} s^*_j(v'_j,\lambda_j(v'_j)) \right) \cdot F_{n(-i)}(v_{-i}) \cdot \left( u_i(b'_i, \lambda_{-i}(v_{-i}) | v_i) - u_i(\lambda(v)|v_i) \right) \\
      = & \sum_{b \in \bidset_n, v_{-i} \in V_{n}^{N-1}} \left( \prod_{j \in [N]} s^*_j(v_j,b_j) \right) \cdot F_{n(-i)}(v_{-i}) \cdot \left( u_i(b'_i, b_{-i} | v_i) - u_i(b|v_i) \right) \\
      = & \sum_{b_i \in B_n} \left( \delta_{b_i b'_i} - s^*_i(v_i,b_i) \right) \cdot \sum_{b_{-i} \in B_{n(-i)}, v_{-i} \in V_{n(-i)}} \left( \prod_{j \neq i} s^*_j(v_j,b_j) \right) \cdot F_{n(-i)}(v_{-i}) \cdot u_i(b|v_i).
  \end{align*}
  Therefore, by rearranging the sum, (\ref{eqn:uniqueness}) is equal to 
  $$ \sum_{i \in [N], v_i \in V_n, b'_i \in B_n} \epsilon_i(v_i,b'_i) \cdot \sum_{b_i \in B_n} \left( \delta_{b_i b'_i} - s^*_i(v_i,b_i) \right) \cdot \sum_{b_{-i} \in B_{n}^{N-1}, v_{-i} \in V_{n}^{N-1}} \left( \prod_{j \neq i} s^*_j(v_j,b_j) \right) \cdot f(v) \cdot u_i(b|v_i).$$ 
  However, by construction of $s^*$, for any $i \in [N]$, $v_i \in V_n$, and $b'_i \in B_n$,
  \begin{align*}
    \sum_{b'_i \in B_n} \epsilon_i(v_i,b'_i) \cdot  (\delta_{b_i b'_i} - s^*_i(v_i,b_i)) & = \epsilon_i(v_i,b_i) - \sum_{b'_i \in B_n} \epsilon_i(v_i,b'_i) \cdot \frac{\epsilon_i(v_i,b_i)}{\sum_{b''_i \in B_n} \epsilon_i(v_i,b''_i)} \\
      & = \epsilon_i(v_i,b_i) - \epsilon_i(v_i,b_i) = 0.
  \end{align*}
  Thus we have $0 > 0$, contradiction.
\end{proof}

Proposition \ref{prop:dual-form} allows us to limit the number of equilibrium constraints we need to consider in settings where we suspect the BCCE is unique in pure strategies. In particular, by complementary slackness, the value of (\ref{opt:generic-primal}) does not change when we disregard equilibrium constraints $\epsilon_i(v_i,b'_i)$ for $b'_i \neq \lambda^*_{i}(v_i)$. We may thus only impose constraints $\epsilon_i(v_i,\lambda^*_{i}(v_i))$ in the primal problem, and adjust the dual LP (\ref{opt:generic-dual}) appropriately.

We also remark on a connection of the result of Proposition \ref{prop:dual-form} to learning algorithms based on gradient descent or dual averaging methods. Towards this end, for a game of complete information $\Gamma$, call a pure strategy (Bayes-)Nash equilibrium $a^* \in A$ \textbf{strict} if for any player $i \in N$ and any alternate action $a'_i \in A_i$, $u_i(a'_i, a_{-i}) < u_i(a)$. In \cite{MZ19} it is shown that for so-called dual averaging based learning algorithms a strict Nash equilibrium is \emph{locally stable}. Meanwhile, \cite{Xiao10} shows that dual averaging based learning algorithms converge to a CCE for concave games. Utilities in our distributional setting are multilinear for the purposes of these learning algorithms, hence concave in the argument of each player, and we infer that strictness of a BNE is a necessary condition for the uniqueness of BCCE as a pure strategy BNE.  This conclusion, of course, follows readily from Proposition \ref{prop:dual-form}.

\begin{corollary}
  Suppose that $c_{n\lambda} = 1-\delta_{\lambda \lambda^*}$. If the optimal value of (\ref{opt:generic-primal}) equals zero, then $\lambda^*$ is a strict Nash equilibrium.
\end{corollary}

\begin{proof}
    Take any optimal dual solution $\epsilon$. By Proposition \ref{prop:dual-form}, $\epsilon_i(v_i,b'_i) = 0$ for any $b'_i \neq \lambda^*_i(v_i)$. However, if there exists $b'_i$ such that $\mathbb{E}_{v_{-i} \sim F_{n(-i)}} [ u_i(b'_i, \lambda^*_{-i}(v_i)|v_i)  - u_i(\lambda^*(v) | v_i) ] = 0$, then for large enough $M$ and sufficiently small $\delta > 0$, 
    $$ \epsilon'_j(v_j,b_j) = \begin{cases}
        M \cdot \epsilon_j(v_j,b_j) & (j,v_j,b_j) \neq (i,v_i,b'_i) \\
        M \cdot \epsilon_i(v_i,b'_i) + \delta & \textnormal{otherwise}
    \end{cases} $$
    is dual feasible with objective value $0$, hence dual optimal. Contradiction to Proposition \ref{prop:dual-form}.
\end{proof}

Next, we shall aim to further simplify the dual LP (\ref{opt:generic-dual}). In particular, our goal now is to reduce the number of \emph{dual constraints} we need to consider. Each constraint (\ref{cons:eq}) of (\ref{opt:generic-dual}) corresponds, of course, to a vector of pure bidding strategies $\lambda$; and eliminating its corresponding constraint is equivalent to setting $\sigma(\lambda) = 0$ in the primal LP (\ref{opt:generic-primal}). 

\begin{definition}
  For a discretised auction $A_n$, suppose that $\sigma : \Lambda_n \rightarrow \mathbb{R}_+$. Then the \textbf{support} of $\sigma$ is the set of bidding strategies $S(\sigma) = \{ \lambda \in \Lambda_n \ | \ \sigma(\lambda) > 0\}$.
\end{definition}

From here onwards, we are not only looking for ways to restrict our attention to (approximate) BCCE supported in a restricted set of bidding strategies satisfying certain intuitive properties, but also to demonstrate the explicit form of dual solutions that \emph{prove} so. The latter necessitates us to make some assumptions on a given discretised auction $A_n$.

\begin{assumption}[Existence of zero type and bid]\label{asmptn:zero}
  There exists a zero type and a zero bid, moreover, if any buyer $i$ bids zero then their expected payment is zero. Formally, 
  $0 \in V_n \cap B_n$ and $\forall \ i\in [N], b_{-i} \in B_n^{N-1}, p_i(0,b_{-i}) = 0$.
\end{assumption} 

\begin{assumption}[Non-triviality of allocation \& payments]\label{asmptn:non-trivial}
  If buyer $i$ has a maximal positive bid, then they will win an item with positive probability and pay an amount strictly higher than zero,
  $ i \in \arg \max_{j \in [N]} b_j \Rightarrow x_i(b), p_i(b) > 0.$
\end{assumption}

\begin{assumption}[Monotonicity]\label{asmptn:monotone}
  The allocation probability and the expected payment of a buyer is weakly increasing in their bid. Furthermore, the expected payment is strictly increasing whenever a buyer's bid is maximal,
      $$\forall \ i \in [N], b_{-i} \in B_n^{N-1} \textnormal{ and } b,b' \in B_n, [b < b' \Rightarrow x_i(b,b_{-i}) \leq x_i(b',b_{-i}), p_i(b,b_{-i}) \leq p_i(b',b_{-i}) ], $$
      $$\forall \ i \in [N], b \in B_n^N, b' \in B_n, [ i \in \arg \max_{j \in N} b_j \textnormal{ and } b' > b \Rightarrow p_i(b,b_{-i}) < p_i(b',b_{-i}) ].$$
\end{assumption}

Assumption \ref{asmptn:zero} is intuitive, while Assumptions \ref{asmptn:non-trivial} and \ref{asmptn:monotone}, apply for a wider class of auctions in single-parameter settings (in the sense of \cite{myerson1984}); e.g. sponsored search auctions with generalised first-price payment rules. However, we remark that pairs $(x,p)$ which define DSIC mechanisms violate Assumptions \ref{asmptn:non-trivial} and \ref{asmptn:monotone}.

First, as a warm up, we show that without loss of generality we may restrict our attention to bidding strategies normalised such that each buyer $i$ of valuation $0$ bids $0$. As a consequence, in (\ref{opt:generic-primal}) we may without loss of generality restrict attention BCCE supported in \emph{normalised} bidding strategies, in which buyers of valuation $0$ bid $0$. 

\begin{proposition}\label{prop:zero-bid}
  Suppose that $0 \in V_n \cap B_n$, and $\lambda' \in \Lambda_n$ such that for some buyer $i$, $\lambda'_i(0) > 0$. Then for any primal solution $\sigma$ of (\ref{opt:generic-primal}), $\sigma(\lambda') = 0$. In other words, for any bidding strategy $\lambda$ in the support of a BCCE $\sigma$, $\lambda_i(0) = 0$. 
\end{proposition}

\begin{proof}
  Let $c_{n\lambda} = \delta_{\lambda \lambda'}$, and suppose without loss of generality that $i \in \arg\max_{j \in [N]} \lambda'_j(0)$. We proceed by constructing a solution to (\ref{opt:generic-dual}) with objective value $0$. Since the primal LP (\ref{opt:generic-primal}) is feasible with finite value, so too is its dual. Then note that, for any bidding profile $\lambda$ by Assumption \ref{asmptn:zero},
  \begin{align*}
  &\sum_{v_{-i} \in V_{-i}} F_{n(-i)}(v_{-i}) \cdot \left( 0 \cdot [x_i(0, \lambda_{-i}(v_{-i})) - x_i(\lambda(0,v_{-i}))] - p_i(0,\lambda_{-i}(v_{-i})) + p_i(\lambda(0,v_{-i})) \right) \\
  = & \sum_{v_{-i} \in V_{-i}} F_{n(-i)}(v_{-i}) \cdot p_i(\lambda(0,v_{-i})) \geq 0.
  \end{align*}
  Moreover, by Assumption \ref{asmptn:non-trivial} the inequality is strict for $\lambda'$. since buyer $i$ of type $0$ has a maximal bid whenever all other buyers are also of type $0$, which happens with positive probability. Then by setting $\gamma = 0$, $\epsilon_i(0,0)$ to be sufficiently large, and all other $\epsilon_j(\theta_j,b'_j) = 0$, we obtain a feasible dual solution to (\ref{opt:generic-dual}) with objective value $0$.
\end{proof}

Now, suppose that the BCCE of the discretised auction $A_n$ is uniquely given via a symmetric pure strategy Nash equilibrium, $\lambda^*$. Then by Proposition \ref{prop:dual-form} and by the symmetry of $\delta_{\lambda \lambda^*}$ in buyers' indices, we may assume that $\epsilon_i(v_i,b'_i) = \delta_{\lambda^*_i(v_i) b'_i}\epsilon(\theta_i)$ for some function $\epsilon : V_n \rightarrow \mathbb{R}_+$, while Proposition \ref{prop:zero-bid} tells us that $\epsilon(0)$ can be set to some sufficiently large value. However, we do not yet have an inkling of what functional form the dual solution should have. 

Our final result in this section addresses this, placing bounds on the rate of decrease of $\epsilon$ necessary to certify that the BCCE is supported in \emph{weakly increasing} strategies. We will prove the result for uniform probability distributions over $V_n$, as it will be conducive to our analysis in the following sections. For brevity in notation, given bidding strategies $\lambda$ we introduce a shorthand for \emph{monotonised} bidding strategies obtained from $\lambda$.

\newcommand{\mon}{\textnormal{mon} \ }

\begin{notation*}
  Given bidding strategies $\lambda$, $\mon\lambda$ is the weakly increasing bidding strategies obtained via permutations $p_i : V_n \rightarrow V_n$ such that $\mon\lambda_i(v_i) = \lambda_i(p_i(v_i))$ for any $i \in [N]$ and $v_i \in V_n$.
\end{notation*}

\begin{proposition}\label{prop:inc-bids}
  Let $A_n$ be a discretised auction such that $F_n(v_i) = 1/|V_n|$ for each $v_i \in V_n$. Consider a pure strategy profile $\lambda^*$ and the restricted dual LP for some fixed $\epsilon : V_n \rightarrow \mathbb{R}_+$, 
  \begin{align}
    \min_{\gamma} \gamma \textnormal{ s.t.} & \label{opt:restricted-dual} \\
    \gamma + \sum_{i \in [N], v \in \valset_n} \epsilon(v_i) \cdot \frac{1}{|V_n|^{N-1}} ( u_i(\lambda^*_i(v_i),\lambda_{-i}(v_{-i})|v_i) - u_i(\lambda(v)|v_i)) & \geq c_{n\lambda} \ \forall \ \lambda \in \Lambda_n. \tag{$\sigma(\lambda)$} \label{cons:eq-restrict}
  \end{align}
  Suppose that $c_{n(\mon\lambda)} = c_{n\lambda}$ for each bidding strategies $\lambda$ and for $v_i > 0$, $\epsilon(v_i)$ is strictly decreasing, and $\epsilon(v_i)v_i$ is non-increasing. Then the inequalities (\ref{cons:eq-restrict}) are redundant for each $\lambda$ such that $\lambda \neq \mon\lambda$. Equivalently (via complementary slackness), for the relaxed primal LP 
  \begin{align}
    \max_{\sigma \geq 0} \sum_{\lambda \in \profset_n} \left(c_{n\lambda} + \sum_{i \in [N], v \in \valset_n} \epsilon(v_i) \cdot \frac{1}{|V_n|^{N-1}} (u_i(\lambda(v)|v_i) - u_i(\lambda^*_i(v_i),\lambda_{-i}(v_{-i})|v_i))\right) \sigma(\lambda) & \textnormal{ s.t.} \label{opt:relaxed-primal} \\
    \sum_{\lambda \in \profset_n} \sigma(\lambda) & = 1 \tag{$\gamma$}
  \end{align}
  an optimal solution $\sigma$ has support in weakly increasing bidding strategies.
\end{proposition}

\begin{proof}
  The feasible region of (\ref{opt:relaxed-primal}) is the probability simplex $\Delta(\Lambda_n)$, thus any basic solution is equal to some pure bidding profile $\lambda$. Thus it is sufficient to show that for any $\lambda \in \Lambda_n$, if $\lambda \neq \mon\lambda$ then $\mon \lambda$ has strictly higher objective value. Consider any bidding strategies $\lambda \neq \mon\lambda$. Then there exists some buyer $i$ and valuations $v_i^+, v_i^- \in V_n$ such that $v_i^+ > v_i^-$ but $\lambda_i(v_i^-) > \lambda_i(v_i^+)$. Consider the bidding strategies $\lambda'$ obtained by swapping buyer $i$'s bids for types $v^\pm_i$, i.e.
  \begin{equation*}
    \lambda'_j(v_j) = \begin{cases}
      \lambda_j(v_j) & j \neq i \textnormal{ or } v_j \notin \{v^+_i, v^-_i\}, \\
      \lambda_i(v_i^-) & j = i, v_j = v^+_i, \\
      \lambda_i(v_i^+) & j = i, v_j = v^-_i.
    \end{cases} 
  \end{equation*}
  We will show that 
  \begin{equation}\label{eq:objective-disc} \tag{$\ast$} \sum_{i \in [N], v \in \valset_n} \epsilon(v_i) \cdot \frac{1}{|V_n|^{N-1}} (u_i(\lambda(v)|v_i) - u_i(\lambda^*_i(v_i),\lambda_{-i}(v_{-i})|v_i)) \equiv \sum_{i \in [N], v_i \in V_n} \mu_i(v_i|\lambda)\end{equation}
  strictly increases when we change $\lambda \rightarrow \lambda'$. As any permutation is obtained via compositions of pairwise swaps, we conclude that the (\ref{eq:objective-disc}) is strictly higher for $\mon\lambda$.

  To wit, by our assumption of a uniform prior distribution over $V_n$, note that for any $j \neq i$, or if $j = i$ then for any $v_i \neq v_i^\pm$, $\mu_j(v_j|\lambda) = \mu_j(v_j|\lambda')$; buyer $i$ makes the same expected bid distribution against any buyer $j \neq i$, so for such buyers $j$ their utilities for following $\sigma$ or deviating to bidding $\lambda^*_j$ do not change. Therefore, the change in (\ref{eq:objective-disc}) when we let $\lambda \rightarrow \lambda'$ equals, up to a multiplicative factor $\frac{1}{|V_n|^{N-1}}$, 
  \begin{align}
    & \quad\quad \mu_i(v^+_i|\lambda') + \mu_i(v^-_i|\lambda') - \mu_i(v^+_i|\lambda) - \mu_i(v^-_i|\lambda) \label{eq:diff} \tag{$\ast\ast$} \\
    & = \sum_{v_{-i} \in V_n^{N-1}} \Bigg[ \epsilon(v^+_i) \cdot \left( x_i(\lambda(v^-_i,v_{-i}))v^+_i - p_i(\lambda(v^-_i,v_{-i})) - x_i(\lambda^*_i(v^+_i),\lambda_{-i}(v_{-i}))v^+_i + p_i(\lambda^*_i(v^+_i),\lambda_{-i}(v_{-i})) \right)  \nonumber \\ 
    & + \epsilon_i(v^-_i) \cdot \left( x_i(\lambda(v^+_i,v_{-i}))v^-_i - p_i(\lambda(v^+_i,v_{-i})) - x_i(\lambda^*_i(v^-_i),\lambda(v_{-i}))v^-_i + p_i(\lambda^*_i(v^-_i),\lambda(v_{-i})) \right) \nonumber \\
    & - \epsilon_i(v^-_i) \cdot \left( x_i(\lambda(v^-_i,v_{-i}))v^-_i - p_i(\lambda(v^-_i,v_{-i})) - x_i(\lambda^*_i(v^-_i),\lambda_{-i}(v_{-i}))v^-_i + p_i(\lambda^*_i(v^-_i),\lambda_{-i}(v_{-i})) \right) \nonumber \\ 
    & - \epsilon_i(v^+_i) \cdot \left( x_i(\lambda(v^+_i,v_{-i}))v^+_i - p_i(\lambda(v^+_i,v_{-i})) - x_i(\lambda^*_i(v^+_i),\lambda_{-i}(v_{-i}))v^+_i + p_i(\lambda^*_i(v^+_i),\lambda_{-i}(v_{-i})) \right) \Bigg]. \nonumber
  \end{align}
  The $x_i(\lambda^*_i(v^\pm_i),\lambda_{-i}(v_{-i}))$ and $p_i(\lambda^*_i(v^\pm_i),\lambda_{-i}(v_{-i}))$ terms, associated with the expected utility for bidding according to $\lambda^*_i$, of course cancel -- as our modification $\lambda \rightarrow \lambda'$ does not change the bids of buyers other than $i$. To simplify this expression, we contract conditional expected allocation probabilities and expected payments into shorthand, $X_i(b^\pm) = \sum_{v_{-i} \in V_{-i}} x_i(\lambda(v^\mp,v_{-i}))$ and $
      P_i(b^\pm) = \sum_{v_{-i} \in V_{-i}} p_i(\lambda^*(v^\mp,v_{-i}))$. 
  We emphasise that here, $b^+$ corresponds to the higher bid $\lambda_i(v^-)$ and vice versa. Then (\ref{eq:diff}) can be further simplified,
  $$ = (\epsilon(v^+_i)v^+_i - \epsilon(v^-_i)v^-_i) \cdot (X_i(b^+) - X_i(b^-)) + (\epsilon(v^-_i) - \epsilon(v^+_i)) \cdot (P_i(b^+)-P_i(b^-)).$$
  By Proposition \ref{prop:zero-bid}, the higher bid $b^+$ is winning whenever all buyers have valuation $0$, and as a consequence by Assumption \ref{asmptn:monotone}, $P_i(b^+) - P_i(b^-) > 0$ and $X_i(b^+) - X_i(b^-) \geq 0$. Therefore, if $\epsilon(v_i)v_i$ is weakly increasing and $\epsilon(v_i)$ is strictly decreasing, then (\ref{eq:diff}) is strictly positive.
\end{proof}

\section{Bayesian CCE of Continuous Auctions}\label{sec:theory-cont}

Proposition \ref{prop:inc-bids} hints that, in our search for suitable dual solutions to (\ref{opt:generic-dual}), we might consider fixing $\epsilon(v_i) = K/v_i$ for some constant $K > 0$ when attempting to prove uniqueness of BCCE with support in weakly increasing bidding strategies. However, from our numerical results in Section \ref{sec:numerical-results}, we know that most discretised auctions do not admit a unique BCCE in pure strategies. Therefore, the results of Section \ref{sec:structure}, except for Proposition \ref{prop:zero-bid}, cannot be directly applied to obtain dual proofs of vanishing Wasserstein-$2$ distance bounds for families of discretisations of some continuous auction $A$.

To rectify this, we shall analyse the equivalent of the primal LP (\ref{opt:generic-primal}) for the \emph{continuous auction itself}. This results in an \emph{infinite LP}; intuitively, we obtain it by \emph{``simply''} exchanging sums in (\ref{opt:generic-primal}), (\ref{opt:generic-dual}) with integrals. However, working with infinite dimensions poses its own challenges in the specification of a well-posed pair of primal-dual problems. This is because not even weak duality is guaranteed to hold in general for an infinite dimensional pair of primal-dual LPs \cite{grinold1973duality}. Then to certify uniqueness, we need to not only ensure that the dual problem has a solution of value $0$, but also that such a solution does indeed certify a bound on the value of the primal LP.

We thus identify assumptions on the \emph{support of the BCCE} which address these issues; this corresponds to restricting (\ref{opt:generic-primal}) with inequalities $\sigma(\lambda) = 0$ for a set of disallowed bidding strategies. In particular, we first restrict attention to the case when the support consists of symmetric, differentiable and strictly increasing bidding strategies $\lambda$, such that $\lambda(0) = 0$ and the allowed bidding strategies are uniformly bounded with $\max_v \lambda(v) < M$. Then our primal-dual formulation provides proof that for both first-price auctions and all-pay auctions, the only such BCCE places probability $1$ on the canonical equilibrium bidding strategies. 

We then relax the assumption of strictly increasing bids, by leveraging the result of Proposition \ref{prop:inc-bids}. In this case, feasibility of our optimal dual solution will require strictly concave priors for the first-price auction, while uniqueness will keep holding for the BCCE of all-pay auctions under much weaker assumptions. Finally, we conclude this section by showing that the assumption of strictly increasing strategies is indeed necessary for uniqueness of the BCCE of continuous first-price auctions without concavity assumptions on the prior distribution, via examples of first-price auctions with convex priors where there exist BCCE which place probability $< 1$ on the canonical equilibrium.

\subsection{Continuous Bayesian CCE as Feasible Solutions to an Infinite LP}\label{sec:formalism}

\newcommand{\bilin}[2]{\langle #1, #2 \rangle}

An infinite linear program is an optimisation problem with a linear objective, and potentially both infinitely many\footnote{And perhaps uncountably so.} variables and constraints. It is often studied as a problem of optimisation over vector spaces over $\mathbb{R}$ \cite{anderson1987linear,shapiro2001duality}, which is the formalism we adapt and adopt in this paper; we refer the interested reader to \cite{kellerer1988measuretheoretic,clark2003infinite,martin2016slater} for alternative approaches. Whereas most of the literature is concerned with establishing conditions under which strong duality holds and/or approximation algorithms with guarantees exist, we work in a setting in which we first seek to ensure only well-posedness and weak duality\footnote{If the primal-dual framework is poorly specified, weak duality need not hold for infinite dimensional LPs \cite{grinold1973duality}.}. As the canonical equilibrium of the auction provides a feasible primal solution, strong duality then follows by construction of a matching dual solution.

We begin by a quick introduction; the canonical maximisation problem is given
\begin{align}
    \sup_{x \geq 0} \ \bilin{x}{c} & \textnormal{ subject to} \label{opt:inf-canon-max}\\
    Ax & \leq 0 \nonumber \\
    \bilin{x}{\Sigma} & = 1, \nonumber
\end{align}
where $x$ is an element of the real vector space $E$, and $c, \Sigma$ are elements of $E^*$, the dual vector space of $E$. We write $\bilin{\cdot}{\cdot}$ for the bilinear evaluation map between any dual pair of vector spaces. Furthermore, $A : E \rightarrow G^*$ is a linear function mapping elements of $E$ to the elements of $G^*$, the dual of some vector space $G$. Given this data, we write the dual minimisation problem,
\begin{align}
    \inf_{y \geq 0, \gamma} \ \gamma &\textnormal{ subject to} \label{opt:inf-canon-min}\\
    \gamma \Sigma + A^\dagger y  & \geq c. \nonumber
\end{align}
Here, $y$ is taken as an element of the real vector space $G$, $\gamma$ is some real number, and $A^\dagger : G \rightarrow E^*$ is the \emph{adjoint} of $A$, satisfying the equation
\begin{equation}\label{eq:adjoint-cond}
    \bilin{Ax}{y} = \bilin{x}{A^\dagger y}
\end{equation}
for any $x \in E$ and $y \in G$. In this setting, weak duality holds between the problems (\ref{opt:inf-canon-max}) and (\ref{opt:inf-canon-min}). That is to say, for any $x \in E$ feasible for (\ref{opt:inf-canon-max}) and $y \in G$ feasible for (\ref{opt:inf-canon-min}), 
$$
\bilin{x}{c} \leq \bilin{x}{\gamma \Sigma + A^\dagger y} = \gamma \bilin{x}{\Sigma} + \bilin{Ax}{y} \leq \gamma,$$
where the inequalities follow from primal/dual feasibility, and the equality is a consequence of bilinearity of $\bilin{\cdot}{\cdot}$ \& the adjoint condition (\ref{eq:adjoint-cond}). As a result, whenever both problems (\ref{opt:inf-canon-max}) and (\ref{opt:inf-canon-min}) are feasible, the value of (\ref{opt:inf-canon-max}) is at most the value of (\ref{opt:inf-canon-min}).

Recall that our goal in this paper has been to certify that the BCCE of discretised auctions are close to the canonical equilibrium, via using the definition of a BCCE for a finite normal-form game as a feasible solution to an LP and then appealing to the strong duality of linear programming. This motivates us to define a continuous BCCE \emph{with a given support} as the feasible solution to an infinite LP, by adapting the primal-dual pair of finite LPs (\ref{opt:generic-primal}) \& (\ref{opt:generic-dual}) to the infinite dimensional case. By the proof of Proposition \ref{prop:inc-bids} we will also want to work in quantile space or \emph{``inverse prior coordinates''} -- a buyer of \emph{type} $\theta \in [0,1]$ will have valuation $v(\theta) = F^{-1}(\theta)$, and equilibrium bid $\beta(\theta)$, where
\begin{equation}\label{def:eqil-bids}
  \beta(\theta) = \begin{cases}
    \frac{1}{\theta^{N-1}} \int_0^\theta (N-1) \theta'^{N-2} v(\theta') d\theta' & \textnormal{ for a first-price auction, and} \\
    \int_0^\theta (N-1) \theta'^{N-2} v(\theta') d\theta' & \textnormal{ for an all-pay auction.}
\end{cases}\end{equation}
We will assume that the prior distribution $F$ has a probability density function, and $v : [0,1] \rightarrow \mathbb{R}$ is continuous and strictly increasing on $[0,1]$, and infinitely smooth on $(0,1]$. Our definition of a continuous BCCE then:

\begin{definition}\label{def:cont-BCCE}
  Let $\Lambda \subseteq \times_{i \in [N]} B^{[0,1]}$, equipped with a topology under which 
  $$ \int_{[0,1]^{N-1}} d\theta_{-i} \cdot \left(u_i(b'_i,\lambda_{-i}(\theta_{-i})|v(\theta)) - u_i(\lambda(\theta)|v(\theta)) \right) $$
  is measurable over $(\lambda,\theta_i,b'_i) \in \Lambda \times [0,1] \times B$. Then for $\rho : [0,1] \rightarrow \mathbb{R}_+$, a \textbf{continuous $\rho$-Bayesian coarse correlated equilibrium} of a continuous auction $A = (N,\mathcal{V},F^N,\mathcal{B},u)$ \textbf{with support in} $\Lambda$ is a Borel probability measure $\sigma$ on $\Lambda$ such that for any player $i$, any type $\theta_i \in [0,1]$ and any bid $b'_i \in B$,
  $$\int_\Lambda d\sigma(\lambda) \cdot \left( U_i(b'_i,\lambda_{-i}|\theta_i) - U_i(\lambda|\theta_i) \right) \leq \rho(\theta_i),$$
  where we denote $U_i(\lambda|\theta_i) = \int_{[0,1]^{N-1}} d\theta_{-i} \cdot u_i(\lambda(\theta)|v(\theta_i))$ as shorthand.
\end{definition}

Note that we need to be careful about concerns of measurability in this setting. The requirement that $\sigma$ is Borel over $\Lambda$ addresses this partially, however, our main requirement is the well-posedness of the infinite LP 
\begin{align}
  \sup_{\sigma \geq 0} \int d\sigma(\lambda) \cdot c(\lambda) \textnormal{ subject to}& \label{opt:inf-LP-primal-gen}\\
  \int_\Lambda d\sigma(\lambda) & = 1 \tag{$\gamma$} \label{cons:inf-gamma} \\
  \int_\Lambda d\sigma(\lambda) \cdot \left( U_i(b'_i,\lambda_{-i}|\theta_i) - U_i(\lambda|\theta_i) \right) & \leq \rho(\theta_i) \tag{$\epsilon_i(\theta_i,b'_i)$} \ \forall \ i \in [N], \theta_i \in [0,1], b'_i \in B, \label{cons:inf-eq}
\end{align}
and whether we may bound its value via a solution to the infinite dual LP 
\begin{align}
  \inf_{\epsilon \geq 0, \gamma} \gamma + \sum_{i \in [N]} \int_{[0,1] \times B} d\epsilon_i(\theta_i,b'_i) \cdot \rho(\theta_i) \textnormal{ subject to} &  \label{opt:dual-inf-2}\\
  \gamma + \sum_{i \in [N]} \int_{[0,1] \times B} d\epsilon_i(\theta_i,b'_i) \cdot  \left( U_i(b'_i,\lambda_{-i}|\theta_i) - U_i(\lambda|\theta_i) \right) & \geq c(\lambda) \label{cons:inf-dual-gen} \tag{$\sigma(\lambda)$}
  \ \forall \ \lambda \in \Lambda. 
\end{align}
Weak duality in our setting then holds whenever the adjoint condition (\ref{eq:adjoint-cond}) does, which in our setting is equivalent to 
\begin{align}
    & \int_\Lambda d\sigma(\lambda) \cdot \sum_{i \in [N]} \int_{[0,1] \times B} d\epsilon_i(\theta_i,b'_i) \cdot  \left( U_i(b'_i,\lambda_{-i} | \theta_i) - U_i(\lambda|\theta_i) \right) \label{cond:fubini} \\
    = & \sum_{i \in [N]} \int_{[0,1] \times B} d\epsilon_i(\theta_i,b'_i) \cdot \int_\Lambda d\sigma(\lambda) \cdot \left( U_i(b'_i,\lambda_{-i} | \theta_i) - U_i(\lambda|\theta_i) \right). \nonumber
\end{align}
Ensuring (\ref{cond:fubini}) necessitates that each iterated integral is well-defined and that the Fubini-Tonelli theorem holds for the measure $\sigma$ on $\Lambda$ and the measure induced by the dual solution $\epsilon$ on $[0,1]$. This will require addressing on a case-by-case basis, conditional on the support $\Lambda$ of our continous BCCE.

The final question we address in this section is whether the extended infinite LP we study is an appropriate notion of BCCE for a continuous game of incomplete information, where the type space might be infinite. The literature on Bayesian (C)CE has up so far has overwhelmingly studied incomplete information games with \emph{finite} type space \cite{HST15,bergemann2016bce,Fujii2023,forges1993five,forges2023correlated}. However, we remark that consideration of infinite bidding and / or type spaces is not totally unprecedented in literature; \citet{FLN16} construct a CCE for the complete information first-price auction on an infinite bidding space, and \citet{bergemann2017first} consider the Bayes correlated equilibria (in the sense of \cite{bergemann2016bce}) for the first-price auction where both the valuation and the bidding spaces can be infinite. Moreover, even though equilibrium existence theorems may fail when the valuation and bidding spaces are infinite, we are guaranteed one continuous BCCE always exists: the probability distribution which assigns probability $1$ to the canonical symmetric equilibrium. Finally, we note that our notion of a continuous BCCE allows us to recover distance bounds for the BCCE of discretised auctions in Section \ref{sec:cont-to-disc}. Thus, our notion of continuous BCCE turns out to be the appropriate object to facilitate analysis of the BCCE of discretised auctions.

\subsection{Uniqueness of BCCE in Strictly Monotone Strategies}\label{sec:uniqueness-strict}

With our formalism defined Section \ref{sec:formalism}, we may proceed with our proofs of uniqueness, conditional on the support of a BCCE. We first consider restrictions on $\Lambda$, such that uniqueness of BCCE hold with minimal assumptions on the prior distribution.

\begin{assumption}\label{asmptn:sup}
  For the continuous auction $A$ and some $M > 0$, we restrict $\Lambda \subseteq \times_{i\in N} B^{[0,1]}$ to the set of bidding strategies $\lambda$ satisfying:
  \begin{enumerate}
    \item (Symmetry) For each buyer $i,j$, $\lambda_i = \lambda_j$.
    \item (Smoothness \& Strict Monotonicity) For each buyer $i$, $\lambda_i$ is strictly increasing and differentiable.
    \item (Normalisation) For each buyer $i$, $\lambda_i(0) = 0$.
    \item (Uniform Boundedness) For each buyer $i$, $\lambda_i < M$.  
  \end{enumerate}
  Given such a $\lambda$, by the symmetry assumption, we shall for notational convenience identify it with the bidding strategy of any single buyer. 
\end{assumption}

Assumption \ref{asmptn:sup}.(3) is justified by Proposition \ref{prop:zero-bid}, while Assumption \ref{asmptn:sup}.(4) holds whenever $B = [0,M]$ for some $M \in \mathbb{R}_+$, which we shall assume is the case. It is thus the symmetry, smoothness and strict monotonicity assumptions which are particularly restrictive. We remark that Algorithm \ref{alg:self-play} is a no-regret learning algorithm whose history of play provides a BCCE with symmetric support for the discretised auction; or namely, there exists learning algorithms whose output have symmetric support. It will turn out, however, that the assumption of strict monotonicity is crucial to ensuring uniqueness of BCCE without additional assumptions on the prior distribution.

It remains to precisely specify the ingredients of our primal-dual linear programs. First, keeping in mind the result of Proposition \ref{prop:dual-form}, we will relax the primal LP (\ref{opt:inf-LP-primal-gen}) by eliminating the constraints (\ref{cons:inf-eq}) whenever $b'_i \neq \beta(\theta_i)$. We then need to fix our primal objective $c(\lambda)$. For discretised auctions, we were able to use indicator functions to show uniqueness. However, in the continuous setting, we can always find differentiable and strictly increasing sequence $\lambda_n \rightarrow \beta$ such that the LHS of the constraints (\ref{cons:inf-dual-gen}) for $\lambda_n$ tend to $0$ as $n \rightarrow \infty$. The objective must then be carefully chosen as a positive measurable function $c$ on $\Lambda$ such that we can find a dual solution $\epsilon$, for which given any such sequence $\lambda_n$, 
$$c(\lambda_n) - \sum_{i \in [N]} \int_{0}^1 d\theta_i \cdot \epsilon_i(\theta_i) \cdot  \left( U_i(\beta,\lambda_{n(-i)} | \theta_i) - U_i(\lambda_n|\theta_i) \right) \rightarrow 0.$$
While any function $c : \lambda \mapsto c(\lambda)$ which is continuous with respect to the $\sup$-norm on $\profset$ will rectify this, for the moment we shall simply fix our objective 
$$ c(\lambda) = \sum_{i \in [N]} \int_0^1 d\theta_i \cdot \alpha(\theta_i|\lambda),$$
where
$$ \alpha(\theta_i|\lambda) = \theta_i^{N-2} \cdot \begin{cases}
    \frac{(\lambda_i(\theta_i) - \beta(\theta_i))^2}{2} & \lambda_i(\theta_i) \leq \beta(1) \\
    \frac{(\beta(1) - \beta(\theta_i))^2}{2} + (\beta(1)-\beta(\theta_i)) \cdot (\lambda_i(\theta_i) - \beta(1)) & \lambda_i(\theta_i) > \beta(1).
\end{cases}$$
Whenever the range of all $\lambda_i$ are contained within that of the canonical equilibrium bidding strategies $\beta$, the objective equals a weighted expectation of the Wasserstein-$2$ distance between $\lambda$ and $\beta$; and in the two buyer setting, it equals the usual Wasserstein-$2$ distance. The function $\alpha(\theta_i|\lambda)$ then needs to be extended linearly beyond $\beta(1)$ for technical reasons. Note that given any continuous $\lambda_i : [0,1] \rightarrow B$, $c(\lambda) = 0$ if and only if $\lambda_i = \beta$. Moreover, the objective $c(\lambda)$ and the constraints $\epsilon_i(\theta_i, \beta(\theta_i))$ are invariant under permulations $i \mapsto p(i)$. As a consequence, leveraging the symmetry in the resulting dual LP we shall be able to fix $\epsilon_i(\theta_i,\beta(\theta_i)) = \epsilon(\theta_i)$.

What remains is the matters of measurability to ensure that weak duality holds between the resulting primal and dual problems. We thus turn our attention to the identification of the relevant vector spaces for our problem. Fubini-Tonelli theorem necessitates that both $\sigma$ on $\Lambda$ and the measure on $[0,1]$ induced by $\epsilon$ are $\sigma$-finite. For the measure $\sigma$, we make the assumption that $\sigma$ is a Borel measure with respect to the sup-norm on $\Lambda$. Meanwhile, we will want to take $\epsilon$ to be the set of non-negative bounded continuous functions on $[0,1]$. But, while this works for the all-pay auction, for the first-price auction we will need a dual solution $\epsilon$ which diverges like $1/\theta_i$ as $\theta_i \downarrow 0$. Such a measure is not $\sigma$-finite; to address this, we shall absorb the factor $1/\theta_i$ into the constraints, and take $\epsilon$ to be a uniformly continuous function on $(0,1]$. 

In sum, for the first-price auction, our infinite primal LP is 
\begin{align}
  \sup_{\sigma \geq 0} \int d\sigma(\lambda) \cdot \sum_{i \in [N]} \int_0^1 d\theta_i \cdot \alpha(\theta_i|\lambda) \textnormal{ subject to}& \label{opt:inf-FP-strict}\\
  \int_\Lambda d\sigma(\lambda) & = 1 \tag{$\gamma$} \label{cons:inf-gamma-fps} \\
  \int_\Lambda d\sigma(\lambda) \cdot \frac{1}{\theta_i}\left( U_i(\beta,\lambda_{-i}|\theta_i) - U_i(\lambda|\theta_i) \right) & \leq 0 \tag{$\epsilon_i(\theta_i)$} \ \forall \ i \in [N], \theta_i \in (0,1], \label{cons:inf-eq-fps}
\end{align}
with a dual infinite LP 
\begin{align}
  \inf_{\epsilon \geq 0, \gamma} \gamma \textnormal{ subject to} &  \label{opt:dual-inf-FP-strict}\\
  \gamma + \sum_{i \in [N]} \int_0^1 d\theta_i \cdot \epsilon(\theta_i) \cdot \frac{1}{\theta_i} \left( U_i(\beta,\lambda_{-i}|\theta_i) - U_i(\lambda|\theta_i) \right) & \geq \sum_{i \in [N]} \alpha(\theta_i|\lambda) \label{cons:inf-dual-gen-fps} \tag{$\sigma(\lambda)$}
  \ \forall \ \lambda \in \Lambda. 
\end{align}
Meanwhile, for the all-pay auction, we consider the infinite primal LP 
\begin{align}
  \sup_{\sigma \geq 0} \int d\sigma(\lambda) \cdot \sum_{i \in [N]} \int_0^1 d\theta_i \cdot \alpha(\theta_i|\lambda) \textnormal{ subject to}& \label{opt:inf-AP-strict}\\
  \int_\Lambda d\sigma(\lambda) & = 1 \tag{$\gamma$} \label{cons:inf-gamma-aps} \\
  \int_\Lambda d\sigma(\lambda) \cdot \left( U_i(\beta,\lambda_{-i}|\theta_i) - U_i(\lambda|\theta_i) \right) & \leq 0 \tag{$\epsilon_i(\theta_i)$} \ \forall \ i \in [N], \theta_i \in [0,1], \label{cons:inf-eq-aps}
\end{align}
with a dual infinite LP
\begin{align}
  \inf_{\epsilon \geq 0, \gamma} \gamma \textnormal{ subject to} &  \label{opt:dual-inf-AP-strict}\\
  \gamma + \sum_{i \in [N]} \int_0^1 d\theta_i \cdot \epsilon(\theta_i) \left( U_i(\beta,\lambda_{-i}|\theta_i) - U_i(\lambda|\theta_i) \right) & \geq \sum_{i \in [N]} \alpha(\theta_i|\lambda) \label{cons:inf-dual-gen-aps} \tag{$\sigma(\lambda)$}
  \ \forall \ \lambda \in \Lambda. 
\end{align}

We would like to now show that weak duality holds for the primal-dual pairs (\ref{opt:inf-FP-strict})-(\ref{opt:dual-inf-FP-strict}) and (\ref{opt:inf-AP-strict})-(\ref{opt:dual-inf-AP-strict}). First, by Assumption \ref{asmptn:sup}, for any $\lambda \in \Lambda$, if $A$ is a first-price auction we have 
$$
        U_i(\beta, \lambda_{-i}|\theta_i) = P_\beta(\lambda,\theta_i)^{N-1}(v(\theta_i) - \beta(\theta_i) )  \textnormal{ and } U_i(\lambda|\theta_i) = \theta_i^{N-1} (v(\theta_i) - \lambda(\theta_i)),
$$
whereas if $A$ is an all-pay auction,
$$
        U_i(\beta, \lambda_{-i}|\theta_i) = P_\beta(\lambda,\theta_i)^{N-1}v(\theta_i) - \beta(\theta_i)  \textnormal{ and } U_i(\lambda|\theta_i) = \theta_i^{N-1} v(\theta_i) - \lambda(\theta_i);
$$
where we let
$$ P_\beta(\lambda,\theta) = \begin{cases}
  \lambda^{-1}\beta(\theta_i) & \beta(\theta_i) \leq \lambda(1), \\
  1 & \textnormal{otherwise}. \end{cases}$$
We first prove that the difference $U_i(\beta, \lambda_{-i}|\theta_i) - U_i(\lambda|\theta_i)$ is continuous with respect to the product topology on $\Lambda \times [0,1]$, where the topology on $\Lambda$ is defined via the $\sup$-norm while the topology of $[0,1]$ is the standard open ball topology induced by the usual Euclidean norm. As the sums and products of continuous functions are continuous, and $\theta_i, \lambda(\theta_i), v(\theta_i),$ and $\beta(\theta_i)$ are so by assumption, we only need to show the continuity of $P_\beta(\lambda,\theta)$. As a consequence, we infer that each of the iterated integrals in (\ref{cond:fubini}) makes sense.

\begin{lemma}\label{lem:measurability}
  Let $\lambda \in \Lambda$ and $\theta \in [0,1]$. Then for any $\varepsilon > 0$, there exists $\delta_\lambda, \delta_\theta > 0$ such that for any $\lambda' \in \Lambda$ and $\theta' \in [0,1]$, if $\| \lambda-\lambda'\|_\infty \equiv \sup_{\hat\theta \in [0,1]} |\lambda(\hat\theta) - \lambda'(\hat\theta)| < \delta_\lambda$ and $|\theta -\theta'| < \delta_\theta$, then $|P_\beta(\lambda,\theta) - P_\beta(\lambda',\theta')| < \varepsilon$.
\end{lemma}

\begin{proof}
  Since $\lambda$ is continuous and strictly increasing on $[0,1]$, $P_\beta(\lambda,\theta)$ is uniformly continuous for $\theta \in [0,1]$. Therefore, there exists $\delta_\theta > 0$ such that for any $\theta' \in [0,1]$, if $|\theta - \theta'| < \delta_\theta$ then $|P_\beta(\lambda,\theta) - P_\beta(\lambda,\theta')| < \varepsilon / 2$. Now, note that as $\lambda$ is a fixed continuous and strictly increasing function, the function $g(\theta) = \lambda(\theta + \varepsilon/2) - \lambda(\theta)$ defined for $\theta \in [0,1-\varepsilon/2]$ is continuous and strictly positive everywhere. As a result, it attains a minimum value $m > 0$ on the interval $[0,1-\varepsilon/2]$.

  Now, suppose that $\lambda^{-1}\beta(\theta) \in [\varepsilon/2,1-\varepsilon/2]$. In this case, if $\|\lambda' - \lambda\|_\infty < \delta_\lambda$, we have
$$
      \lambda'(\lambda^{-1}\beta(\theta) + \varepsilon/2) + \delta_\lambda \geq \lambda(\lambda^{-1}\beta(\theta) + \varepsilon/2) \geq \beta(\theta) + m.
$$
  Here, the first inequality follows immediately from the $\sup$-norm distance between $\lambda'$ and $\lambda$, while the second inequality is from the definition of $g$. A similar line of argument on $\lambda^{-1}\beta(\theta)$ provides 
$$
      \lambda'(\lambda^{-1}\beta(\theta) - \varepsilon/2) - \delta_\lambda \leq \lambda(\lambda^{-1}\beta(\theta) - \varepsilon/2) \leq \beta(\theta) - m.
$$    
  Then, picking $\delta_\lambda < m$, adding/subtracting as suitable $\delta_\lambda$ and applying $\lambda'^{-1}$ to the above inequalities, and using the fact that $\lambda'^{-1}$ is increasing, we see that 
  $$ \lambda^{-1}\beta(\theta) - \varepsilon/2 \leq \lambda'^{-1}(\beta(\theta) - (m-\delta_\lambda) ) \leq \lambda'^{-1}\beta(\theta) \leq \lambda'^{-1}(\beta(\theta) + (m-\delta_\lambda) ) \leq \lambda^{-1}\beta(\theta) + \varepsilon/2,$$
  i.e. $\lambda'^{-1}\beta(\theta) \in [\lambda'^{-1}\beta(\theta) - \varepsilon/2, \lambda'^{-1}\beta(\theta) + \varepsilon/2]$. The cases $\lambda^{-1}\beta(\theta) < \varepsilon / 2$ or $> 1- \varepsilon/2$ as well as the case $P_\beta(\lambda,\theta) = 1$ proceed via analogous arguments. To finalise the proof, note that the bound for $\delta_\lambda$ is independent of $\theta$. Then, for any such $\lambda'$ and $\theta'$, we have 
  $$|P_\beta(\lambda',\theta') - P_\beta(\lambda,\theta)| \leq |P_\beta(\lambda',\theta') - P_\beta(\lambda,\theta')| + |P_\beta(\lambda,\theta') - P_\beta(\lambda,\theta)| < \varepsilon/2 + \varepsilon/2 = \varepsilon.$$
\end{proof}

As a consequence, given $\epsilon$ and $\sigma$, $U_i(\beta, \lambda_{-i}|\theta_i) - U_i(\lambda|\theta_i)$ is measurable with respect to the product measure $d(\sigma \times \epsilon)$ on $\Lambda \times [0,1]$, and so too is $1/\theta_i \cdot [U_i(\beta, \lambda_{-i}|\theta_i) - U_i(\lambda|\theta_i)]$ by Lusin's theorem. With this, we are ready to show the sufficient condition for the Fubini-Tonelli theorem.

\begin{proposition}\label{prop:weak-duality-strict}
  Let $\sigma$ be a Borel measure on $\Lambda$, and $\epsilon : [0,1] \rightarrow \mathbb{R}_+$ a continuous function. Then for a first-price auction, 
  $$ \int_{\profset \times [0,1]} d(\sigma \times \theta_i) \cdot \epsilon(\theta_i) \cdot \frac{1}{\theta_i} \left| U_i(\beta,\lambda_{-i} | \theta_i) - U_i(\lambda|\theta_i) \right| < \infty,$$
  whereas for an all-pay auction, 
  $$ \int_{\profset \times [0,1]} d(\sigma \times \theta_i) \cdot \epsilon(\theta_i) \cdot \left| U_i(\beta,\lambda_{-i} | \theta_i) - U_i(\lambda|\theta_i) \right| < \infty.$$
  As a consequence, (\ref{cond:fubini}) holds, hence weak duality too for the primal-dual pairs (\ref{opt:inf-FP-strict})-(\ref{opt:dual-inf-FP-strict}) and (\ref{opt:inf-AP-strict})-(\ref{opt:dual-inf-AP-strict}).
\end{proposition}

\begin{proof}
  For any fixed $\lambda \in \Lambda$ and $\theta_i \in [0,1]$, we have 
  \begin{align*}
    \frac{1}{\theta_i} \cdot \left| U_i(\beta,\lambda_{-i} | \theta_i) - U_i(\lambda|\theta_i) \right| & \leq \frac{1}{\theta_i} \cdot \left| U_i(\beta,\lambda_{-i} | \theta_i) \right| + \frac{1}{\theta_i} \cdot \left| U_i(\lambda|\theta_i) \right| \\
    & = \frac{1}{\theta_i} \cdot \left| P_\beta(\lambda,\theta_i)^{N-1} \cdot (v(\theta_i) - \beta(\theta_i)) \right| + \frac{1}{\theta_i} \cdot \left| \theta_i^{N-1} \cdot (v(\theta_i)-\lambda(\theta_i)) \right| \\
    & = \frac{1}{\theta_i} \cdot \left| P_\beta(\lambda,\theta_i)^{N-1} \cdot \frac{\theta_i \beta'(\theta_i)}{N-1} \right| + \frac{1}{\theta_i} \cdot \left| \theta_i^{N-1} \cdot (v(\theta_i)-\lambda(\theta_i)) \right| \\
    & \leq \frac{\beta'(\theta_i)}{N-1} + \theta_i^{N-2} \max \{v(\theta_i), \lambda(\theta_i)\} \leq \frac{\beta'(\theta_i)}{N-1} + \theta_i^{N-2} \max \{v(\theta_i), M\}.
  \end{align*}
  The first inequality here is simply the triangle inequality, while the second equality holds by (\ref{def:eqil-bids}). The final inequality in turn follows from the uniform boundedness assumption on $\Lambda$. Now, $[0,1]$ with the measure induced by $\epsilon(\theta_i)$ and $\Lambda$ with the Borel probability measure $\sigma$ are both $\sigma$-finite measurable spaces, and the interated integral is bounded above as
  \begin{align*}
  \int_\profset d\sigma(\lambda) \int_0^1 d\theta_i \cdot \epsilon(\theta_i) \cdot \frac{\left| U_i(\beta,\lambda_{-i} | \theta_i) - U_i(\lambda|\theta_i) \right|}{\theta_i} &
  \leq \int_\profset d\sigma(\lambda) \int_0^1 d\theta_i \cdot \epsilon(\theta_i) \cdot \left( \frac{\beta'(\theta_i)}{N-1} + v(\theta_i) \right) \\
  & \leq  \int_\profset d\sigma(\lambda) \int_0^1 d\theta_i \cdot M' \cdot \left( \frac{\beta'(\theta_i)}{N-1} + v(\theta_i) \right) \\
  & =  M' \cdot \left(\beta(1) + \int_0^1 d\theta_i \cdot \max \{v(\theta_i), M\} \right).
  \end{align*}
  Here, the first inequality holds by the bound we derived, the second holds as $\epsilon(\theta_i)$ is a continuous positive function on the compact interval $[0,1]$ (hence bounded). Since by assumption $v(\theta_i)$ is also bounded, we conclude that the integral is finite. Then by the Fubini-Tonelli theorem, the order of integration may be exchanged, and the integral equals the value of the integral evaluated over the associated product measure.
\end{proof}

As a corollary, we see that feasible dual solutions of value $0$ for the dual problems (\ref{opt:dual-inf-FP-strict}) and (\ref{opt:dual-inf-AP-strict}) respectively provide proofs that any continuous BCCE of the first-price and all-pay auctions with support in strategies satisfying Assumption \ref{asmptn:sup} places probability $1$ on the canonical symmetric equilibrium bidding strategies of the auction.

\begin{corollary}\label{cor:weak-dual-consequence}
  Let $\gamma = 0$, $\epsilon : (0,1] \rightarrow \mathbb{R}_+$ be a solution of (\ref{opt:dual-inf-FP-strict}). Then for any continuous BCCE of the first-price auction with support in strategies satisfying Assumption \ref{asmptn:sup}, for any $\lambda \neq \beta$, there exists an open set $L \subseteq \Lambda$ containing $\lambda$ such that $\sigma(L) = 0$. An analogous statement also holds for the all-pay auction.
\end{corollary}

\begin{proof}
  Since $\lambda \neq \beta$, $\| \lambda - \beta\|_\infty = \delta$ for some $\delta > 0$. Let $L = \{\lambda' \in \lambda \ | \ \| \lambda - \lambda' \|_\infty < \delta/3 \}$. Now, $\lambda$ and $\beta$ are both continuous, hence the sup-norm bound is attained; i.e. there exists $\hat\theta \in [0,1]$ such that $|\lambda(\hat\theta) - \beta(\hat\theta)| = \delta$. By the continuity of $|\lambda(\cdot) - \beta(\cdot)|$, for some closed interval $I \ni \hat\theta$, $|\lambda(\theta) - \beta(\theta)| \geq 2\delta/3$ for any $\theta \in I$. As a consequence, for any $\lambda' \in L$ and any $\theta \in I$, $| \lambda'(\theta) - \beta(\theta)| > \delta/3$. This implies that for some $\Delta > 0$, $c(\lambda') > \Delta$ for any $\lambda' \in L$. In particular, $\int_\Lambda d\sigma(\lambda) \cdot c(\lambda) \geq \sigma(L) \cdot \Delta$. However, by weak duality, $\int_\Lambda d\sigma(\lambda) \cdot c(\lambda) \leq \gamma = 0$. Therefore, we conclude that $\sigma(L) = 0$.
\end{proof}

All that remains for our uniqueness results in this section is thus the construction of dual solutions for (\ref{opt:dual-inf-FP-strict}) and (\ref{opt:dual-inf-AP-strict}). First, we note that without loss of generality we may fix for the first-price auction 
\begin{equation}\label{eqn:cont-gamma}
\gamma = \sup_{\lambda \in \Lambda} \sum_{i \in N} \int_0^1 d\theta_i \cdot \left( \alpha(\theta_i|\lambda) - \epsilon(\theta_i) \cdot \frac{1}{\theta_i} \left( U_i(\beta,\lambda_{-i} | \theta_i) - U_i(\lambda|\theta_i) \right) \right).
\end{equation}
From here our strategy is to show -- via variational calculus -- that given the dual solution, for any bidding profile $\lambda \in \Lambda$, the right-hand side of (\ref{eqn:cont-gamma}) increases when we let $\lambda_i \leftarrow \alpha \lambda_i + (1-\alpha) \beta$ for any $\alpha > 0$.

\begin{theorem}\label{thm:thm-fp-uniq}
  Suppose for a first-price auction that $\Lambda$ is a uniformly bounded set of symmetric, strictly increasing, differentiable \& normalised bidding strategies containing the equilibrium bidding strategies $\beta$. Then $\epsilon(\theta_i) = \beta(1)-\beta(\theta_i)$ is a solution to (\ref{opt:dual-inf-FP-strict}) of value $0$. As a consequence, if a Borel measure $\sigma$ on $\Lambda$ is a continuous BCCE, it places probability $1$ on the equilibrium bidding strategies~$\beta$.
\end{theorem}

\begin{proof}
  The final part of the theorem statement follows from Corollary \ref{cor:weak-dual-consequence}, so we show the construction of the dual solution. For $0 \leq \underline{\theta} < \overline{\theta} \leq 1$, denote by $\gamma(\lambda|\underline{\theta},\overline{\theta})$,
  $$ \gamma(\lambda|\underline{\theta},\overline{\theta}) = \int_{\underline{\theta}}^{\overline{\theta}} d\theta_1 \cdot \left( \alpha(\theta_1|\lambda) - \epsilon(\theta_1) \cdot \frac{1}{\theta_1} \cdot  \left( U_1(\beta,\lambda_{-1} | \theta_1) - U_1(\lambda|\theta_1) \right) \right). $$
  We want to, for some dual solution $\epsilon$, show that $\gamma(\lambda|0,1)$ is maximised when $\lambda_i = \beta$ for any buyer $i$. Towards this end, it is sufficient to show that $\beta - \lambda$ is a weak ascent direction.\footnote{We actually show a stronger condition, that we may find a dual solution $\epsilon$ such that $\gamma(\lambda|0,1) = 0$ for any $\lambda \in \Lambda$. Thus, we actually simultaneously construct our primal objective here.}

  Divide the interval $[0,1]$ into subintervals, $[0,1] = \cup_{k} [\underline{\theta}_k,\overline{\theta}_k]$, such that one of the following hold for any $k$:
  \begin{enumerate}
    \item $\overline{\theta}_k < 1$, or
    \item $\overline{\theta}_k = 1$ and $\beta(\theta) \geq \lambda(\theta)$ for any $\theta \in [\underline{\theta},1]$, or
    \item $\overline{\theta}_k = 1$ and $\beta(\theta) \leq \lambda(\theta)$ for any $\theta \in [\underline{\theta},1]$,
  \end{enumerate}
  where for each subinterval $\beta(\underline{\theta}_k) = \lambda(\underline{\theta}_k)$, and also $\beta(\overline{\theta}_k) = \lambda(\overline{\theta}_k)$ so long as $\overline{\theta}_k \neq 1$. We remark that we can pick $k$ to be finite.

  To establish that $\beta - \lambda$ is a ascent direction, proceed by case analysis. For a fixed interval $k$, first suppose that $\overline{\theta}_k < 1$. Taking the first-order variation of $\gamma(\lambda|\underline{\theta}_k,\overline{\theta}_k)$, we get
  \begin{align*}
      & \delta\gamma(\lambda | \underline{\theta}_k , \overline{\theta}_k ) = \int_{\underline{\theta}_k}^{\overline{\theta}_k} d\theta \cdot \Bigg[\theta^{N-2}\cdot (\lambda(\theta)-\beta(\theta))\cdot  \delta\lambda(\theta) \ldots \\ &  + \epsilon(\theta) \cdot \frac{(N-1)\lambda^{-1}\beta(\theta)^{N-2}}{\lambda' \lambda^{-1} \beta(\theta) \cdot \theta}\cdot  (v(\theta) - \beta(\theta))\cdot  \delta\lambda(\lambda^{-1}\beta(\theta)) - \epsilon(\theta) \cdot \theta^{N-2} \cdot \delta\lambda(\theta) \Bigg].
  \end{align*}
  We remark that here, $\lambda'(\theta) = d\lambda(\theta)/d\theta$, and for notational simplicity we omit parentheses; e.g. $\lambda'\lambda^{-1}\beta(\theta) = \lambda'(\lambda^{-1}(\beta(\theta)))$ and $\lambda^{-1}\beta(\theta)^{N-2} = \lambda^{-1}(\beta(\theta))^{N-2}$. We want all term of the integrand to be multiplied by $\delta\lambda(\theta)$, the pointwise variation in $\lambda$. Therefore, for the term 
  $$ \int_{\underline{\theta}_k}^{\overline{\theta}_k} d\theta \cdot \epsilon(\theta) \cdot \frac{(N-1)\lambda^{-1}\beta(\theta)^{N-2}}{\lambda' \lambda^{-1} \beta(\theta) \cdot \theta} \cdot (v(\theta) - \beta(\theta)) \cdot \delta\lambda(\lambda^{-1}\beta(\theta)) $$ 
  we make a change of variables,
  \begin{align*}
      \theta' & = \lambda^{-1} \beta (\theta), \\
      d\theta' & = \frac{\beta'(\theta)}{\lambda'\lambda^{-1}\beta(\theta)} d\theta.
  \end{align*}
  By the assumption, we have $\underline{\theta}_k = \lambda^{-1}\beta(\underline{\theta}_k)$ and $\overline{\theta}_k = \lambda^{-1}\beta(\overline{\theta}_k)$. Moreover, for the first price auction, we have 
  $$ \theta \beta'(\theta) = (N-1) \cdot (v(\theta)-\beta(\theta)).$$
  We thus have, after making the change of variables and relabelling $\theta' \leftarrow \theta$,
  \begin{align*}
      \delta\gamma(\lambda | \underline{\theta}_k , \overline{\theta}_k ) = \int_{\underline{\theta}_k}^{\overline{\theta}_k} d\theta \cdot \delta\lambda(\theta) \cdot \left[\theta^{N-2}\cdot(\lambda(\theta)-\beta(\theta)) + \epsilon\beta^{-1}\lambda(\theta) \cdot \theta^{N-2} - \epsilon(\theta) \cdot \theta^{N-2} \right].
  \end{align*}
  We want to pick $\epsilon(\theta)$ to be strictly decreasing. This can be achieved by setting $\epsilon(\theta) = K - L \cdot \beta(\theta)$ for some constants $K,L$ to be determined later. Then note that our choice of $\epsilon(\theta)$ satisfies $\epsilon\beta^{-1}\lambda(\theta) = K-L\lambda(\theta)$.
  Hence, $\epsilon\beta^{-1}\lambda(\theta)- \epsilon(\theta) = L \cdot (\beta(\theta) - \lambda(\theta))$.
  As a consequence, the variation in direction $\beta - \lambda$ equals,
  \begin{align*}
     & \int_{\underline{\theta}_k}^{\overline{\theta}_k} d\theta \cdot (\beta(\theta)-\lambda(\theta)) \cdot \left[\theta^{N-2}\cdot(\lambda(\theta)-\beta(\theta)) + \epsilon\beta^{-1}\lambda(\theta) \cdot \theta^{N-2} - \epsilon(\theta) \cdot \theta^{N-2} \right] \\
      = \ & \int_{\underline{\theta}_k}^{\overline{\theta}_k} d\theta \cdot \theta^{N-2} \cdot \left[-(\lambda(\theta)-\beta(\theta))^2 + L \cdot (\lambda(\theta)-\beta(\theta))^2 \right] \\
      = \ & \int_{\underline{\theta}_k}^{\overline{\theta}_k} d\theta \cdot \theta^{N-2} \cdot (\lambda(\theta) - \beta(\theta))^2 \cdot \left[ -1 + L \right].
  \end{align*}
  Letting $L \geq 1$, we thus have the desired result. 

  Now suppose instead that $\overline{\theta}_k = 1$, and moreover, $\beta(1) \neq \lambda(1)$. We have two cases to consider; first suppose that $\lambda(1) > \beta(1)$ (and hence, by assumption, $\lambda(\theta) \geq \beta(\theta)$ for any $\theta > \underline{\theta}_k$). 
  In this case, we have $\lambda^{-1}\beta(1) < 1$, and 
  \begin{align*}
      \delta\gamma(\lambda | \underline{\theta}_k , 1 ) & = \int_{\underline{\theta}_k}^{1} d\theta \cdot \Bigg[\frac{\delta \alpha(\theta|\lambda)}{\delta\lambda} \cdot  \delta\lambda(\theta) \ldots \\ & + \epsilon(\theta) \cdot \frac{(N-1)\lambda^{-1}\beta(\theta)^{N-2}}{\lambda' \lambda^{-1} \beta(\theta) \cdot \theta} \cdot (v(\theta) - \beta(\theta)) \cdot \delta\lambda(\lambda^{-1}\beta(\theta)) - \epsilon(\theta) \cdot \theta^{N-2} \cdot \delta\lambda(\theta) \Bigg] \\
      & = \int_{\lambda^{-1}\beta(1)}^{1} d\theta \cdot \delta\lambda(\theta) \cdot \left[\theta^{N-2} \cdot  (\beta(1) - \beta(\theta))  - \epsilon(\theta) \cdot \theta^{N-2} \right] \ldots \\
      & + \int_{\underline{\theta}_k}^{\lambda^{-1}\beta(1)} d\theta \cdot \delta\lambda(\theta) \cdot \left[\theta^{N-2}\cdot (\lambda(\theta)-\beta(\theta)) + \epsilon\beta^{-1}\lambda(\theta) \cdot \theta^{N-2} - \epsilon(\theta) \cdot \theta^{N-2} \right].
  \end{align*}
  Note now that under our assumptions, $\beta(\theta) - \lambda(\theta) < 0$ for any $\theta \in (\underline{\theta}_k, \overline{\theta}_k)$. When we consider the variation in direction $\beta-\lambda$, i.e. when we let $\delta\lambda(\theta) = \beta(\theta) - \lambda(\theta)$, the second term is non-negative under our previous assumption on $L$; i.e. whenever $L \geq 1$. In turn, for the first term, we have
  \begin{align*}
      & \int_{\lambda^{-1}\beta(1)}^{1} d\theta \cdot (\beta(\theta) - \lambda(\theta)) \cdot \left[\theta^{N-2} \cdot (\beta(1) - \beta(\theta))  - \epsilon(\theta) \cdot \theta^{N-2} \right] \\  =\ &  \int_{\lambda^{-1}\beta(1)}^{1} d\theta \cdot \delta\lambda(\theta) \cdot \theta^{N-2} \cdot (\lambda(\theta) - \beta(\theta)) \cdot \left[ K - L \beta(\theta) - \beta(1) + \beta(\theta) \right] \\
       =\ &  \int_{\lambda^{-1}\beta(1)}^{1} d\theta \cdot \delta\lambda(\theta) \cdot \theta^{N-2} \cdot (\lambda(\theta) - \beta(\theta)) \cdot \left[ (K-\beta(1)) + (\beta(\theta)-L\beta(\theta)) \right].
  \end{align*}
  As a result, $K = L \cdot \beta(1)$ also ensures that the first term is non-negative.

  Finally suppose that $\lambda(1) < \beta(1)$. In this case, for $\theta > \beta^{-1}\lambda(1)$, a buyer wins with probability $1$ after deviating to $\beta(\theta)$, and we have
  \begin{align*}
      \gamma(\lambda|\underline{\theta}_k, 1)
      & = \int_{\underline{\theta}_k}^1 d\theta \cdot \left( \alpha(\theta|\lambda) + \epsilon(\theta) \cdot \theta^{N-2} \cdot (v(\theta) - \lambda(\theta)) \right) \\ & - \int_{\underline{\theta}_k}^{\beta^{-1}\lambda(1)} d\theta \cdot \frac{\epsilon(\theta)}{\theta} \cdot \lambda^{-1}\beta(\theta)^{N-1} \cdot (v(\theta) - \beta(\theta) ) \\ & - \int_{\beta^{-1}\lambda(1)}^1 d\theta \cdot \frac{\epsilon(\theta)}{\theta} \cdot (v(\theta) - \beta(\theta))
  \end{align*}
  Again, considering its first-order variation, we get
  \begin{align*}
      \delta\gamma(\lambda | \underline{\theta}_k , 1 ) & = \int_{\underline{\theta}_k}^1 d\theta \cdot \left(\theta^{N-2} \cdot (\lambda(\theta)-\beta(\theta)) \cdot \delta\lambda(\theta) - \epsilon(\theta) \cdot \theta^{N-2} \cdot  \delta\lambda(\theta) \right) \\
      & + \int_{\underline{\theta}_k}^{\beta^{-1}\lambda(1)} d\theta \cdot \epsilon(\theta) \cdot \frac{(N-1) \lambda^{-1}\beta(\theta)^{N-2}}{\lambda' \lambda^{-1}\beta(\theta) \cdot \theta}\cdot  (v(\theta) - \beta(\theta)) \cdot  \delta\lambda( \lambda^{-1} \beta(\theta) ) \\
      & - \frac{1}{\beta' \beta^{-1} \lambda(1) \cdot\theta} \cdot \frac{\epsilon\beta^{-1}\lambda(1)}{\beta^{-1}\lambda(1)} \cdot (v\beta^{-1}\lambda(1) - \lambda(1)) \cdot \delta\lambda(1) \\
      & + \frac{1}{\beta' \beta^{-1} \lambda(1) \cdot \theta} \cdot  \frac{\epsilon\beta^{-1}\lambda(1)}{\beta^{-1}\lambda(1)} \cdot (v\beta^{-1}\lambda(1) - \lambda(1)) \cdot \delta\lambda(1).
  \end{align*}
  The latter two terms of course cancel out, and after the change of variables as before, we get
  \begin{align*}
      \delta\gamma(\lambda | \underline{\theta}_k , 1 ) = \int_{\underline{\theta}_k}^1 d\theta \cdot \delta\lambda(\theta) \cdot \left(\theta^{N-2} \cdot (\lambda(\theta)-\beta(\theta))  - \epsilon(\theta) \cdot \theta^{N-2}  
       + \epsilon\beta^{-1}\lambda(\theta) \cdot \theta^{N-2} \right) .
  \end{align*}
  We thus have a reduction from this case to the first case we considered.
\end{proof}

For the all-pay auction, a tight dual solution for (\ref{opt:dual-inf-AP-strict}) can be constructed through identical arguments with the proof of Theorem \ref{thm:thm-fp-uniq}. As a consequence, the uniqueness result also holds for the all-pay auction; we defer its proof to the appendix.

\begin{theorem}\label{thm:thm-ap-uniq}
  Suppose for an all-pay auction that $\Lambda$ is a uniformly bounded set of symmetric, strictly increasing, differentiable \& normalised bidding strategies containing the equilibrium bidding strategies $\beta$. Then $\epsilon(\theta_i) = \theta^{N-2}\cdot (\beta(1)-\beta(\theta_i))$ is a solution to (\ref{opt:dual-inf-AP-strict}) of value $0$. As a consequence, if a Borel measure $\sigma$ on $\Lambda$ is a continuous BCCE, it places probability $1$ on the equilibrium bidding strategies~$\beta$.
\end{theorem}

\subsection{Weakening the Smoothness \& Monotonicity Assumption}\label{sec:weak-uniqueness}

Our proof uniqueness of continuous BCCE results for the first-price and all-pay auctions in Theorems \ref{thm:thm-fp-uniq} \& \ref{thm:thm-ap-uniq} are conditional on the rather restrictive set of assumptions outlined in Assumption \ref{asmptn:sup}. As discussed in the beginning of Section \ref{sec:uniqueness-strict}, even though there are learning algorithms which converge to a BCCE for the discretised auction whose support satisfies Assumptions \ref{asmptn:sup}.(1, 3, 4), the assumption of strictly increasing and differentiable bidding strategies is harder to justify; we are unaware of any learning algorithm which learns such a BCCE in the discretised setting. Therefore, we seek to relax Assumption \ref{asmptn:sup}.(2) to only requiring that buyers' bidding strategies are \emph{weakly} increasing, and differentiable except for finitely many jump discontinuities. This will allow us to convert uniqueness results for the BCCE of a continous auction into near-uniqueness results for the BCCE of its discretisations.

It is enlightening to consider why the dual solution in Theorem \ref{thm:thm-fp-uniq} does not apply when the bidding strategies may be only weakly increasing. Suppose that $\lambda$ is a vector of symmetric bidding strategies such that $\lambda_1$ is differentiable and weakly increasing, and consider a sequence $\lambda_n$ of bidding strategies satisfying Assumption \ref{asmptn:sup}; e.g. by fixing $\lambda_{ni}(\theta_i) = \lambda_i(\theta_i) + \theta_i/n$ for each player $i$. Then our dual solution $\epsilon$ would be guaranteed to be valid, if
\begin{equation}
  \lim_{n\rightarrow \infty} \int_0^1 d\theta_i \cdot \frac{\epsilon(\theta_i)}{\theta_i} \cdot \left( U_i(\beta,\lambda_{n(-i)}|\theta_i) - U_i(\lambda_n|\theta_{i}) \right) \leq \int_0^1 d\theta_i \cdot \frac{\epsilon(\theta_i)}{\theta_i} \cdot \left( U_i(\beta,\lambda_{-i}|\theta_i) - U_i(\lambda|\theta_i) \right) \label{con:good-conv}
\end{equation}
as $\lambda_n$ converges to $\lambda$ in the sup-norm, and with the dual solution $\epsilon$ we had, the left hand side is exactly equal to the objective which is continuous in $\lambda$. As a consequence, the dual constraint associated with $\lambda$ would only weakly bind. However, this cannot be assured for arbitrary distributions, as the following example illustrates.

\begin{example}
  Consider a first-price auction with two buyers, and an inverse prior function $v(\theta) = \sqrt{\theta}$. Then $\beta(\theta) = 2\sqrt{\theta}/3$. Now, consider a sequence of bidding strategies $\lambda_{ni}(\theta_i) = \theta_i/n$ for each buyer $i$. As $n \rightarrow \infty$, we have 
  \begin{equation*}
    \lim_{n\rightarrow \infty} \int_0^1 d\theta_i \cdot \frac{\epsilon(\theta_i)}{\theta_i} \cdot \left( U_i(\beta,\lambda_{n(-i)}|\theta_i) - U_i(\lambda_n|\theta_{i}) \right) = \int_0^1 d\theta_i \cdot \frac{2 (1-\sqrt(\theta_i))}{3\theta_i} \cdot \left( \frac{\sqrt{\theta_i}}{3} - \theta_i^{3/2} \right) = \frac{1}{9},
  \end{equation*}
  which as expected equals $\int_0^1 d\theta_i \cdot \beta(\theta_i)^2/2$. However, if each buyer commits to bidding $0$ (i.e. \emph{at} the limit $\lambda$), then they win the item at zero cost with probability $1/2$, which implies that $U_i(\lambda|\theta_i) = \sqrt{\theta_i}/2$. Therefore,
  \begin{equation*}
    \int_0^1 d\theta_i \cdot \frac{\epsilon(\theta_i)}{\theta_i} \cdot \left( U_i(\beta,\lambda_{-i}|\theta_i) - U_i(\lambda|\theta_i) \right) = \int_0^1 d\theta_i \cdot \frac{2 (1-\sqrt(\theta_i))}{3\theta_i} \cdot \left( \frac{\sqrt{\theta_i}}{3} - \frac{\sqrt{\theta_i}}{2} \right) = -\frac{1}{9}.
  \end{equation*}
  We conclude that our dual solution no longer has value $0$ for (\ref{opt:dual-inf-FP-strict}), when the support $\Lambda$ is expanded to include weakly increasing strategies. 
\end{example}

To circumvent this issue, we will seek an alternate dual solution. By Proposition \ref{prop:inc-bids}, we infer that $\epsilon(\theta_i) \propto \theta_i/v(\theta_i)$ may be a suitable candidate for the first-price auction, and $\epsilon(\theta_i) \propto 1/v(\theta_i)$ for the all-pay auction. We will want to show that such a dual solution guarantees (\ref{con:good-conv}) holds, though to ensure integribility of utilities given such $\epsilon$, we will strengthen our uniform boundedness assumption.

\begin{assumption}\label{asmptn:sup-weak}
  For the continuous auction $A$ and some $M \geq 1$, we restrict $\Lambda \subseteq \times_{i\in N} B^{[0,1]}$ to the set of bidding strategies $\lambda$ satisfying:
  \begin{enumerate}
    \item (Symmetry) For each buyer $i,j$, $\lambda_i = \lambda_j$.
    \item (Smoothness \& Weak Monotonicity) For each buyer $i$, $\lambda_i$ is weakly increasing and continuously differentiable on $(0,1]$.
    \item (No Gross Overbidding) For each buyer $i$, $\lambda_i \leq M v$.  
  \end{enumerate}
\end{assumption}

While Assumption \ref{asmptn:sup-weak}.(2) is weakened compared to Assumption \ref{asmptn:sup}.(2), the no overbidding assumption is stronger than the combination of normalisation and uniform boundedness assumptions. We remark, however, that the no overbidding assumption we make is weaker than the standard no overbidding assumption in literature, where $M$ would be set equal to $1$. Moreover, when we give our \emph{``translation''} results between the BCCE of discretised and continuous auctions, this assumption will be immediately satisfied in the support we consider. Our modified assumptions then allow us to show that (\ref{con:good-conv}) holds for the dual solutions we consider now; their feasibility for strictly increasing bidding strategies imply their feasibilty for weakly increasing ones.

\begin{lemma}\label{lem:v-solution}
  Suppose, that $\lambda$ satisfies Assumption \ref{asmptn:sup-weak}. Then there exists a sequence $\lambda_n$ of strictly increasing bidding strategies satisfying Assumption \ref{asmptn:sup-weak}\footnote{And hence Assumption \ref{asmptn:sup}.}, converging uniformly to $\lambda$ on $[0,1]$. Moreover, for both the first-price auction and the all-pay auction, 
  \begin{equation}
    \lim_{n\rightarrow \infty} \int_0^1 d\theta_i \cdot \frac{1}{v(\theta_i)} \cdot \left( U_i(\beta,\lambda_{n(-i)}|\theta_i) - U_i(\lambda_n|\theta_{i}) \right) \leq \int_0^1 d\theta_i \cdot \frac{1}{v(\theta_i)} \cdot \left( U_i(\beta,\lambda_{-i}|\theta_i) - U_i(\lambda|\theta_i) \right). \label{con:good-conv-ap}
  \end{equation}
\end{lemma}

\begin{proof}
  Fix such a $\lambda$. We want to construct a sequence of $\lambda_n$, satisfying Assumption \ref{asmptn:sup}, such that $\lambda_n$ converges uniformly to $\lambda$, and (\ref{con:good-conv-ap}) holds for the first-price and all-pay auctions. We will first focus on the construction of $\lambda_n$, which is the difficult part of the proof. Intuitively, the $U_i(\lambda_n|\theta_{i})$ term turns out to be non-problematic with our choice of $\epsilon$, while the term $U_i(\beta,\lambda_{n(-i)}|\theta_i)$ behaves differently only when the pre-image of $\beta(\theta)$ under $\lambda$ is not a singleton, i.e. when $\lambda^{-1}\beta(\theta)$ is not defined. However, as $\beta$ is strictly increasing, the set of such $\theta$ has measure zero, so the increasing flatness of $\lambda_n$ should not affect the integral. Our construction makes this argument precise.
  
  \newcommand{\I}{\mathcal{I}}

  For each $\underline{\theta} < \overline{\theta}$, call $(\underline{\theta},\overline{\theta})$ a flat interval if $\lambda(\underline{\theta}) = \lambda(\overline{\theta})$. Consider the union $I$ of all flat intervals; as a union of open intervals in $\mathbb{R}$, $I$ is an open set, and $I \subseteq [0,1]$. Together, this implies that $I$ is the union of countably many disjoint intervals, $I = \cup_{k \in \I} (\underline{\theta}_k,\overline{\theta}_k)$ for some countable index set $\I$. By flatness, we shall denote by $\ell_k$ the value of $\lambda$ over the interval $(\underline{\theta}_k,\overline{\theta}_k)$.

  For each $n \in \mathbb{N}$, we will fix a $\Delta_{kn}$ which describes how much we cut about the interval $(\underline{\theta}_k, \overline{\theta}_k)$. First, as a uniform bound, we will want $\Delta_{kn} < 2^{-k-1}/n$. Since $\lambda$ is continuous on the compact interval $[0,1]$ it is uniformly continuous, and there exists $\delta_{kn}$ such that for any $|\theta - \theta'| \leq \delta_{kn}$, $|\lambda(\theta) - \lambda(\theta')| \leq 2^{-k-2}/n$. We will also demand that $\Delta_{kn} < \delta_{kn}$.
  
  Now if $\ell_k$ is not in the image $\beta([0,1])$, then $\ell_k > \beta(1)$, and by continuity of $\lambda$ we will pick $\Delta_{kn}$ such that $\lambda(\underline{\theta}_k - \Delta_{kn}) \notin \beta([0,1])$. If instead $\ell_k = \beta(\gamma_k)$ for some $\gamma_k \in [0,1]$, note that $\beta$ maps open sets to open sets of $[0,\beta(1)]$\footnote{With respect to the open ball topology it inherits from $\mathbb{R}$.} as it is an invertible continuous function, and since $\beta$ is increasing $\beta((\gamma_k-2^{-k}/n,\gamma_k+2^{-k}/n))$ is an interval strictly containing $(\underline{\theta}_k,\overline{\theta}_k)$. We shall then require that $(\underline{\theta}_k-\Delta_{kn},\overline{\theta}_k + \Delta_{kn}) \cap [0,1] \subseteq \beta((\gamma_k-2^{-k}/n,\gamma_k+2^{-k}/n))$, and let 
  $$I_n = (0,1) \cap \left(\cup_{k \in \I} (\underline{\theta}_k-\Delta_{kn},\overline{\theta}_k + \Delta_{kn}) \right).$$ 
  $I_n$ is a union of open intervals, and thus $I_n = \cup_{k \in \I_n} (\underline{\varphi}_k, \overline{\varphi}_k)$ is also a countable union of disjoint intervals indexed by some set $\I_n$. For $\lambda_n$, we shall fix $\lambda_n(\theta) = \lambda(\theta)$ for each $\theta \in [0,1] \setminus I_n$. Otherwise, $\theta \in (\underline{\varphi}_k, \overline{\varphi}_k)$ for some $k \in I_n$, and we will construct $\lambda_n(\theta)$ by setting 
  $$\lambda_n'(\theta) = \frac{\lambda'(\underline{\varphi}_k)(\overline{\varphi}_k-\theta) + \lambda'(\overline{\varphi}_k)(\theta-\underline{\varphi}_k)}{\overline{\varphi}_k - \underline{\varphi}_k} + c_{kn}(\theta) \geq 0$$
  on $[\underline{\varphi}_k,\overline{\varphi}_k]$, where $c_{kn}(\theta)$ is a suitable $C^\infty$ function with compact support on $[\underline{\varphi}_k,\overline{\varphi}_k]$, such that 
  $$\int_{\underline{\varphi}_k}^{\overline{\varphi}_k} d\theta \cdot \lambda'_n(\theta) = \lambda(\overline{\varphi}_k) - \lambda(\underline{\varphi}_k) \textnormal{ and } \lambda_n(\theta) \leq Mv(\theta) \textnormal{ for any } \theta \in (\underline{\varphi}_k,\overline{\varphi}_k).$$
  The latter choice is possible by the differentiability and no gross overbidding assumptions on $\lambda$.

  It remains to show that $\lambda_n$ converges uniformly to $\lambda$. For any $\theta \notin I_n$, $\lambda_n(\theta) = \lambda(\theta)$. If instead $\theta \in I_n$, then for some $\kappa \in \I_n$, $\theta \in (\underline{\varphi}_\kappa,\overline{\varphi}_\kappa)$. Then, as 
  $$\lambda(\underline{\varphi}_{\kappa}) = \lambda_n(\underline{\varphi}_{\kappa}) < \lambda_n(\theta) < \lambda_n(\overline{\varphi}_{\kappa}) = \lambda(\overline{\varphi}_{\kappa}),$$
  and likewise for $\lambda(\theta)$, it is sufficient to obtain a bound for $\lambda(\underline{\varphi}_{\kappa}) - \lambda(\overline{\varphi}_{\kappa})$. Now, as $\theta \in (\underline{\varphi}_{\kappa},\overline{\varphi}_{\kappa})$, then $\theta \in (\underline{\theta}_{k^*},\overline{\theta}_{k^*})$. For two  initial segments $L,R$ of $\mathbb{N}$, we shall construct a mapping $\ell : L \rightarrow \I_n, r : R \rightarrow \I_n$, such that $\ell(1) = r(1) = {k^*}$, and 
  \begin{equation}\label{eq:flat-exp}\left(\cup_{m \in L} (\underline{\theta}_{\ell(m)},\overline{\theta}_{\ell(m)})\right) \cup \left(\cup_{m \in R} (\underline{\theta}_{r(m)},\overline{\theta}_{r(m)})\right)= (\underline{\varphi}_{\kappa},\overline{\varphi}_{\kappa}).\end{equation}

  The idea is to iteratively expand the interval $(\underline{\theta}_1,\overline{\theta}_1)$ left and right for each $\eta \in \mathbb{N}$, potentially terminating whenever lower and upper bounds are attained. We present the construction of $L$; the construction of $R$ is analogous. We fix $L^1 = \{1\}$ and $\ell^1: 1 \mapsto 1$. Now, given $L^\eta$, suppose that $\cup_{m \in L^\eta} (\underline{\theta}_{\ell(m)},\overline{\theta}_{\ell(m)})$ is some interval $(\underline{\psi}_\eta,\overline{\psi}_\eta)$; this is of course true for $\eta = 1$. If $\underline{\psi}_\eta = \underline{\varphi}_\kappa$, then we let $L = L^\eta$ and terminate. Otherwise, we define 
  $$I^\eta_L \equiv \left[ \left(\underline{\varphi}_\kappa + \underline{\psi}_\eta\right)/2, \underline{\psi}_\eta\right] \subsetneq (\underline{\varphi}_{\kappa},\overline{\varphi}_{\kappa}).$$
  
  This implies that $(\underline{\theta}_k, \overline{\theta}_k)_{k \in \I_n}$ is an open cover of $I^\eta_L$; hence it has a finite subcover $J^\eta$ minimal with respect to set inclusion. This union of this subcover is necessarily an interval $(\underline{\psi}'_\eta,\overline{\psi}'_\eta)$ containing $I^\eta_L$. Extend $L^\eta$ to $L^{\eta+1}$, by letting $\ell(m+1) \in J^\eta$ be such that $(\underline{\theta}_{\ell(m+1)},\overline{\theta}_{\ell(m+1)}) \ni \underline{\theta}_{\ell(m)}$ for each $m \geq |L^\eta|$ as long as such an $\ell(m+1)$ exists in $J^\eta$; since $J^\eta$ is finite and $(\underline{\psi}'_\eta,\overline{\psi}'_\eta)$ is open, this iteration can be done only finitely many times. In particular, once we terminate, $\underline{\psi}_{\eta+1} = \underline{\psi}'_\eta$. 

  If $L = L^\eta$ for some finite $\eta$, then $\underline{\psi}_\eta = \underline{\varphi}_\kappa$, and hence $\cup_{m \in L} (\underline{\theta}_{\ell(m)},\overline{\theta}_{\ell(m)}) \supseteq (\underline{\varphi}_\kappa,\overline{\theta}_{k^*})$. If $L$ is countably infinite, then note that $\underline{\psi}_{\eta+1} - \underline{\varphi}_\kappa < (\underline{\psi}_{\eta} - \underline{\varphi}_\kappa)/2$, and hence 
  $$ \cup_{m \in L^{\eta+1}} (\underline{\theta}_{\ell(m)},\overline{\theta}_{\ell(m)}) \supseteq \left[ \underline{\varphi}_\kappa \cdot \left( 1 - \frac{1}{2^\eta} \right) + \frac{\underline{\theta}_{k^*}}{2^\eta}, \overline{\theta}_{k^*} \right).$$
  We thus conclude that, with an analogous construction of $R,r$, (\ref{eq:flat-exp}) holds. Given this, 
  \begin{align*}
    0 \leq \lambda(\theta) - \lambda(\underline{\varphi}_\kappa) & = \lambda(\theta) - \lambda(\underline{\theta}_{k^*}) + \sum_{m \in L, m > 1} \lambda(\underline{\theta}_{\ell(m)}) - \lambda(\underline{\theta}_{\ell(m-1)}) \\
    & \leq \sum_{m \in L} \frac{2^{-\ell(m)-2}}{n} \leq \sum_{k \in \mathbb{N}} \frac{2^{-k-2}}{n} = \frac{1}{2n}.
  \end{align*}
  Here, the first inequality follows from the choice of $\Delta_{kn}$ Via an analogous argument, $\lambda(\overline{\varphi}_\kappa) - \lambda(\theta) \leq 1/2n$, which implies that $|\lambda(\theta) - \lambda_n(\theta)| \leq 1/n$. As a consequence, $\| \lambda - \lambda_n \|_\infty \leq 1/n$. 

  Given our sequence $\lambda_n$, we want to show that (\ref{con:good-conv-ap}) holds. We thus look at the difference 
  \begin{align*}
    & \int_0^1 d\theta_i \cdot \frac{1}{v(\theta_i)} \left[ \left( U_i(\beta,\lambda_{n(-i)}|\theta_i) - U_i(\beta,\lambda_{-i}|\theta_i) \right) - \left( U_i(\lambda_{n}|\theta_i) - U_i(\lambda|\theta_i) \right) \right] \\
    \leq \quad &  \int_0^1 d\theta_i \cdot \frac{1}{v(\theta_i)} \left[ \left| U_i(\beta,\lambda_{n(-i)}|\theta_i) - U_i(\beta,\lambda_{-i}|\theta_i) \right| - \left( U_i(\lambda_{n}|\theta_i) - U_i(\lambda|\theta_i) \right) \right] 
  \end{align*}
  We inspect the $\left| U_i(\beta,\lambda_{n(-i)}|\theta_i) - U_i(\beta,\lambda_{-i}|\theta_i) \right|$ term first. If $\theta \in [0,1] \setminus I_n$, then $\lambda(\theta) = \lambda_n(\theta)$ there. As a consequence, $\lambda^{-1}\beta(\theta)$ exists, and equals $\lambda_n^{-1}\beta(\theta)$; and 
  $\left| U_i(\beta,\lambda_{n(-i)}|\theta_i) - U_i(\beta,\lambda_{-i}|\theta_i) \right| = 0$. If instead $\beta(\theta) \in I_n$, $\theta \in \beta^{-1}((\underline{\theta}_k-\Delta_{kn}, \overline{\theta}+\Delta_{kn})) \subseteq (\gamma_k-2^{-k}/n,\gamma_k+2^{-k}/n)$ for some $k \in \I$. By construction, set of such $\theta$ has measure $\leq 1/n$. As $\beta \leq v$ for both the first-price and the all-pay auction, the absolute value of this difference can be at most $2v(\theta_i)$ for each $\theta_i$, which implies that 
  \begin{align*}
    \int_0^1 d\theta_i \cdot \frac{1}{v(\theta_i)} \cdot \left| U_i(\beta,\lambda_{n(-i)}|\theta_i) -  U_i(\beta,\lambda_{-i}|\theta_i)  \right| & \leq \frac{2}{n}.
  \end{align*}

  Finally, we look at the term $U_i(\lambda|\theta_i)-U_i(\lambda_n|\theta_i)$. We first note that $U_i(\lambda_n|\theta_i)$ is defined as in Section \ref{sec:uniqueness-strict}, as $\lambda_n$ is strictly increasing. Meanwhile, 
  $$U_i(\lambda|\theta_i) = \begin{cases}
    P(\theta_i,\lambda) \cdot (v(\theta_i) - \lambda(\theta)) & \textnormal{for the first-price auction,} \\
    P(\theta_i,\lambda) \cdot v(\theta_i) - \lambda(\theta) & \textnormal{for the all-pay auction,}
  \end{cases}$$ 
  where $$P(\theta_i,\lambda) = \begin{cases}
    \sum_{m = 0}^{N-1} \frac{1}{m+1} \binom{N-1}{m} \underline{\theta}_k^{N-1-m} (\overline{\theta}_k - \underline{\theta}_k)^m & \exists \ k \in \I, \theta_i \in [\overline{\theta}_k,\underline{\theta}_k], \\
    \theta_i^{N-1} & \textnormal{else.}
  \end{cases}$$
  Therefore, for the first-price auction the difference can be decomposed, 
  \begin{align*}
    & \int_0^1 d\theta_i \cdot \frac{1}{v(\theta_i)} \cdot \left( U_i(\lambda|\theta_i)-U_i(\lambda_n|\theta_i) \right) \\ = & \int_{[0,1] \setminus I} d\theta_i \cdot \frac{1}{v(\theta_i)} \cdot \theta_i^{N-1} (\lambda(\theta_i) - \lambda_n(\theta_i)) \\
     - & \sum_{k \in \I} \int_{\underline{\theta}_k}^{\overline{\theta}_k} d\theta_i \cdot \left(\theta_i^{N-1} - \sum_{m = 0}^{N-1} \frac{1}{m+1} \binom{N-1}{m} \underline{\theta}_k^{N-1-m} (\overline{\theta}_k - \underline{\theta}_k)^m\right) \\
    + & \sum_{k \in \I} \int_{\underline{\theta}_k}^{\overline{\theta}_k} d\theta_i \cdot \frac{\ell_k}{v(\theta_i)} \cdot \left(\theta_i^{N-1} - \sum_{m = 0}^{N-1} \frac{1}{m+1} \binom{N-1}{m} \underline{\theta}_k^{N-1-m} (\overline{\theta}_k - \underline{\theta}_k)^m\right) \\
    - & \sum_{k \in \I} \int_{\underline{\theta}_k}^{\overline{\theta}_k} d\theta_i \cdot \frac{1}{v(\theta_i)} \cdot \theta_i^{N-1} \cdot (\ell_k - \lambda_n(\theta_i)).
  \end{align*}
  The second term evaluates\footnote{One way to show this is by evaluating the integral, and then showing that its derivative with respect to $\overline{\theta}_k$ equals zero via the binomial theorem. As the term equals zero when $\overline{\theta}_k = \underline{\theta}_k$, we have the desired result.} to $0$, while the third term is $\leq 0$ as $\ell_k/v$ is decreasing. On the other hand, the first and the fourth term are both bounded above by 
  $$ \int_0^1 d\theta_i \cdot \frac{1}{v(\theta_i)} \cdot \theta_i^{N-1} \cdot |\lambda(\theta_i) - \lambda_n(\theta_i)|.$$
  As $\|\lambda_n - \lambda\|_\infty \leq 1/n$ and as $\lambda_n$ also satisfies no gross overbidding, this integral is in turn bounded above by $\mu_n\cdot 2M + (1-\mu_n) \cdot 1/\sqrt{n}$, where $\mu_n$ is the measure of the set of $\theta \in [0,1]$ such that $v(\theta_i) < 1/\sqrt{n}$. As $v$ is continuous and strictly increasing, $\lim_{n \rightarrow \infty} \mu_n = 0$. The bound for the all-pay auction follows from an identical line of arguments; the second and third terms are identical, while the first and fourth terms are no longer multiplied by $\theta_i^{N-1}$.
\end{proof}

By the result of Lemma \ref{lem:v-solution}, we will write our primal infinite LP for both the first-price and the all-pay auctions as 
\begin{align}
  \sup_{\sigma \geq 0} \int d\sigma(\lambda) \cdot \sum_{i \in [N]} \int_0^1 d\theta_i \cdot \alpha(\theta_i|\lambda) \textnormal{ subject to}& \label{opt:inf-weak}\\
  \int_\Lambda d\sigma(\lambda) & = 1 \tag{$\gamma$} \label{cons:inf-gamma-weak} \\
  \int_\Lambda d\sigma(\lambda) \cdot \frac{1}{v(\theta_i)}\left( U_i(\beta,\lambda_{-i}|\theta_i) - U_i(\lambda|\theta_i) \right) & \leq 0 \tag{$\epsilon(\theta_i)$} \ \forall \ i \in [N], \theta_i \in (0,1], \label{cons:inf-eq-weak}
\end{align}
where $\Lambda$ is a set of bidding functions satisfying Assumption \ref{asmptn:sup-weak}, and $\alpha(\theta|\lambda)$ is yet to be constructed. Then (\ref{opt:inf-weak}) has a dual infinite LP 
\begin{align}
  \inf_{\epsilon \geq 0, \gamma} \gamma \textnormal{ subject to} &  \label{opt:dual-inf-weak}\\
  \gamma + \sum_{i \in [N]} \int_0^1 d\theta_i \cdot \epsilon(\theta_i) \cdot \frac{1}{v(\theta_i)} \left( U_i(\beta,\lambda_{-i}|\theta_i) - U_i(\lambda|\theta_i) \right) & \geq \sum_{i \in [N]} \alpha(\theta_i|\lambda) \label{cons:inf-dual-gen-weak} \tag{$\sigma(\lambda)$}
  \ \forall \ \lambda \in \Lambda. 
\end{align}

When we discuss a continuous BCCE $\sigma$ with support in weakly increasing strategies, we shall this time simply assume that $\sigma$ is sufficiently well-behaved, such that 
\begin{align}
  & \int_\lambda d\sigma(\lambda) \cdot \sum_{i \in [N]} \int_{0}^1 d\theta_i \cdot \frac{\epsilon(\theta_i)}{v(\theta_i)} \cdot \left( U_i(\beta,\lambda_{-i} | \theta_i) - U_i(\lambda|\theta_i) \right) \label{cond:fubini-weak} \\
  = & \sum_{i \in [N]} \int_{0}^1 d\theta_i \cdot \frac{\epsilon(\theta_i)}{v(\theta_i)} \cdot \int_\lambda d\sigma(\lambda) \cdot \left( U_i(\beta,\lambda_{-i} | \theta_i) - U_i(\lambda|\theta_i) \right) \nonumber
\end{align}
holds. One way to ensure this is to assume that $\sigma$ is restricted to be a Borel probability measure with respect to discrete topology; 
when we provide our approximation results by relaxing the constrants (\ref{cons:inf-eq-weak}), we will be able to consider $\Lambda$ as a finite set of piecewise constant functions.

Then to obtain our uniqueness results, we will need to show that such a dual solution still has value $0$ for the dual problems (\ref{opt:dual-inf-FP-strict},\ref{opt:dual-inf-AP-strict}). This turns out to be the case under certain assumptions on the prior distribution, in line with our numerical results in Section \ref{sec:numerical-results}; for first-price auctions, strict concavity of the prior distribution -- or equivalently, strict convexity of $v(\theta)$ is a sufficient condition, while for all-pay auctions the conditions turn out to be a lot less restrictive.

\begin{theorem}\label{thm:thm-weak-uniq}
  Suppose for that $\Lambda$ is a set of bidding strategies satisfying Assumption \ref{asmptn:sup-weak}. Then $\epsilon(\theta_i) = 1$ and $\gamma = 0$ is a solution to (\ref{opt:dual-inf-weak}) of value $0$, whenever $v(\theta)$ is strictly increasing and differentiable on $[0,1]$
  \begin{enumerate}
    \item for the all-pay auction when we fix 
    $$\alpha(\theta_i|\lambda) = \frac{\lambda(\theta_i)-\beta(\theta_i)}{v(\theta_i)} - \int_{\beta(\theta)}^{\max\{\lambda(\theta),\beta(1)\}} d\mu \cdot \frac{\theta_i^{N-2}}{\beta^{-1}(\mu)^{N-2} \cdot v\beta^{-1}(\mu)},$$  
    \item and if $v(\theta)$ is also strictly convex, for the first-price auction when we fix 
    $$\alpha(\theta_i|\lambda) = \theta_i^{N-2} \cdot \left(\frac{\theta_i}{v(\theta_i)} \cdot (\lambda(\theta) - \beta(\theta)) - \int_{\beta(\theta)}^{\max\{\lambda(\theta),\beta(1)\}} d\mu \cdot \frac{\beta^{-1}(\mu)}{v\beta^{-1}(\mu)}\right).$$
  \end{enumerate}
  As a consequence, for such prior distributions, any continuous BCCE $\sigma$ on $\Lambda$ for which (\ref{cond:fubini-weak}) holds places probability $1$ on the equilibrium bidding strategies~$\beta$.
\end{theorem}

\begin{proof}
  We proceed as in the proof of Theorem \ref{thm:thm-fp-uniq}, and argue the result for the first-price auction; the result for the all-pay auction follows through an identical line of arguments we omit. For $0 \leq \underline{\theta} < \overline{\theta} \leq 1$, again denote by $\gamma(\lambda|\underline{\theta},\overline{\theta})$,
  $$ \gamma(\lambda|\underline{\theta},\overline{\theta}) = \int_{\underline{\theta}}^{\overline{\theta}} d\theta_1 \cdot \left( \alpha(\theta_1|\lambda) - \frac{1}{v(\theta_1)} \cdot  \left( U_1(\beta,\lambda_{-1} | \theta_1) - U_1(\lambda|\theta_1) \right) \right). $$
  We again need to show that $\gamma(\lambda|0,1)$ is maximised when $\lambda_i = \beta$ for any buyer $i$. Our construction of $\alpha$ is in fact such that, for both the first-price and the all-pay auction, 
  $$ \int_0^1 d\theta_1 \cdot \frac{\delta\alpha(\theta_1|\lambda)}{\delta\lambda} \cdot \delta\lambda(\theta_1) =  \int_{0}^{1} d\theta \cdot \frac{1}{v(\theta_1)} \cdot \frac{\delta\left( U_1(\beta,\lambda_{-1} | \theta_1) - U_1(\lambda|\theta_1) \right)}{\delta\lambda} \cdot \delta\lambda(\theta_1)$$
  for any bidding strategy $\lambda$ which is strictly increasing and satisfying Assumption \ref{asmptn:sup-weak}, and any smooth variation $\delta\lambda$. Thus it is sufficient to show that 
  \begin{equation}\int_{0}^{1} d\theta_1 \cdot \frac{1}{v(\theta_1)} \cdot \frac{\delta\left( U_1(\beta,\lambda_{-1} | \theta_1) - U_1(\lambda|\theta_1) \right)}{\delta\lambda} \cdot \delta\lambda(\theta) \leq 0 \label{eq:dual-good} \tag{$\delta\alpha$}\end{equation}
  whenever we fix $\delta\lambda = \beta - \lambda$, with equality only when $\beta = \lambda$.
  
  So once again divide the interval $[0,1]$ into finitely many subintervals, $[0,1] = \cup_{k} [\underline{\theta}_k,\overline{\theta}_k]$, as in the proofs of Theorems \ref{thm:thm-fp-uniq} and \ref{thm:thm-ap-uniq}. The proof then follows by identical case analysis; we present the case $\lambda(1) > \beta(1)$ to illustrate the arguments. By continuity of $\lambda - \beta$, since $\lambda(0) = \beta(0)$, there exists supremal (hence maximal) $\underline{\theta}$ such that $\lambda(\underline{\theta}) = \beta(\underline{\theta})$. By the supremum property, $\lambda(\theta) > \beta(\theta)$ for any $\theta > \underline{\theta}_k$. Thus, for any $\theta \in (\underline{\theta},1]$,  $\lambda^{-1}\beta(\theta) < \theta$, and the left hand side of (\ref{eq:dual-good}) equals
  \begin{align*}
      & = \int_{\underline{\theta}_k}^{1} d\theta \cdot \Bigg[-\frac{(N-1)\lambda^{-1}\beta(\theta)^{N-2}}{\lambda' \lambda^{-1} \beta(\theta) \cdot v(\theta)} \cdot (v(\theta) - \beta(\theta)) \cdot \delta\lambda(\lambda^{-1}\beta(\theta)) + \frac{\theta^{N-1}}{v(\theta)} \cdot \delta\lambda(\theta) \Bigg] \\
      & = \int_{\lambda^{-1}\beta(1)}^{1} d\theta \cdot \delta\lambda(\theta) \cdot \theta^{N-2} \cdot \left( \frac{\theta}{v(\theta)} \right) + \int_{\underline{\theta}_k}^{\lambda^{-1}\beta(1)} d\theta \cdot \delta\lambda(\theta) \cdot \theta^{N-2}\cdot \left( \frac{\theta}{v(\theta)} - \frac{\beta^{-1}\lambda(\theta)}{v\beta^{-1}\lambda(\theta)} \right) .
  \end{align*}
  For any $\theta \in (\lambda^{-1}{\beta(1)},1]$, $\delta\lambda(\theta) = \beta(\theta) - \lambda(\theta) < 0$, so the first term is negative. Meanwhile, strict convexity of $v(\theta)$ along with $v(0) = 0$ implies that $v(\theta)/\theta$ is strictly increasing on $[0,1]$, which implies that the second term is also negative.
\end{proof}

We have thus identified, in a sense, the \emph{source} of the prior dependence on the nearness of the BCCE of first-price auctions to its canonical equilibria. One shortcoming of Theorem \ref{thm:thm-weak-uniq}, however, is that the uniqueness result does not yet extend to weakly convex $v(\theta)$, which would include the uniform distribution. Also, the objective $\alpha$ is less intuitive than the modified Wasserstein-$2$ distance we considered in Section \ref{sec:uniqueness-strict}. The following result shows that for first-price auctions, slightly stronger assumptions on the convexity of $v$ allows us to exchange our objective for a weighted Wasserstein-$2$ distance, while for the all-pay auction we require only a mild regularity assumption at $\theta = 0$.

\begin{proposition}\label{prop:wasserstein-prior}
  For any bidding strategies $\lambda$ satisfying Assumption \ref{asmptn:sup-weak}:
  \begin{enumerate}
    \item For a first-price auction, if $\beta^{-1}/v\beta^{-1}$ also has a lower bound \underbar{on the magnitude} of its derivative, then $\int_0^1 d\theta \cdot \alpha(\theta|\lambda) \geq \int_0^1 d\theta \cdot \theta^{N-2} \cdot \frac{(\lambda(\theta)-\beta(\theta))^2}{2} \cdot C^{FP}$, where
    $$C^{FP} = \inf\left\{ -\frac{d}{d\mu} \left( \frac{\beta^{-1}(\mu)}{v\beta^{-1}(\mu)} \right) \ \Bigg| \ \mu \in (0,\beta(1)]  \right\} \cup \left\{ \frac{1}{v(1) \max B} \right\} > 0.$$
    This condition is satisfied whenever $\theta/v(\theta)$ has an upper bound on its derivative.
    \item For an all-pay auction, whenever $1/(\beta^{-1})^{N-2}v\beta^{-1}$ as a lower bound \underbar{on the magnitude} of its derivative, $\int_0^1 d\theta \cdot \alpha(\theta|\lambda) \geq \int_0^1 d\theta \cdot \theta^{N-2} \cdot \frac{(\lambda(\theta)-\beta(\theta))^2}{2} \cdot C^{AP}$, where
    $$C^{AP} = \inf\left\{ -\frac{d}{d\mu} \left( \frac{1}{\beta^{-1}(\mu)^{N-2} \cdot v\beta^{-1}(\mu)} \right) \ \Bigg| \ \mu \in (0,\beta(1)]  \right\} \cup \left\{ \frac{1}{v(1) \max B} \right\} > 0.$$
    This holds if $d/d\theta [1/v(\theta)]$ has a non-zero limit at $0$.
  \end{enumerate}
\end{proposition}

\begin{proof}
  For the first-price auction, whenever $\lambda(1) > \beta(1)$,
  \begin{align*}
      & \int_0^1 d\theta \cdot \frac{\delta \alpha(\theta|\lambda)}{\delta\lambda} \cdot \delta\lambda(\theta) \\
      & = \int_{\lambda^{-1}\beta(1)}^1 d\theta \cdot \delta\lambda(\theta) \cdot \theta^{N-2} \cdot \left( \frac{\theta}{v(\theta)} \right) + \int_{\underline{\theta}_k}^{\lambda^{-1}\beta(1)} d\theta \cdot \delta\lambda(\theta) \cdot \theta^{N-2} \cdot \left( \frac{\theta}{v(\theta)} - \frac{\beta^{-1}\lambda(\theta)}{v\beta^{-1}\lambda(\theta)} \right).
  \end{align*}
  When we set $\delta\lambda = \beta - \lambda$, the resulting integrand should be pointwise lesser than the integrand of 
  $$\int_0^1 d\theta \cdot \delta\lambda(\theta) \cdot \theta^{N-2} \cdot \frac{\delta}{\delta\lambda} \left( \frac{(\lambda(\theta)-\beta(\theta))^2}{2} \right) \cdot C^{FP}.$$
  Now, if $\theta > \lambda^{-1}\beta(\theta)$, then this necessitates
  \begin{align*}
      \theta^{N-2} \cdot (\beta(\theta)-\lambda(\theta)) \cdot \left( \frac{\theta}{v(\theta)}\right) & \leq -\theta^{N-2} (\beta(\theta)-\lambda(\theta))^2 \cdot C^{FP} \\
      \Rightarrow C^{FP} & \leq \frac{\theta}{v(\theta)(\lambda(\theta)-\beta(\theta))}.
  \end{align*}
  This is guaranteed whenever $C^{FP} \leq 1/(v(1) \max B)$, since $\theta/v(\theta)$ is strictly decreasing on $[0,1]$ by the strict convexity assumption on $v$. If instead $\theta < \lambda^{-1}\beta(1)$, then we require
  \begin{align*}
      \theta^{N-2} \cdot (\beta(\theta)-\lambda(\theta)) \cdot \left( \frac{\theta}{v(\theta)} - \frac{\beta^{-1}\lambda(\theta)}{v\beta^{-1}\lambda(\theta)} \right) & \leq -\theta^{N-2} \cdot (\beta(\theta)-\lambda(\theta))^2 \cdot C^{FP} \\
      \Rightarrow C^{FP} \leq -\frac{1}{\beta(\theta)-\lambda(\theta)} \cdot \left( \frac{\theta}{v(\theta)} - \frac{\beta^{-1}\lambda(\theta)}{v\beta^{-1}\lambda(\theta)} \right).
  \end{align*}
  However, by the mean-value theorem, the right hand side equals $-d/d\mu (\beta^{-1}(\mu)/v\beta^{-1}(\mu))$, evaluated at some point $\mu$ between $\lambda(\theta)$ and $\beta(\theta)$. To see the sufficient condition, 
  \begin{align*}
    \frac{d}{d\mu} \frac{\beta^{-1}(\mu)}{v\beta^{-1}(\mu)} & = \frac{1}{\beta' \beta^{-1} (\mu)} \cdot \frac{v\beta^{-1}(\mu) - \beta^{-1}(\mu) v'\beta^{-1}(\mu)}{v\beta^{-1}(\mu)^2} = \frac{1}{\beta'(\theta)} \cdot \frac{v(\theta) - \theta v'(\theta)}{v(\theta)^2} \\
    & = \frac{1}{\beta'(\theta)} \frac{d}{d\theta} \left( \frac{\theta}{v(\theta)} \right) \leq \frac{\theta}{v(\theta)} \frac{d}{d\theta} \left( \frac{\theta}{(N-1)v(\theta)} \right) \leq \frac{1}{(N-1)v(1)} \frac{d}{d\theta} \left( \frac{\theta}{v(\theta)} \right).
  \end{align*}
  Here, the second equality is by changing variables $\mu = \beta(\theta)$, the first inequality is by noting that $\theta \beta'(\theta) = (N-1) \cdot (v(\theta) - \beta(\theta))$ for a first-price auction, and the final inequality holds since $\theta/v(\theta)$ is decreasing. The case of the all-pay auction follows similarly.
\end{proof}

For a quick sanity check, we remark that the objective $\alpha$ we construct is \emph{scale invariant}, and the constant factor $C^\mathcal{A}$ in Proposition \ref{prop:wasserstein-prior} ensures our modified Wasserstein-$2$ distance becomes so as well. In particular, whenever we scale all valuations and bids by a constant $A > 0$, i.e. when we let $v \rightarrow A \cdot v$, $B \rightarrow A \cdot B$, and $\lambda \rightarrow A \cdot \lambda$, the constant $C^\mathcal{A}$ scales like $\propto 1/A^2$ while the integrand scales $(\lambda(\theta) - \beta(\theta))^2 \propto A^2$.

\subsection{Continuous Auctions with Non-Unique BCCE}\label{sec:examples}

It is worthwhile to discuss how the result of Theorem \ref{thm:thm-weak-uniq} breaks down when the prior distribution exhibits non-concavities. The main source of tension turns out to be when flat bidding profiles render our dual solution infeasible. Allowing ties, note that a weakly convex prior $F(v)$ implies a weakly concave valuation function in quantile space, $v(\theta) = F^{-1}(\theta)$. Moreover, in this case for a first-price auction,
$$\beta(\theta) = \frac{1}{\theta^{N-1}} \int_0^\theta (N-1) \theta'^{N-2} v(\theta') d\theta'.$$
Therefore, if $v$ is weakly concave, $v(\theta') \geq (\theta'/\theta) \cdot v(\theta)$ for every $\theta' \in (0,\theta)$, and as a result, for every $\theta > 0$,
$$ \beta(\theta) \geq \frac{1}{\theta^{N}} \int_0^\theta (N-1) \theta'^{N-1} v(\theta) d\theta' = \frac{N-1}{N} v(\theta). $$
Or, in other words, $v(\theta) - \beta(\theta) \leq \frac{1}{N} \cdot v(\theta)$. Denoting by $0$ the bidding strategies where every buyer bids $0$ no matter their valuation, we then have
\begin{align*}
    \int_0^1 d\theta \cdot \left( - \frac{1}{v(\theta)} \cdot (U_1(\beta,0_{-1}|\theta) - u(0|\theta) ) \right)
    \geq \int_0^1 d\theta \cdot \left(- \frac{1}{v(\theta)} \cdot \left(v(\theta) - \beta(\theta) - \frac{v(\theta)}{N}\right) \right) \geq 0,
\end{align*}
and the inequality holds strictly whenever $v$ is strictly concave. As a consequence, when the buyers' symmetric prior is weakly convex, there is no dual solution to (\ref{opt:dual-inf-weak}) of the form we seek. We can actually reflect this phenomenon in the form of primal solutions to (\ref{opt:generic-primal}), i.e. an explicit BCCE for continuous two buyer first-price auctions with strictly convex priors, though the strategies we find will fail to weakly increase. We remark that all bidding strategies considered are nevertheless symmetric, normalised at $0$ and uniformly bounded. This emphasises that our assumption of strictly increasing support for BCCEs is crucial for the uniqueness result in Theorem \ref{thm:thm-fp-uniq}.

\newcommand{\I}{\textnormal{I}}
\newcommand{\II}{\textnormal{II}}

\begin{example}
    Consider a two buyer first-price auction with prior distribution $F(v) = v^2$, i.e. $v(\theta) = \sqrt{\theta}$. Here, the equilibrium bidding strategy satisfies $\beta(\theta) = \frac{2}{3}\sqrt{\theta}$, and we will construct a BCCE where this strategy is played with probability $< 1$. Towards this end, consider bidding strategy $\lambda$, defined as  
    $$ \lambda(\theta) = \begin{cases}
        \hat{\beta}(\theta) & \theta \in [0,\delta) \\
        0 & \theta \geq \delta
    \end{cases}$$
    for some $\delta > 0$, where $\hat{\beta}$ is chosen such that buyers of type $\theta < \delta$ best respond. That is to say,
    $$ \hat{\beta}(\theta) = \arg \max_{b \in [0,\infty)} (1-\delta + \hat{\beta}^{-1}(b)) \cdot (\sqrt{\theta} - b),$$
    which in turn implies that $\hat{\beta}(\theta) = \frac{2}{3(1-\delta+\theta)}\theta^{3/2}$. We will want to find $\varepsilon > 0$ such that the correlated bidding strategy which puts weight $\varepsilon$ on both buyers employing $\lambda$ and weight $1-\varepsilon$ on both buyers employing $\beta$ forms a BCCE of the auction.

    For our purposes, it is sufficient to pick $\varepsilon = .001$ and $\delta = .05$. It remains to verify that we indeed have a BCCE. Here, note that a buyer of type $\theta < \delta$ is pointwise best responding by construction, so we only need to check the BCCE constraints for $\theta \geq \delta$. Such a buyer has expected utility 
    \begin{equation}\label{eqn:ex-util}(1-\varepsilon) \cdot \frac{\theta^{3/2}}{3} + \varepsilon \cdot (1-\delta) \cdot \frac{\sqrt{\theta}}{2} .\end{equation}
    In turn, if such a buyer deviates to bidding $b$, they have expected utility 
    \begin{equation}\label{eqn:ex-lb}
        \leq (\sqrt{\theta} - b) \cdot \left( \varepsilon + (1-\varepsilon) \cdot \frac{9}{4}b^2 \right)
    \end{equation}
    since with probability $\varepsilon$ they will win the item with probability $\leq 1$ with their bid, and with probability $(1-\varepsilon)$ the other buyer employs the equilibrium bidding strategies. Now, the maximiser $b(\theta)$ of (\ref{eqn:ex-lb}) is increasing in $\theta$ whenever $\theta \in [.05,1]$, and is in fact given by $b = \frac{\sqrt{\theta}}{3} + \frac{1}{27} \sqrt{81\theta - \frac{4}{37}}$. Then for $\theta \in [.05,1]$, it is possible to numerically verify that (\ref{eqn:ex-util}) $>$ (\ref{eqn:ex-lb}).
    
\end{example}

Another way to observe that \emph{some} regularity condition on the priors may be required for the uniqueness of BCCE in first-price auctions is by constructing continuous auctions without a unique BCCE, by an appeal to first-price auctions of complete information considered in \cite{FLN16}. In this case, the general idea is to find a first-price auction of complete information and a BCCE for it where no equilibrium constraint binds\footnote{We are grateful to Jason Hartline for the idea on the construction of this example.}:

\begin{example}\label{ex:necessity}
    Consider a two buyer first-price auction, where the valuation of each buyer $i$ is drawn i.i.d. to be $1$ or $1.2$ with probability $1/2$ each. Consider the correlated bid distribution, where each buyer submits identical bids following the cumulative distribution function
    $$ H(b) = \min \left\{ \frac{(1.2-.88)(1-.8)}{(1.2-.8)(1-b)} , \frac{1.2-.88}{1.2-b}, 1\right\}.$$
    This is a Bayesian CCE -- moreover, no coarse correlated equilibrium constraint (\ref{def:BCE}) binds. In particular, in the BCCE outcome, buyer $1$ obtains utility $\sim .224458$ while buyer $2$ gets utility $\sim .324458$. However, by definition of $H(b)$ the maximum utility buyer $1$ may get from bidding $b > 0$ is 
    $$ (1-b) \cdot H(b) \leq (1-b) \cdot \frac{.16}{1-b} = .16 < .22448.$$
    Similarly, buyer $2$ may at most obtain $.32 < .324458$ from such a unilateral deviation. Now, for a small $\epsilon$ and $\delta > 0$, we construct a probability distribution $f$ such that 
    $$f(b) = \frac{\epsilon}{1.2 + \delta} \mathbb{I}[b \in [0,1.2 + \delta]] + (1-\epsilon) \cdot (\sigma_1(b) + \sigma_{1.2}(b)) \cdot \frac{1}{2},$$
    where $\sigma_x$ is a $C^\infty$ function with support $[x-\delta,x+\delta]$ such that $\int_{x-\delta}^{x+\delta} \sigma_x(b)db = 1$. We then construct a Bayesian CCE as follows. We draw a number $\alpha \sim U[0,1]$, and correlate buyers' bids on $\alpha$. If buyer $i$'s type is drawn from the  component of $f$ corresponding to the ``smoothed'' component of our discrete uniform distribution on $\{1,1.2\}$, then buyer $i$ bids $H^{-1}(\alpha)$. Else if a buyer's type is drawn from the ``uniform $\epsilon$'' component of $f$, they simply play best responses $\beta(v)$, determined as the solution of the differential equation where $\beta(0) = 0$ and
    $$ v\left( (1-\epsilon) H'\left(\beta(v)\right)\beta'(v) + \frac{\epsilon}{1.2 + \delta} \right) - \beta'(v) \left((1-\epsilon) H\left(\beta(v)\right) + \frac{\epsilon v}{1.2 + \delta}\right) = 0 $$
    for any $v \in (0,1)$, except for two points where $H'$ is not defined. This then is by construction a BCCE for small enough $\epsilon$ and $\delta$.
\end{example}

\section{Linking the Discrete and the Continuous}\label{sec:cont-to-disc}

This section contains the culmination of our efforts, in the form of explicit distance bounds for the BCCE of discretised auctions to the canonical equilibrium of their continuous counterparts. Our approach can be summarised as follows; we first argue that we may take the support of the BCCE of a given discretised auction as an approximate BCCE for the continuous auction, via encoding a discretised bidding strategy as a piecewise constant function on $[0,1]$. We first show that, the feasibility of a dual constraint corresponding to a weakly increasing piecewise constant bidding strategy follows by considering it as the limit of a sequence of bidding strategies satisfying Assumption \ref{asmptn:sup-weak}. A strengthening of Proposition \ref{prop:inc-bids} then shows that with an appropriate dual solution, the distance maximal bidding strategy within the restricted support will be necessarily weakly increasing. Weak duality then provides us with the desired distance bounds, providing \emph{``translations''} of the results of the previous section to the discretised setting.

\subsection{From Discrete BCCE to Approximate Continuous BCCE}\label{sec:disc-to-approx-cont-BCCE}

Our first step is to discuss Definition \ref{def:discretisations} of families of discretised auctions in greater detail. We shall consider discretisations $(V_n,B_n,F_n)$ of a continuous auction $(V,B,F)$, with a set of valuations $V_n = \{v^n_0, v^n_1, \ldots, v^n_{|V_n|}\} \subseteq [0,v(1))$, a set of bids $B_n = \{b^n_0, b^n_1, \ldots, b^n_{|B_n|}\} \subseteq B$, where $v^n_0 = b^n_0 = 0$ and $b^n_{|B_n|} = \max B$. We shall consider the case when $F_n(v^n_k) = F(v^n_{k+1}) - F(v^n_{k})$ for each $v^n_k \in V_n$. We will also denote $v^n_{|V_n|+1} = v(1)$ for the maximum valuation. 

Now, given a vector of buyers' strategies, $\lambda_i : V_n \rightarrow B_n$, we extend each $\lambda_i$ to a right continuous function $\lambda_i : V \rightarrow B$ by setting 
$\lambda_i(v) = \lambda_i(v^n_k)$ whenever $v \in [v^n_k,v^n_{k+1})$ for some $k$. Of course, we also let $\lambda_i(v(1)) = \lambda_i(v^n_{|V_n|})$. Then, under the change of coordinates $\lambda_i(\theta) \equiv \lambda_i(F^{-1}(v))$, we obtain bidding strategies for buyers in quantile space that are piecewise constant.

Now, for a BCCE $\sigma$ for the discretised auction, consider the induced probability distribution over such piecewise constant bidding strategies $\lambda$ for the continuous auction. Types $\theta$ that correspond to some $v \in V_n$ still incur no regret, however, whenever $v(\theta) \in (v^n_k,v^n_{k+1})$, we have the possibility that some equilibrium constraint (\ref{cons:inf-eq}) is violated. However, if the discretisation is sufficiently fine, then $v(\theta) - v^n_k$ is small, and the incurred regret cannot be too large. The next proposition formalises this argument.

\begin{proposition}\label{prop:disc-to-approx-cont}
  Let $(V_n,B_n,F_n)$ be a $\delta$-fine family of discretisations of a continuous first-price or all-pay auction $A = (N,\valset,F^N,\bidset,u)$. Then for each $n \in \mathbb{N}$, a BCCE $\sigma^n$ for the discretised auction $A_n$ is a $\min\{v,4\delta(n)\}$-approximate BCCE for $A$.
\end{proposition}

\begin{proof}
  Fix a buyer $i$, type $\theta_i$, and some bid $b'_i \in B$. Following Definition \ref{def:cont-BCCE}, note that for each bidding strategy $\lambda$, each buyer $j$ of type $\theta_j$ bids $\lambda_j(v_{k_j}^n)$ for a $k_j$ which satisfies $v(\theta_j) \in [v_{k_j}^n,v^n_{k_j+1}]$ with $v(\theta_j) = v_{k_j+1}^n$ if and only if $\theta_j = 1$. We thus want to show 
  \begin{align}
    \sum_{\lambda \in \times_{i \in [N]} B_n^{V_n}} & \sigma^n(\lambda) \cdot U_i(b'_i,\lambda_{-i}|\theta_i) - U_i(\lambda|\theta_i) \label{eq:disc-bound-regret}\\
    = \sum_{\lambda \in \times_{i \in [N]} B_n^{V_n}, v_j \in V_n^{N-1}} & \sigma^n(\lambda) \cdot F_{n(-i)}(v_{-i}) \cdot \Big[\left(x_i(b'_i,\lambda_{-i}(v_{-i})) - x_i(\lambda(v(\theta_i)),\lambda_{-i}(v_{-i})) \right) \cdot v(\theta_i) \ldots \nonumber \\ & - p_i(b'_i,\lambda_{-i}(v_{-i})) + p_i(\lambda(v(\theta_i)),\lambda_{-i}(v_{-i}))\Big] \leq \min\{v(\theta_i),4\delta(n)\}. \nonumber
  \end{align}
  Given $\theta_i$, either there exists $k$ such that $v(\theta_i) \in [v_k^n,v_{k+1}^n)$, or $\theta_i = 1$ which implies that for $k = |V_n|$, $v(\theta_i) \in [v_k^n,v_{k+1}^n]$. Furthermore, by the assumption $b^n_{|B_n|} = \max B$ and finiteness of $B_n$, there exists a minimum $\hat{b} \in B_n$ such that $\hat{b} \geq b'_i$. Then as $\sigma^n$ is a BCCE for the discretised auction $A_n$,
  \begin{equation}\label{eq:disc-good-regret}
    \sum_{\lambda \in \times_{i \in [N]} B_n^{V_n}} \sigma^n(\lambda) \cdot U_i(\hat{b},\lambda_{-i}|F(v_k^n)) - U_i(\lambda|F(v_k^n)) \leq 0.
  \end{equation}
  We thus want to compare (\ref{eq:disc-bound-regret}) with (\ref{eq:disc-good-regret}) to obtain our desired bound. Now, as $x_i$ satisfies Assumption \ref{asmptn:monotone} and Assumption \ref{asmptn:zero}, whenever $b'_i > v(\theta_i)$ (\ref{eq:disc-bound-regret}) is immediately satisfied; buyer $i$ of type $\theta_i$ obtains utility at least as much as that of valuation $v_k^n$, which must be $\geq 0$ by the no-regret condition against the zero bid, but any overbidding deviation $b'_i$ can only result in non-positive utility. Therefore we may restrict attention to $b'_i \leq v(\theta_i)$. In this case, for the first-price auction, a fixed vector $\lambda$ of bidding strategies for buyers, and a fixed valuation profile $v_{-i}$ for buyers other than $i$,
  \begin{align*}
    x_i(b'_i,\lambda_{-i}(v_{-i})) \cdot v(\theta_i) - p_i(b'_i, \lambda_{-i}(v_{-i})) & = x_i(b'_i,\lambda_{-i}(v_{-i})) \cdot (v(\theta_i) - b'_i) \\
    & \leq x_i(\hat{b},\lambda_{-i}(v_{-i})) \cdot (v(\theta_i) - b'_i) \\
    & = x_i(\hat{b},\lambda_{-i}(v_{-i})) \cdot  \left[ (v_k^n - \hat{b}) + (\hat{b} - b'_i) + (v(\theta_i) - v_k^n) \right] \\
    & \leq x_i(\hat{b},\lambda_{-i}(v_{-i})) \cdot  (v_k^n - \hat{b}) + 4\delta(n).
  \end{align*}
  Here, the first inequality is by monotonicity of $x_i$ in buyer $i$'s bid, the second equality is by adding and subtracting $\hat{b}$ and $v_k^n$ to $(v(\theta_i) - b'_i)$, and the final inequality follows since $x_i : \bidset \rightarrow [0,1]$ and the assumption that the discretisation is $\delta$-fine. Specifically, $\delta$-fineness of the discretisation implies that $v_{k+1}^n - v_k^n$ and $\hat{b} - b'_i$ are both $\leq 2\delta(n)$.

  The same inequality holds for the all-pay auction through identical arguments, by noting that $p_i(\hat{b},\lambda_{-i}(v_{-i})) = \hat{b}$ and $p_i(\hat{b},\lambda_{-i}(v_{-i})) = b'_i$, while $x_i$ does not change between first-price and all-pay auctions. Taking expectations over $\lambda$ and $\theta_{-i}$, we conclude that 
  $$ \sum_{\lambda \in \times_{i \in [N]} B_n^{V_n}}  \sigma^n(\lambda) \cdot U_i(b'_i,\lambda_{-i}|\theta_i) \leq \sum_{\lambda \in \times_{i \in [N]} B_n^{V_n}} \sigma^n(\lambda) \cdot U_i(\hat{b},\lambda_{-i}|F(v_k^n)) + 4\delta(n)$$
  for both first-price and all-pay auctions. Meanwhile, as $v(\theta_i) \geq v_{k}^n$ and $\lambda_i(v(\theta_i)) = \lambda_i(v_k^n)$, we have 
  $ -U_i(\lambda|\theta_i) \leq -U_i(\lambda|F(v_k^n))$.
  Combining the two, we conclude that 
  $$ \sum_{\lambda \in \times_{i \in [N]} B_n^{V_n}} \sigma^n(\lambda) \cdot U_i(b'_i,\lambda_{-i}|\theta_i) - U_i(\lambda|\theta_i) \leq 4\delta(n).$$
  Meanwhile, as $x_i(b'_i,\lambda_{-i}(v_{-i})) \leq 1$, we have $U_i(b'_i,\lambda_{-i}|\theta_i) \leq v(\theta_i)$ and, by Assumption \ref{asmptn:zero}, $U_i(\lambda|\theta_i) \geq 0$. Together, we also infer that  
  $$ \sum_{\lambda \in \times_{i \in [N]} B_n^{V_n}} \sigma^n(\lambda) \cdot U_i(b'_i,\lambda_{-i}|\theta_i) - U_i(\lambda|\theta_i) \leq v(\theta_i).$$
\end{proof}

Proposition \ref{prop:disc-to-approx-cont} implies that any BCCE $\sigma^n$ of a discretised auction is also an approximate BCCE for the continuous auction. Therefore, $\sigma^n$ is a solution to (\ref{opt:inf-LP-primal-gen}), which implies that to bound its distance to the canonical equilibrium of the auction, we may evaluate the value of (\ref{opt:dual-inf-2}). Following our analysis in Section \ref{sec:theory-cont}, we will want to impose the modified Wasserstein-$2$ distance considered in Proposition \ref{prop:wasserstein-prior} as the objective of (\ref{opt:inf-LP-primal-gen}), making the appropriate assumptions on the prior. We will also again relax (\ref{opt:inf-LP-primal-gen}) by only enforcing the constraints $\epsilon_i(\theta_i,\beta(\theta_i))$, and allow ourselves the freedom to modify the support $\Lambda_n$ as desired to a finite set of bidding strategies.

This results in our final infinite primal LP,
\begin{align}\label{opt:LP-disc-cont}
  \max_{\sigma \geq 0} \sum_{\lambda \in \Lambda_n} \sigma(\lambda) \cdot \sum_{i \in [N]} \int_0^1 d\theta_i \cdot \theta^{N-2} \cdot \frac{(\lambda_i(\theta_i)-\beta(\theta_i))^2}{2} & \textnormal{ subject to} \\
  \sum_{\lambda \in \Lambda_n} \sigma(\lambda) & = 1 \tag{$\gamma$} \label{cons:inf-disc-cont-gamma} \\
  \forall \ i \in [N], \theta_i \in [0,1], \sum_{\lambda \in \Lambda_n} \sigma(\lambda) \cdot \left( U_i(\beta,\lambda_{-i}|\theta_i) - U_i(\lambda|\theta_i) \right) & \leq \min\{v(\theta_i),4\delta(n)\}. \tag{$\epsilon_i(\theta_i)$} \label{cons:inf-disc-cont-eq}
\end{align}
To bound the value of (\ref{opt:LP-disc-cont}), we shall in fact want to use the very same dual solution we used in Section \ref{sec:weak-uniqueness}. Therefore, we will want to find $\epsilon_i(\theta_i)$ which is, setting $\gamma = 0$, a solution of
\begin{align}
  \inf_{\epsilon \geq 0, \gamma} \gamma + \sum_{i \in [N]} \int_0^1 d\theta_i \cdot \frac{\epsilon_i(\theta_i)}{v(\theta_i)} \cdot \min\{v(\theta_i),4\delta(n)\} \textnormal{ subject to} &  \label{opt:dual-disc-cont-inf}\\
  \gamma + \sum_{i \in [N]} \int_0^1 d\theta_i \cdot\frac{\epsilon_i(\theta_i)}{v(\theta_i)}  \left( U_i(\beta,\lambda_{-i}|\theta_i) - U_i(\lambda|\theta_i) \right)  \geq \sum_{i \in [N]} \int_0^1 d\theta_i &  \cdot \frac{\theta^{N-2}(\lambda_i(\theta_i)-\beta(\theta_i))^2}{2} \nonumber \\
  \ \forall \ \lambda \in \Lambda_n. & \label{cons:inf-disc-cont-dual-gen} \tag{$\sigma(\lambda)$}
\end{align}

\subsection{Black-Box Distance Bounds for Discretised Auctions}\label{sec:bounds-theory}

Comparing (\ref{opt:dual-disc-cont-inf}) with (\ref{opt:dual-inf-weak}), the objective now incurs a penalty for the magnitude of $\epsilon_i$, and the constraint set $\Lambda$ is exchanged for $\Lambda_n$. As a consequence, for an appropriate feasible dual solution, we obtain our desired distance bounds immediately.

\begin{proposition}\label{prop:final-bounds-2}
  Suppose that $\epsilon_i(\theta_i) = \epsilon$, $\gamma = 0$ is a feasible solution of (\ref{opt:dual-disc-cont-inf}) for each $n \in \mathbb{N}$. Then for any sequence $\sigma^n$ of BCCEs of a $\delta$-fine family of discretised auctions,
  $$  \sum_{\lambda \in \Lambda_n} \sigma(\lambda) \cdot \sum_{i \in [N]} \int_0^1 d\theta_i \cdot \theta^{N-2} \cdot \frac{(\lambda_i(\theta_i)-\beta(\theta_i))^2}{2} \leq N\epsilon \left[ F(4\delta(n)) +  \int_{F(4\delta(n))}^1 d\theta_i \cdot \frac{4\delta(n)}{v(\theta_i)} \right],$$ 
  which $\rightarrow 0$ as $n \rightarrow \infty$.
\end{proposition}

\begin{proof}
  Each $\Lambda_n$ is a finite family of step functions, thus $\sigma^n$ is a probability distribution over a finite set; i.e. defines a Borel measure with respect to the discrete topology on $\Lambda_n$. Moreover, for each fixed $\lambda \in \Lambda_n$, 
  $$\frac{1}{v(\theta_i)}  \left( U_i(\beta,\lambda_{-i}|\theta_i) - U_i(\lambda|\theta_i) \right)$$
  is continuous over $\theta_i \in [0,1]$ except potentially at finitely many points. As a consequence, the conditions of the Fubini-Tonelli theorem are satisfied and weak duality holds. The bound is then obtained by evaluating the integral, 
  \begin{align*}
    & \sum_{\lambda \in \Lambda_n} \sigma(\lambda) \cdot \sum_{i \in [N]} \int_0^1 d\theta_i \cdot \theta^{N-2} \cdot \frac{(\lambda_i(\theta_i)-\beta(\theta_i))^2}{2} \\ 
    \leq & \sum_{\lambda \in \Lambda_n} \sigma(\lambda) \cdot \sum_{i \in [N]} \int_0^1 d\theta_i \cdot\frac{\epsilon}{v(\theta_i)}  \left( U_i(\beta,\lambda_{-i}|\theta_i) - U_i(\lambda|\theta_i) \right) \\
    =& \sum_{i \in [N]} \int_0^1 d\theta_i \cdot\frac{\epsilon}{v(\theta_i)} \cdot \sum_{\lambda \in \Lambda_n} \sigma(\lambda) \cdot  \left( U_i(\beta,\lambda_{-i}|\theta_i) - U_i(\lambda|\theta_i) \right) \\
    \leq& \sum_{i \in [N]} \int_0^1 d\theta_i \cdot\frac{\epsilon}{v(\theta_i)} \cdot \min\{v(\theta_i),4\delta(n)\} \\
    =& N\epsilon \left( \int_0^{v^{-1}(4\delta(n))} d\theta_i + \int_{v^{-1}(4\delta(n))}^1 d\theta_i \cdot \frac{4\delta(n)}{v(\theta_i)} \right).
  \end{align*} 
  By definition, $v(\theta) = F^{-1}(\theta)$ for the cumulative distribution function $F$ of the symmetric prior distribution of the continuous auction, which gives us the desired dependence on $n$. As $\delta(n) \rightarrow 0$ and $F$ is continuous with $F(0) = 0$, we have the desired result.
\end{proof}

All that remains, then, is to show that such $\epsilon_i, \gamma$ is feasible for $\Lambda_n$ for \underbar{\emph{every}} $n \in \mathbb{N}$. Following our discussion at the beginning of Section \ref{sec:disc-to-approx-cont-BCCE}, we shall assume that each $\Lambda_n$ contains bidding strategies $\lambda$ induced by the set of bidding strategies in the discretised auction; maintaining our symmetry assumption, restricted to those which are symmetric amongst the buyers. The resulting strategies are right-continuous, piecewise constant, and discontinuous at $\leq |V_n|$ points on $[0,1]$, and we will assume that any element of $\Lambda_n$ satisfies this. By Proposition \ref{prop:zero-bid} we will restrict attention to strategies normalised to $\lambda(0) = 0$ for the discretised auctions; for the corresponding strategies in the continuous auction, this implies that $\lambda(\theta_i) = 0$ for any $\theta \in [0,F(v^n_1))$. Our assumptions on $\Lambda_n$ can then be summarised:

\begin{assumption}\label{asmptn:final}
  For the continuous auction $A$ and its discretisation $A_n$, $\Lambda_n \subseteq \times_{i \in N} B_n^{[0,1]}$ contains bidding strategies $\lambda$ which satisfy:
  \begin{enumerate}
    \item (Symmetry) For each buyer $i,j$, $\lambda_i = \lambda_j$.
    \item (Discrete Containment) Let $\lambda^n \in \times_{i \in N} B_n^{V_n}$ be symmetric and normalised. Then there exists $\lambda \in \Lambda_n$ such that for any buyer $i$, and any type $v^n_k \in V_n$, $\lambda_i(\theta_i) = \lambda^n_i(F(v^n_k))$ for any $\theta_i \in [F(v^n_k),F(v^n_{k+1}))$. Moreover, $\lambda_i$ is continuous at $1$.
    \item (Normalisation) For each buyer $i$, $\lambda_i(\theta_i) = 0$ for any $\theta_i \in [0,F(v^n_1))$.
  \end{enumerate}
\end{assumption}

In Assumption \ref{asmptn:final}, we are finally able to drop any assumption of monotonicity. This will be accomplished by an extension of Proposition \ref{prop:inc-bids} to bidding strategies in $\Lambda_n$. However, the uniform prior assumption in Proposition \ref{prop:inc-bids} will necessitate us to consider \emph{monotone rearrangements} of any of its elements \emph{for the continuous auction} in our proof.

\begin{definition}
  Given $\lambda \in \Lambda_n$, for each buyer $i$, suppose the \emph{inverse image} $\lambda_i^{-1}([0,1])$ is a finite ordered subset $\{b^{in}_0, \ldots, b^{in}_{C(i)}\}$ of $B_n$ with $b^{in}_0 = 0$. For each integer $0 \leq k \leq C(i)$, let $\phi_{ik}$ be the measure of $\lambda_i^{-1}(b^{in}_k)$, i.e. the sum of the lengths of its constituent intervals. Then the \textbf{monotonisation} of $\lambda_i$, $\mon \lambda_i$, is defined such that 
  $$ \forall \ 0 \leq k \leq C, \forall \ \theta_i \in \left[ \sum_{k' = 0}^{k-1} \phi_{ik}, \sum_{k' = 0}^{k} \phi_{ik} \right), \mon \lambda_i(\theta_i) = b^{in}_k,\textnormal{ and }\mon \lambda_i(1) = b^{in}_{C(i)}.$$ We extend the monotonisation operator to $\Lambda_n$ by letting $(\mon \lambda)_j = \mon \lambda_j$ for any buyer $j$. 
\end{definition}

To prove our extension result, we first want to show an analogue of Lemma \ref{lem:v-solution}; what we want to accomplish is to convert a $\lambda$ which satisfies Assumption \ref{asmptn:final} to a symmetric bidding strategy for which each $\lambda_i$ is piecewise strictly increasing and normalised, when checking for dual feasibility.

\begin{lemma}\label{lem:prelim-inc}
  Suppose, that $\lambda$ satisfies Assumption \ref{asmptn:final}. Then there exists a symmetric and normalised sequence of bidding strategies $\lambda_m$, converging uniformly to $\lambda$ on $[0,1]$, such that each $\lambda_{mi}$ is differentiable with a strictly positive derivative at any point of continuity of $\lambda_i$ on $(0,1)$. Moreover, for both the first-price auction and the all-pay auction, 
  \begin{equation}
    \lim_{m\rightarrow \infty} \int_0^1 d\theta_i \cdot \frac{1}{v(\theta_i)} \cdot \left( U_i(\beta,\lambda_{m(-i)}|\theta_i) - U_i(\lambda_m|\theta_{i}) \right) \leq \int_0^1 d\theta_i \cdot \frac{1}{v(\theta_i)} \cdot \left( U_i(\beta,\lambda_{-i}|\theta_i) - U_i(\lambda|\theta_i) \right). \label{con:good-conv-disc}
  \end{equation}
\end{lemma}

\begin{proof}
  For convenience, suppose that the points of discontinuity of each $\lambda_i$ are rational numbers; if not, by the density of the rationals we may modify each $\lambda_i$ to be so at an arbitrarily small change on both sides of (\ref{con:good-conv-disc}) and consider the limits of such $\lambda_i$. We then divide $[0,1)$ into subintervals $[\phi_k,\phi_{k+1})$ of length $1/\ell$ for some $\ell \in \mathbb{N}$, such that $\lambda_i(\theta_i)$ is constant on $[\phi_k,\phi_{k+1})$ for each buyer $i$ and each $k$. Now, choose for each buyer $i$ a permutation $s_i$ of $[\ell]$, such that: 
  \begin{enumerate}
    \item For each $\theta_i = \phi_k + \Delta\theta_i \in [\phi_k,\phi_{k+1})$, $\mon \lambda_i(\theta_i) = \lambda_i(\phi_{s_i^{-1}(k)})$.
    \item For each $b^n \in B_n$, $s_i$ is increasing on $\{ k \in [\ell] \ | \ \lambda_i(\phi_k) = b^n\}$.
  \end{enumerate}
  By condition (1), $s_i$ induces a bijection $\mu_i : [0,1) \rightarrow [0,1)$, defined as $\mu_i(\phi_k + \Delta\theta_i) = \phi_{s_i(k)} + \Delta\theta_i$ for each $\Delta\theta_i \in [0,1/\ell)$, such that $\mon \lambda_i(\theta_i) = \lambda_i(\mu_i(\theta_i))$. Condition (2) guarantees that if $\theta_i < \theta'_i$ and $\lambda_i(\theta_i) = \lambda_i(\theta'_i)$, then $\mu_i(\theta_i) < \mu_i(\theta'_i)$. Our goal is then to modify $\lambda_i$ in a small amount, such that the bid of $\theta'_i$ always wins against that of $\theta_i$; and $\mu_i$ should be faithful to this phenomenon. 
  
  Towards this end, for each $m \in \mathbb{N}$, define a scale constant $\nu_m$ such that 
  $$\nu_m \leq \frac{1}{2m \cdot \min_{\kappa \neq \kappa'} |b^n_{\kappa} - b^n_{\kappa'}|}.$$
  Moreover, since $\beta$ is a continuous bijection between $[0,1]$ and $[0,\beta(1)]$, $\beta^{-1}$ is also a continuous bijection; hence it is uniformly continuous. Pick $\nu_m$ also small enough such that for any $\beta(\theta_i), \beta(\theta'_i) \in [0,\beta(1)]$, if $|\beta(\theta_i) - \beta(\theta'_{i})| \leq \nu_m$, then $|\theta_i - \theta'_i| \leq 1/2m|V_n|$. We then let
  $$ \lambda_{mi}(\theta_i) = \lambda_i(\theta_i) + \nu_m \cdot \frac{v(\theta_i)}{v(1)}.$$

  We compare the terms in (\ref{con:good-conv-disc}) separately, as in the proof of Lemma \ref{lem:v-solution}. We first note that $U_i(\beta|\lambda_{m(-i)}|\theta_i) = U_i(\beta|\lambda_{-i}|\theta_i)$ whenever $\beta(\theta_i) \notin \lambda_i([0,1])+[0,\nu_m]$ by construction. Moreover, $\beta^{-1}[\lambda_i([0,1])+[0,\nu_m]]$ is a union of at most $|V_n|$ closed intervals, of total measure at most $1/2m$. As a consequence, 
  $$ \int_0^1 d\theta_i \cdot \frac{1}{v(\theta_i)} \cdot \left| U_i(\beta,\lambda_{n(-i)}|\theta_i) - U_i(\beta|\theta_i) \right| \leq \frac{1}{2m}.$$

  It remains to show how the $U_i(\lambda|\theta_i)$ term compares to $U_i(\lambda_m|\theta_i)$. Note that the probability a bid $\lambda_i(\theta_i)$ wins the item for buyer $i$ remains unchanged when buyers $-i$ change their bidding strategies to $\mon \lambda_{-i}$, and likewise for the expected payment. Thus, if $\lambda_i(\theta_i) = L$, buyer $i$ wins the item with probability 
  $$ \sum_{m' = 0}^{N-1} \frac{1}{m'+1} \binom{N-1}{m'} \underline{\theta}^{N-1-m'} (\overline{\theta} - \underline{\theta})^{m'}$$
  where $[\underline{\theta},\overline{\theta}]$ is the closure of the interval $\mu_i\lambda^{-1}(L)$. Meanwhile, under bidding strategies $\lambda_m$, buyer $i$ wins the item with probability $\mu_i(\theta_i)^{N-1}$; $\lambda_m$ is injective into its image by choice of $\nu_m$. Thus, adapting the argument in Lemma \ref{lem:v-solution} and noting that $d\mu_i(\theta_i) = d\theta_i$ everywhere except for a finite set on $[0,1)$, for the first price auction
  \begin{align*}
    & \int_{\lambda^{-1}(L)} d\theta_i \cdot \frac{1}{v(\theta_i)} \cdot \left( U_i(\lambda|\theta_i)-U_i(\lambda_m|\theta_i) \right) \\ 
    = & \int_{\lambda^{-1}(L)} d\theta_i \cdot \frac{1}{v(\theta_i)} \cdot \Bigg[  \left( \sum_{m' = 0}^{N-1} \frac{1}{m'+1} \binom{N-1}{m'} \underline{\theta}^{N-1-m'} (\overline{\theta} - \underline{\theta})^{m'} \right) \cdot (v(\theta_i) - \lambda(\theta_i)) \ldots \\ & -  \mu_i(\theta_i)^{N-1} (v(\theta_i) - \lambda_m(\theta_i)) \Bigg]\\
    = & \int_{\underline{\theta}}^{\overline{\theta}} d\theta_i \cdot \frac{1}{v\mu_i^{-1}(\theta_i)} \cdot \Bigg[  \left( \sum_{m' = 0}^{N-1} \frac{1}{m'+1} \binom{N-1}{m'} \underline{\theta}^{N-1-m'} (\overline{\theta} - \underline{\theta})^{m'} \right) \cdot (\cancel{v\mu_i^{-1}(\theta_i)} - L) \ldots \\ & -  \theta_i^{N-1} (\cancel{v\mu_i^{-1}(\theta_i)} - L - \nu_m \cdot v\mu_i^{-1}(\theta_i) / v(1)) \Bigg]\\
    = & \int_{\underline{\theta}}^{\overline{\theta}} d\theta_i \cdot \frac{L}{v\mu_i^{-1}(\theta_i)} \cdot \left( \theta_i^{N-1} - \sum_{m' = 0}^{N-1} \frac{1}{m'+1} \binom{N-1}{m'} \underline{\theta}^{N-1-m'} (\overline{\theta} - \underline{\theta})^{m'} \right)  + \int_{\underline{\theta}}^{\overline{\theta}} d\theta_i \cdot \frac{\nu_m \theta_i^{N-1}}{v(1)}.
  \end{align*}
  In the last expression, as in Lemma \ref{lem:v-solution}, the first term is $\leq 0$; this time since $\mu_i^{-1}$ is increasing on $\lambda^{-1}(L)$ by construction. Meanwhile the second term is $O(1/m)$ by choice of $\nu_m$, which implies the desired result. The case of the all-pay auction follows through identical arguments.
\end{proof}

Lemma \ref{lem:prelim-inc} allows us to focus on bidding strategies which only fail to increase at their step discontinuities. Our \emph{transport lemma} then shows that, to bound the modified Wasserstein-$2$ distance of BCCE for a discretised auction to the equilibrium of the continuous auction, under suitable assumptions on the prior distribution we only need to focus on weakly increasing bidding strategies.

\begin{lemma}[Transport Lemma]\label{lem:transport}
  Suppose that $\lambda \in \Lambda_n$ satisfies Assumption \ref{asmptn:final}, then:
  \begin{enumerate}
    \item For the first-price auction,
    \begin{align*}
      & \int_0^1 d\theta_i \cdot \frac{1}{v(\theta_i)} \left( U_i(\beta,\lambda_{-i}|\theta_i) - U_i(\beta,\mon \lambda_{-i}|\theta_i) - U_i(\lambda|\theta_i) + U_i(\mon \lambda |\theta_i) \right)\\ \geq & \int_0^1 d \theta_i \cdot \theta^{N-2} \cdot \left( \frac{(\lambda_i(\theta_i)-\beta(\theta_i))^2}{2} -  \frac{(\mon \lambda_i(\theta_i)-\beta(\theta_i))^2}{2} \right)  \cdot D^{FP}, \\
      & \textnormal{ where } D^{FP} = \inf_{\theta \in (0,1]} \frac{v'(\theta) \theta^2}{(2N-3) v(\theta)^3}.
    \end{align*} 
    A sufficient condition for $D^{FP} > 0$ is the convexity of $v(\theta)$; equivalently $v'(\theta)\theta \geq v(\theta)$ on $[0,1]$ and the concavity of the prior distribution.
    \item For the all-pay auction,
    \begin{align*}
      & \int_0^1 d\theta_i \cdot \frac{1}{v(\theta_i)} \left( U_i(\beta,\lambda_{-i}|\theta_i) - U_i(\beta,\mon \lambda|\theta_i) - U_i(\lambda_{-i}|\theta_i) + U_i(\mon \lambda |\theta_i) \right) \\ 
      \geq & \int_0^1 d \theta_i \cdot \theta^{N-2} \cdot \left( \frac{(\lambda_i(\theta_i)-\beta(\theta_i))^2}{2} -  \frac{(\mon \lambda_i(\theta_i)-\beta(\theta_i))^2}{2} \right)  \cdot D^{AP}, \\
      & \textnormal{ where } D^{AP} = \inf_{\theta \in (0,1]} \frac{ v'(\theta)}{(2N-3)v(\theta)^3}.
    \end{align*} 
    As a consequence, a sufficient condition for $D^{AP} > 0$ to hold is for $1/v(\theta)$ to have an upper bound on its derivative on $(0,1]$.
  \end{enumerate}
\end{lemma}

\begin{proof}
  We will again assume that the points of discontinuity of each $\lambda_i$ are contained in the rationals; the case of irrational points of discontinuity are then covered by density arguments as in the proof of Lemma \ref{lem:prelim-inc}. Given $\lambda$, take a sequence $\lambda_m$ as in the statement of Lemma \ref{lem:prelim-inc}. If the result holds for every such $\lambda_m$, then the desired inequality holds since the Wasserstein-$2$ distance we consider is continuous in the $\sup$-norm of $\lambda$. Since the inequality holds trivially whenever $\lambda_m = \mon \lambda_m$, suppose this is not the case.
  
  Just as in the proof of Lemma \ref{lem:prelim-inc}, divide the interval $[0,1)$ to subintervals of length $1/\ell$, and identically define the permutation $s_i : [\ell] \rightarrow [\ell]$ and the induced bijection $\mu_i : [0,1) \rightarrow [0,1)$. Then, by Lemma \ref{lem:rearr}, there exists $k < k'$ such that $s_i(k) \geq k'$ and $s_i(k') \leq k$. Define the transport function $t : [0,1) \rightarrow [0,1)$,
  $$ t(\theta_i) = \begin{cases}
    \theta_i & \theta_i \notin [\phi_k,\phi_{k+1}) \cup [\phi_{k'},\phi_{k'+1}), \\
    \theta_i - \phi_k + \phi_{k'} & \theta_i \in [\phi_k,\phi_{k+1}), \\
    \theta_i - \phi_{k'} + \phi_k & \theta_i \in [\phi_{k'},\phi_{k'+1}).
  \end{cases}$$
  Intuitively, we transport the bids between the intervals $[\phi_k,\phi_{k+1})$ and $[\phi_{k'},\phi_{k'+1})$, letting $\lambda'_m = \lambda_m \circ t$. Note that the transport is order correcting for the bidding strategy $\lambda_m$; as $\mon \lambda_m = \lambda_m \mu$,
  $$ \lambda_m(\phi_{k'}) = \mon \lambda_m(\phi_{s_i(k')}) < \mon \lambda_m(\phi_{s_i(k)}) = \lambda_m(\phi_k).$$
  As a consequence, $\mon \lambda = \lambda \circ t_1 t_2 \ldots t_M$ for some finite sequence of transport functions $t_s$. If we can show that, for a given transport function,
  \begin{align*}
    & \int_0^1 d\theta_i \cdot \frac{1}{v(\theta_i)} \left( U_i(\beta,\lambda_{m(-i)}|\theta_i) - U_i(\beta,(\lambda_m \circ t)_{-i}|\theta_i) - U_i(\lambda_{m}|\theta_i) + U_i(\lambda_m \circ t |\theta_i) \right)  \\ 
    \geq & \int_0^1 d \theta_i \cdot \theta^{N-2} \cdot \left( \frac{(\lambda_{mi}(\theta_i)-\beta(\theta_i))^2}{2} - \frac{((\lambda_m\circ t)_i(\theta_i)-\beta(\theta_i))^2}{2}  \right) \cdot D^{\mathcal{A}}, 
  \end{align*} 
  then an inductive argument implies our result via evaluation of a telescoping sum. Note that the $U_i(\beta,\lambda_{m(-i)})$ terms cancel out on the right hand side. We thus only need to bound the change in the $U_i(\lambda_m|\theta_i)$ terms. Further note that the utilities and the Wasserstein-$2$ distance integrand change only for types $\theta_i$ such that $t(\theta_i) \neq \theta_i$, which implies that we need to show that 
  \begin{align}
    & \int_{\theta_i \neq t(\theta_i)} d\theta_i \cdot \frac{1}{v(\theta_i)} \left(U_i(\lambda_m \circ t |\theta_i)- U_i(\lambda_{m}|\theta_i)\right) \nonumber \\ \geq & \int_{\theta_i \neq t(\theta_i)} d\theta_i \cdot \theta^{N-2} \cdot  \left( \frac{(\lambda_{mi}(\theta_i)-\beta(\theta_i))^2}{2} - \frac{((\lambda_m\circ t)_i(\theta_i)-\beta(\theta_i))^2}{2}  \right) \cdot D^{\mathcal{A}}. \label{eq:transport-intermediate}
  \end{align}
  The right hand side then can be expanded and bounded above,
  \begin{align}
    &\int_0^{1/\ell} d\theta \cdot D^\mathcal{A} \cdot \Bigg[ \left( \beta(\phi_{k'}+\theta) (\phi_{k'}+\theta)^{N-2} - \beta(\phi_k + \theta)(\phi_k + \theta)^{N-2}\right) (\lambda_m(\phi_k+\theta) - \lambda_m(\phi_{k'}+\theta)) \nonumber \\
    &\ldots - \left( (\phi_{k'}+\theta)^{N-2} - (\phi_k+\theta)^{N-2} \right) \left( \frac{\lambda_m(\phi_k+\theta)^2 -\lambda_m(\phi_{k'}+\theta)^2 }{2} \right) \Bigg] \leq \ldots \nonumber \\
    & \int_0^{1/\ell} d\theta \cdot D^\mathcal{A} \cdot \left( \beta(\phi_{k'}+\theta) (\phi_{k'}+\theta)^{N-2} - \beta(\phi_k + \theta)(\phi_k + \theta)^{N-2}\right) (\lambda_m(\phi_k+\theta) - \lambda_m(\phi_{k'}+\theta)). \label{eq:Wasserstein-compare}
  \end{align}

  Now consider the first-price auction. Expanding the left hand side of (\ref{eq:transport-intermediate}),
  \begin{align}
    & \int_{\theta_i \neq t(\theta_i)} d\theta_i \cdot \frac{1}{v(\theta_i)} \left(U_i(\lambda_m \circ t |\theta_i)- U_i(\lambda_{m}|\theta_i)\right) \nonumber \\ = & \int_0^{1/\ell} d\theta_i \cdot \left(\frac{1}{v(\phi_k + \theta)} - \frac{1}{v(\phi_{k'} + \theta)}\right)\left( \lambda_m(\phi_k + \theta) (\phi_{s_i(k)}+\theta)^{N-1} - \lambda_m(\phi_{k'}+\theta)(\phi_{s_i(k')}+\theta)^{N-1}  \right) \nonumber \\
    \geq & \int_0^{1/\ell} d\theta_i \cdot \left(\frac{1}{v(\phi_k + \theta)} - \frac{1}{v(\phi_{k'} + \theta)}\right)\left( \lambda_m(\phi_k + \theta) (\phi_{k'}+\theta)^{N-1} - \lambda_m(\phi_{k'}+\theta)(\phi_{k}+\theta)^{N-1}  \right).\label{eq:first-price-compare-1}
  \end{align}
  The inequality here holds by our choice of an order correcting swap which satisfies $s_i(k') \leq k < k' \leq s_i(k)$. We thus want to show that (\ref{eq:first-price-compare-1}) is no less than (\ref{eq:Wasserstein-compare}), which we again accomplish by bounding the integrands pointwise. Writing $\theta^{\pm}$ for the higher and lower types as before, and using the shorthand $\lambda_m(\phi_{k}+\theta) = \lambda^+, \lambda_m(\phi_{k'}+\theta) = \lambda^{-}$, we need to show that 
  $$ \left( \frac{1}{v(\theta^-)} - \frac{1}{v(\theta^+)}\right) (\theta^{+ \ N-1} \lambda^+ - \theta^{- \ N-1} \lambda^{-}) \geq D^{FP} \cdot (\beta(\theta^+) \theta^{+ \ N-2} - \beta(\theta^-) \theta^{- \ N-2}) (\lambda^+ - \lambda^-)$$
  for any $\lambda^+ \geq \lambda^-$ and $\theta^+ \geq \theta^- > 0$. Both sides of the inequality are linear in $\lambda^+$, and equality holds when $\lambda^+ = \lambda^-$. So our desired inequality holds for $D^{FP} > 0$ if and only if 
  $$ \left( \frac{1}{v(\theta^-)} - \frac{1}{v(\theta^+)}\right) \theta^{+ \ N-1} \geq D^{FP} \cdot (\beta(\theta^+) \theta^{+ \ N-2} - \beta(\theta^-) \theta^{- \ N-2})$$
  for any $\theta^+ \geq \theta^- > 0$. Again, equality holds whenever $\theta^+ = \theta^-$, so a sufficient condition is for the left hand side to be increasing faster than the right hand side in $\theta^+$. Differentiating both sides, we get 
  $$ (N-1) \left( \frac{1}{v(\theta^-)} - \frac{1}{v(\theta^+)}\right) \theta^{+ \ N-2} - \theta^{+ \ N-1} \frac{d}{d\theta^+} \left( \frac{1}{v(\theta^+)}\right) \geq D^{FP} \cdot \frac{d}{d\theta^+} \left( \beta(\theta^+) \theta^{+ \ N-2} \right).$$
  A sufficient condition for this to hold is 
  $$ D^{FP} \leq \inf_{\theta \in (0,1]} \frac{\theta^{N-1} \frac{d}{d\theta} \left( \frac{-1}{v(\theta)}\right)}{\frac{d}{d\theta} \left( \beta(\theta) \theta^{N-2} \right)},$$
  as the first term on the left hand side is positive and $\beta(\theta)\theta^{N-2}$ is increasing in $\theta$. We remark that $\beta(\theta) \leq v(\theta)$ and $\theta \beta'(\theta) = (N-1) (v(\theta)- \beta(\theta)) \leq (N-1)v(\theta)$. This implies the upper bound on the denominator,
  \begin{align*}
    \frac{d}{d\theta} (\beta(\theta)\theta^{N-2}) & = \beta'(\theta) \theta^{N-2} + (N-2) \beta(\theta) \theta^{N-3} \\ & \leq \theta^{N-3} \left(N-1 + N-2\right) v(\theta) = \theta^{N-3} \left(2N-3\right) v(\theta).
  \end{align*}
  We thus obtain our final bound on $D^{FP}$, 
  $$ D^{FP} \leq \inf_{\theta \in (0,1]} \frac{v'(\theta) \theta^2}{(2N-3) v(\theta)^3} = \inf_{\theta \in (0,1]} \frac{1}{2N-3} \cdot \frac{\theta v'(\theta)}{v(\theta)} \cdot \frac{\theta}{v(\theta)} \cdot \frac{1}{v(\theta)}.$$
  Whenever $v$ is convex, $\theta v'(\theta) / v(\theta) \geq 1$ and $\theta/v(\theta), 1/v(\theta) \geq 1/v(1)$, which provides the sufficient condition.
\end{proof}

By the transport lemma, potentially up to a constant factor deterioration of the Wasserstein-$2$ distance we consider, we may restrict attention to strictly increasing bidding strategies on $[0,1]$ with step discontinuities to check the validity of our dual solution. For the final step of our dual feasibility proof, we will want to smooth out these discontinuities at vanishing loss, which will allow us to invoke Theorem \ref{thm:thm-weak-uniq} and Proposition \ref{prop:wasserstein-prior} from Section \ref{sec:weak-uniqueness}. The high level idea is that, in our previous proofs of uniqueness, steepness of the bidding strategies were cancelled out by our change of variables $\theta \rightarrow \lambda^{-1}\beta(\theta)$; so replacing step discontinuities with very steep smooth curves should not hurt our result.

\begin{lemma}\label{lem:smoothing}
  Suppose for a continuous first-price or all-pay auction $A$ that $\lambda$ are buyers' bidding strategies, such that $\lambda_i : [0,1] \rightarrow B$ is strictly increasing and satisfies Assumption \ref{asmptn:sup-weak} except for finitely many step discontinuities. Further suppose that all step discontinuities of $\lambda_i$ are disjoint from $[0,F(v^n_1))$ for some $v^n_1 > 0$. Then there exists a sequence $\lambda_m$ satisfying Assumption \ref{asmptn:sup-weak} such that the dual constraint (\ref{cons:inf-disc-cont-dual-gen}) is faithful to its limit on both sides,
    \begin{align*}
      \lim_{m\rightarrow \infty} \int_0^1 d\theta_i \cdot \frac{1}{v(\theta_i)} \cdot \left( U_i(\beta,\lambda_{m(-i)}|\theta_i) - U_i(\lambda_m|\theta_{i}) \right) & = \int_0^1 d\theta_i \cdot \frac{1}{v(\theta_i)} \cdot \left( U_i(\beta,\lambda_{-i}|\theta_i) - U_i(\lambda|\theta_i) \right), \\
      \lim_{m\rightarrow \infty} \int_0^1 d\theta_i \cdot \frac{\theta^{N-2}(\lambda_{mi}(\theta_i)-\beta(\theta_i))^2}{2} & =  \int_0^1 d\theta_i \cdot \frac{\theta^{N-2}(\lambda_{i}(\theta_i)-\beta(\theta_i))^2}{2}.
    \end{align*}
\end{lemma}

\begin{proof}
  Let $\{d_1,\ldots,d_{|\mathcal{D}|}\} = \mathcal{D} \subseteq (0,1]$ be the ordered finite set of discontinuities of $\lambda_i$, and denote $d_0 = 0$. Let $d = \min_{1 \leq k \leq |\mathcal{D}|} d_k - d_{k-1}$. For each $m$, define $\lambda_m$ such that for any buyer $i$,
  $$ \lambda_{mi}(\theta) = \begin{cases}
    \lambda_i(\theta) & \theta \notin (d_k-d/2m,d_k) \textnormal{ for any } d_k \in \mathcal{D}, \textnormal{ and}\\
    \lambda_i(\theta) + c_{km}(\theta) & \exists \ d_k \in \mathcal{D}, \theta \in (d_k-d/2m,d_k),
  \end{cases}$$
  where $c_{km}(\theta)$ is the integral of a $C^\infty$ bump function with support $(d_k-d/2m,d_k)$ such that $c_{km}(d_k) = \lim_{h \downarrow 0} \lambda_i(d_k+h) - \lambda_i(d_k-h)$.

  We first evaluate the easier limit. To obtain the bound on the Wasserstein-$2$ distance, note that the measure of the set on which $\lambda_i \neq \lambda_{mi}$ equals $d|\mathcal{D}|/2m$, on which $0 \leq \lambda_m(\theta) - \lambda(\theta) \leq \max B$. This implies a bound $(\max B)^2 d |\mathcal{D}| / 2m$, which $\rightarrow 0$ as $m \rightarrow \infty$. Then to bound the change in expected utilities, we first note that 
  \begin{align*}
    & \left|\int_0^1 d\theta_i \cdot \frac{1}{v(\theta_i)} \cdot \left( U_i(\beta,\lambda_{m(-i)}|\theta_i) - U_i(\lambda_m|\theta_{i}) - U_i(\beta,\lambda_{-i}|\theta_i) + U_i(\lambda|\theta_i) \right) \right|\\
    \leq & \int_0^1 d\theta_i \cdot \frac{1}{v(\theta_i)} \cdot \left( \left| U_i(\beta,\lambda_{m(-i)}|\theta_i) - U_i(\beta,\lambda_{-i}|\theta_i) \right| + \left|U_i(\lambda_m|\theta_{i}) - U_i(\lambda|\theta_i)\right| \right) 
  \end{align*}
  The integral of the second term is then bounded as $\lambda_m(\theta) - \lambda(\theta) \neq 0$ on a set of measure $d|\mathcal{D}|/2m$. Then, since both $\lambda_m$ and $\lambda$ are strictly increasing,
  $$
    \frac{1}{v(\theta_i)} \cdot (U_i(\lambda_m|\theta_{i}) - U_i(\lambda|\theta_i)) = \frac{1}{v(\theta_i)} \cdot \theta_i^{N-2} (\lambda_{mi}(\theta_i) - \lambda_i(\theta_i)) \\
     \leq \frac{2}{v(F(v^n_1)/2)} \cdot (\max B)^2.$$
  We remark that the bound $1/v(\theta_i)$ follows since the set of discontinuities of $\lambda_i$ are contained in $[F(v^n_1),1]$, which implies that $d_k - d/2m \geq F(v^n_1)/2$. This implies a bound on the integral,
  $$ \int_0^1 d\theta_i \cdot \frac{1}{v(\theta_i)} \cdot \left|U_i(\lambda_m|\theta_{i}) - U_i(\lambda|\theta_i)\right| \leq \frac{d|\mathcal{D}|(\max B)^2}{m\cdot v(F(v^n_1)/2)} \rightarrow 0 \textnormal{ as } m \rightarrow \infty.$$
  To see the bound on the first term, note that the probability a bid $\beta(\theta_i)$ wins the item changes whenever $\beta(\theta_i) \in \lambda_i\left(\cup_k (d_k-d/2m,d)\right)$, in which case the change in probability is at most 
  $$d_k^{N-1} - (d_k-d/2m)^{N-1} \leq (N-1) \frac{d\cdot d_k^{N-2}}{2m} \leq \frac{(N-1)d}{2m}.$$ Therefore, for both the first-price and the all-pay auction,
  \begin{align*}
    \int_0^1 d\theta_i \cdot \frac{1}{v(\theta_i)} \cdot \left| U_i(\beta,\lambda_{m(-i)}|\theta_i) - U_i(\beta,\lambda_{-i}|\theta_i) \right| & \leq \int_0^1 d\theta_i \cdot \frac{1}{v(\theta_i)} \cdot \left( \frac{v(\theta_i) \cdot (N-1)d}{2m} \right),
  \end{align*}which is at most $(N-1)d/2m$, and again $\rightarrow 0$ as $m \rightarrow \infty$.
\end{proof}

Our main result then follows by piecing together the lemmata above in a suitable fashion.

\begin{theorem}\label{thm:main-disc-result}
  For a continuous auction $A$ of format $\mathcal{A}$, let $A_n$ be a $\delta$-fine family of its discretisations. If $A$ is either 
  \begin{enumerate}
    \item an all-pay auction such that $\frac{d}{d\theta}\left(\frac{1}{v(\theta)}\right)$ has a negative upper bound on $(0,1]$, or 
    \item a first-price auction such that $\theta/v(\theta)$ is strictly decreasing with an upper bound on its derivative,
  \end{enumerate}
  then for any BCCE $\sigma^n$ of $A_n$ with symmetric support,
  $$\sum_{\lambda \in \Lambda_n} \sigma(\lambda) \cdot \sum_{i \in N} \int_0^1 d\theta_i \cdot \theta^{N-2} \cdot \frac{(\lambda_i(\theta_i)-\beta(\theta_i))^2}{2} \leq \frac{N\left[ F(4\delta(n)) +  \int_{F(4\delta(n))}^1 d\theta_i \cdot \frac{4\delta(n)}{v(\theta_i)} \right]}{\min\{C^\mathcal{A},D^\mathcal{A}\}}.$$
\end{theorem}

\begin{proof}
  Let $\Lambda_n$ be the set of bidding strategies for the continuous action induced by the symmetric bidding strategies of $A_n$. Then $\sigma^n$ has its support contained in $\Lambda_n$, which satisfies Assumption \ref{asmptn:final}. By Proposition \ref{prop:final-bounds-2}, we want to show that $\epsilon_i = 1 / \min\{C^\mathcal{A},D^\mathcal{A}\}$ and $\gamma = 0$ is a feasible dual solution for (\ref{opt:dual-disc-cont-inf}). By the transport lemma, since $\epsilon_i \geq 1/D^\mathcal{A}$, we may restrict attention to the case when $\lambda$ is weakly increasing. The dual constraints associated with such a $\lambda$ is then satisfied if it is satisfied for each $\lambda_m$ defined in the proof of Lemma \ref{lem:prelim-inc}. Each such $\lambda_m$ is strictly increasing by construction; this implies that the associated dual constraint holds if it holds for each $\lambda_{mk}$ defined following the proof of Lemma \ref{lem:smoothing}. Since $\epsilon_i \geq 1/C^\mathcal{A}$, this is assured by Theorem \ref{thm:thm-weak-uniq} and Proposition \ref{prop:wasserstein-prior}.
\end{proof}

To illustrate how Theorem \ref{thm:main-disc-result} can be applied, we compute explicitly the implied Wasserstein-$2$ distance bounds for the discretised power law distribution on $[0,1]$ with parameter $\alpha > 0$, which has a cumulative distribution function $F(v) = v^\alpha$. Towards this end, it is helpful to note down the sufficient conditions of Theorem \ref{thm:main-disc-result} in valuation space. We remark  
\begin{align*}
    \frac{d}{d\theta}\left(\frac{1}{v(\theta)}\right) & = \frac{-v'(\theta)}{v^2(\theta)} = \frac{-1}{f(v)v^2}, \textnormal{ and } \\
    \frac{d}{d\theta}\left(\frac{\theta}{v(\theta)}\right) & = \frac{v(\theta) - \theta v'(\theta)}{v^2(\theta)} = \frac{v f(v) - F(v)}{f(v)v^2}
\end{align*}
for any pair $\theta \in (0,1]$ and $v \in (0,v(1)]$ such that $\theta = F(v)$, where $F$ is the symmetric prior distribution of the continuous auction and $f$ its probability density function. Thus, for the all-pay auction, we may recover an approximation constant as long as $f(v) \leq K/v^2$ for some constant $K$, an assumption that holds for any reasonable probability density function. For the first-price auction, a sufficient condition is instead a strictly negative second derivative of $F(v)$ on $[0,v(1)]$. Thus, for any $\alpha > 0$, the power law distribution satisfies the sufficient condition for the all-pay auction, and for any $\alpha < 1$ it does so for the first-price auction as well. The Wasserstein-$2$ distance bounds in this case then follow simply from calculating the associated constants $C^{\mathcal{A}}, D^\mathcal{A}$.

\begin{proposition}\label{prop:bounds}
    Let $A_n$ be a $\delta$-fine family of discretised auctions, with a power law prior distribution with parameter $\alpha$. Then any BCCE $\sigma^n$ of $A_n$ with symmetric support satisfies 
    $$ \sum_{\lambda \in \Lambda_n} \sigma(\lambda) \cdot \sum_{i \in N} \int_0^1 d\theta_i \cdot \theta^{N-2} \cdot \frac{(\lambda_i(\theta_i)-\beta(\theta_i))^2}{2} \leq K(\alpha,N) N \begin{cases}
        4\delta(n) (1-\ln(4\delta(n)) & \alpha = 1 \\
        4\delta(n) \left( \frac{\alpha}{\alpha-1} \right) - \frac{(4\delta(n))^\alpha}{\alpha-1} & \alpha \neq 1
    \end{cases},$$
    \begin{enumerate}
        \item where if $A_n$ is an all-pay auction and $\alpha > 0$, 
       $K(\alpha,N) = \max\{ \max B, \alpha(2N-3) \}$, and
        \item if $A_n$ is a first-price auction and $\alpha \in (0,1)$, $K(\alpha, N) = \max\left\{ \max B, \alpha(2N-3), \frac{\alpha(N-1)}{1-\alpha} \right\}$.
    \end{enumerate}
\end{proposition}

We defer the associated calculations to the appendix, as they follow from a straightforward application of Theorem \ref{thm:main-disc-result}. As a consequence, for fixed $\alpha > 0$ and naive discretisations $V_n = B_n = \{0,1/n,\ldots,1-1/n\}$ of the valuation and bidding spaces, we infer a $\tilde{O}(1/n^{\min\{\alpha,1\}})$ distance bound for the all-pay auction, and an $\tilde{O}(1/n^\alpha)$ bound for the first-price auction whenever $\alpha \in (0,1)$. Moreover, implementing Algorithm \ref{alg:self-play} allows us to compute a BCCE with the associated distance guarantees.

\section{Future Directions \& Open Questions}

We provide an appropriate framework to study how close the learnable outcomes of an auction is to the unique, a priori known equilibrium of its continuous counterpart. Our results showcase both the strength and the limitations of no-regret learning in explaining convergence behaviour. Experimentally, we observe that it explains convergence in all-pay auctions generally, and for first-price auctions, in settings with concave priors. We prove this convergence for deterministic self-play via explicit dual solutions. Our primal-dual approach allows us not only to establish the convergence properties of no-regret learning with deterministic self-play, but obtain uniqueness of equilibrium results for the BCCE of the continuous auction in line with traditional economics literature on auctions, and identify \emph{how} uniqueness of equilibrium may fail by inspecting when dual solutions are no longer valid. We remark that our variational approach in our proofs uses an equilibrium known to be unique, but only through its defining differential equations. We thus expect our approach to be extendable in a larger variety of auctions and contest with unique equilibria defined as solutions to differential equations, such as asymmetric single-item auctions, Tullock contests, or Bertrand oligopoly models.

However, we currently only possess experimental results for the setting of a fixed number of competing buyers learning to bid in the auction by repeated play while using no-regret learning algorithms. The sole barrier here is the symmetry assumption\footnote{We emphasise again that the supersymmetry constraints in numerical work are without loss of generality, and are distinct from the symmetric support assumption we make. As a consequence, our experimental results are valid when buyers implement their no-regret learning algorithms independently.}. While symmetry is also a standard assumption in economics literature, the form we use allows us only to evaluating the convergence guarantees of deterministic self-play. We remark that all other aspects of Assumption \ref{asmptn:final} -- i.e. discrete containment and normalisation -- hold without loss of generality for the BCCE of discretised auctions. Lifting the symmetry assumption is thus the sole remaining challenge in providing bounds for no-regret learning with general ex-interim utility feedback. 

Our work points to a deeper question, though, and perhaps there are alternate ways to build upon our framework than addressing the abovementioned gaps in our technical knowledge. Our results point out that the no-regret property in of itself is an insufficient explanation for the convergence behaviour observed in previous experimental work \cite{BBLS17,BFHKS21,soda2023}. In particular, even our symmetric first-price auctions have off-equilibrium BCCE when the prior c.d.f. is non-concave, while convergence is observed in a much wider class of auctions. This implies that a convex description of learnable outcomes, i.e. the (convex hull of) histories of play reached by these algorithms, is a smaller set than the set of BCCE of these auctions. Furthermore, while experimental results suggest that these auctions have a unique agent-normal form Bayesian CE, these algorithms in general do not have the no-internal regret property. These point out that these learning algorithms may satisfy an \emph{alternate, stronger form of coarse correlated equilibrium}, or namely, a refinement of the no-external regret property which is nevertheless independent from internal regret. \emph{In other words, we are missing constraints in our primal formulation.} A recent result of \citet{piliouras2022evolutionary} points in this direction, where they show replicator dynamics satisfies the strongest possible form of swap regret in $2 \times 2$ normal form games. 

\subsubsection*{Acknowledgments} 
  We are thankful to Jason Hartline for advice on the construction of Example \ref{ex:necessity}. Thanks go to Stephan Lunowa, Matthias Oberlechner, Fabian Pieroth, Evgeny Shindin, and Gideon Weiss. The project was funded by the Deutsche Forschungsgemeinschaft (DFG, German Research Foundation) under Grant No. BI 1057/9.

\bibliographystyle{plain}
\bibliography{bibliography}

%%% -*-BibTeX-*-
%%% Do NOT edit. File created by BibTeX with style
%%% ACM-Reference-Format-Journals [18-Jan-2012].

\begin{thebibliography}{73}

%%% ====================================================================
%%% NOTE TO THE USER: you can override these defaults by providing
%%% customized versions of any of these macros before the \bibliography
%%% command.  Each of them MUST provide its own final punctuation,
%%% except for \shownote{}, \showDOI{}, and \showURL{}.  The latter two
%%% do not use final punctuation, in order to avoid confusing it with
%%% the Web address.
%%%
%%% To suppress output of a particular field, define its macro to expand
%%% to an empty string, or better, \unskip, like this:
%%%
%%% \newcommand{\showDOI}[1]{\unskip}   % LaTeX syntax
%%%
%%% \def \showDOI #1{\unskip}           % plain TeX syntax
%%%
%%% ====================================================================

\ifx \showCODEN    \undefined \def \showCODEN     #1{\unskip}     \fi
\ifx \showDOI      \undefined \def \showDOI       #1{#1}\fi
\ifx \showISBNx    \undefined \def \showISBNx     #1{\unskip}     \fi
\ifx \showISBNxiii \undefined \def \showISBNxiii  #1{\unskip}     \fi
\ifx \showISSN     \undefined \def \showISSN      #1{\unskip}     \fi
\ifx \showLCCN     \undefined \def \showLCCN      #1{\unskip}     \fi
\ifx \shownote     \undefined \def \shownote      #1{#1}          \fi
\ifx \showarticletitle \undefined \def \showarticletitle #1{#1}   \fi
\ifx \showURL      \undefined \def \showURL       {\relax}        \fi
% The following commands are used for tagged output and should be
% invisible to TeX
\providecommand\bibfield[2]{#2}
\providecommand\bibinfo[2]{#2}
\providecommand\natexlab[1]{#1}
\providecommand\showeprint[2][]{arXiv:#2}

\bibitem[Alon and Naor(2004)]%
        {alon2004approximating}
\bibfield{author}{\bibinfo{person}{Noga Alon} {and} \bibinfo{person}{Assaf Naor}.} \bibinfo{year}{2004}\natexlab{}.
\newblock \showarticletitle{Approximating the cut-norm via {G}rothendieck's inequality}. In \bibinfo{booktitle}{\emph{Proceedings of the thirty-sixth annual ACM symposium on Theory of computing}}. \bibinfo{pages}{72--80}.
\newblock


\bibitem[Anderson and Nash(1987)]%
        {anderson1987linear}
\bibfield{author}{\bibinfo{person}{E.J. Anderson} {and} \bibinfo{person}{P. Nash}.} \bibinfo{year}{1987}\natexlab{}.
\newblock \bibinfo{booktitle}{\emph{Linear Programming in Infinite-dimensional Spaces: Theory and Applications}}.
\newblock \bibinfo{publisher}{Wiley}.
\newblock
\showISBNx{9780471912507}
\showLCCN{lc86032579}


\bibitem[Andrade et~al\mbox{.}(2021)]%
        {AFP21}
\bibfield{author}{\bibinfo{person}{Gabriel~P. Andrade}, \bibinfo{person}{Rafael Frongillo}, {and} \bibinfo{person}{Georgios Piliouras}.} \bibinfo{year}{2021}\natexlab{}.
\newblock \showarticletitle{Learning in Matrix Games can be Arbitrarily Complex}. In \bibinfo{booktitle}{\emph{Proceedings of Thirty Fourth Conference on Learning Theory}} \emph{(\bibinfo{series}{Proceedings of Machine Learning Research}, Vol.~\bibinfo{volume}{134})}, \bibfield{editor}{\bibinfo{person}{Mikhail Belkin} {and} \bibinfo{person}{Samory Kpotufe}} (Eds.). \bibinfo{publisher}{PMLR}, \bibinfo{pages}{159--185}.
\newblock
\urldef\tempurl%
\url{https://proceedings.mlr.press/v134/andrade21a.html}
\showURL{%
\tempurl}


\bibitem[Arora et~al\mbox{.}(2012)]%
        {arora2012multiplicative}
\bibfield{author}{\bibinfo{person}{Sanjeev Arora}, \bibinfo{person}{Elad Hazan}, {and} \bibinfo{person}{Satyen Kale}.} \bibinfo{year}{2012}\natexlab{}.
\newblock \showarticletitle{The multiplicative weights update method: a meta-algorithm and applications}.
\newblock \bibinfo{journal}{\emph{Theory of computing}} \bibinfo{volume}{8}, \bibinfo{number}{1} (\bibinfo{year}{2012}), \bibinfo{pages}{121--164}.
\newblock


\bibitem[Aumann(1974)]%
        {Aumann74}
\bibfield{author}{\bibinfo{person}{Robert~J. Aumann}.} \bibinfo{year}{1974}\natexlab{}.
\newblock \showarticletitle{Subjectivity and correlation in randomized strategies}.
\newblock \bibinfo{journal}{\emph{Journal of Mathematical Economics}} \bibinfo{volume}{1}, \bibinfo{number}{1} (\bibinfo{year}{1974}), \bibinfo{pages}{67--96}.
\newblock
\showISSN{0304-4068}
\urldef\tempurl%
\url{https://doi.org/10.1016/0304-4068(74)90037-8}
\showDOI{\tempurl}


\bibitem[Aumann(1987)]%
        {aumann1987correlated}
\bibfield{author}{\bibinfo{person}{Robert~J. Aumann}.} \bibinfo{year}{1987}\natexlab{}.
\newblock \showarticletitle{Correlated equilibrium as an expression of Bayesian rationality}.
\newblock \bibinfo{journal}{\emph{Econometrica: Journal of the Econometric Society}} (\bibinfo{year}{1987}), \bibinfo{pages}{1--18}.
\newblock


\bibitem[Banchio and Mantegazza(2023)]%
        {banchio2023artificial}
\bibfield{author}{\bibinfo{person}{Martino Banchio} {and} \bibinfo{person}{Giacomo Mantegazza}.} \bibinfo{year}{2023}\natexlab{}.
\newblock \showarticletitle{Artificial intelligence and spontaneous collusion}.
\newblock \bibinfo{journal}{\emph{Available at SSRN}} (\bibinfo{year}{2023}).
\newblock


\bibitem[Bergemann et~al\mbox{.}(2017)]%
        {bergemann2017first}
\bibfield{author}{\bibinfo{person}{Dirk Bergemann}, \bibinfo{person}{Benjamin Brooks}, {and} \bibinfo{person}{Stephen Morris}.} \bibinfo{year}{2017}\natexlab{}.
\newblock \showarticletitle{First-price auctions with general information structures: Implications for bidding and revenue}.
\newblock \bibinfo{journal}{\emph{Econometrica}} \bibinfo{volume}{85}, \bibinfo{number}{1} (\bibinfo{year}{2017}), \bibinfo{pages}{107--143}.
\newblock


\bibitem[Bergemann and Morris(2016)]%
        {bergemann2016bce}
\bibfield{author}{\bibinfo{person}{Dirk Bergemann} {and} \bibinfo{person}{Stephen Morris}.} \bibinfo{year}{2016}\natexlab{}.
\newblock \showarticletitle{Bayes correlated equilibrium and the comparison of information structures in games}.
\newblock \bibinfo{journal}{\emph{Theoretical Economics}} \bibinfo{volume}{11}, \bibinfo{number}{2} (\bibinfo{year}{2016}), \bibinfo{pages}{487--522}.
\newblock
\urldef\tempurl%
\url{https://doi.org/10.3982/TE1808}
\showDOI{\tempurl}
\showeprint{https://onlinelibrary.wiley.com/doi/pdf/10.3982/TE1808}


\bibitem[Bichler et~al\mbox{.}(2021)]%
        {BFHKS21}
\bibfield{author}{\bibinfo{person}{Martin Bichler}, \bibinfo{person}{Maximilian Fichtl}, \bibinfo{person}{Stefan Heidekr\"{u}ger}, \bibinfo{person}{Nils Kohring}, {and} \bibinfo{person}{Paul Sutterer}.} \bibinfo{year}{2021}\natexlab{}.
\newblock \showarticletitle{Learning equilibria in symmetric auction games using artificial neural networks}.
\newblock \bibinfo{journal}{\emph{Nature Machine Intelligence}}  \bibinfo{volume}{3} (\bibinfo{year}{2021}), \bibinfo{pages}{687--695}.
\newblock
\urldef\tempurl%
\url{https://doi.org/10.1038/s42256-021-00365-4}
\showDOI{\tempurl}


\bibitem[Bichler et~al\mbox{.}(2023a)]%
        {soda2023}
\bibfield{author}{\bibinfo{person}{Martin Bichler}, \bibinfo{person}{Maximilian Fichtl}, {and} \bibinfo{person}{Matthias Oberlechner}.} \bibinfo{year}{2023}\natexlab{a}.
\newblock \showarticletitle{Computing {Bayes-Nash} equilibrium in auction games via gradient dynamics}.
\newblock \bibinfo{journal}{\emph{Operations Research}}  \bibinfo{volume}{to appear} (\bibinfo{year}{2023}).
\newblock


\bibitem[Bichler et~al\mbox{.}(2023b)]%
        {bichler2023convergence}
\bibfield{author}{\bibinfo{person}{Martin Bichler}, \bibinfo{person}{Stephan~B Lunowa}, \bibinfo{person}{Matthias Oberlechner}, \bibinfo{person}{Fabian~R Pieroth}, {and} \bibinfo{person}{Barbara Wohlmuth}.} \bibinfo{year}{2023}\natexlab{b}.
\newblock \showarticletitle{On the Convergence of Learning Algorithms in Bayesian Auction Games}.
\newblock \bibinfo{journal}{\emph{arXiv preprint arXiv:2311.15398}} (\bibinfo{year}{2023}).
\newblock


\bibitem[Blum et~al\mbox{.}(2010)]%
        {blum2010routing}
\bibfield{author}{\bibinfo{person}{Avrim Blum}, \bibinfo{person}{Eyal Even-Dar}, {and} \bibinfo{person}{Katrina Ligett}.} \bibinfo{year}{2010}\natexlab{}.
\newblock \showarticletitle{Routing without regret: On convergence to Nash equilibria of regret-minimizing algorithms in routing games}.
\newblock \bibinfo{journal}{\emph{Theory of Computing}} \bibinfo{volume}{6}, \bibinfo{number}{1} (\bibinfo{year}{2010}), \bibinfo{pages}{179--199}.
\newblock


\bibitem[Bosshard et~al\mbox{.}(2017)]%
        {BBLS17}
\bibfield{author}{\bibinfo{person}{Vitor Bosshard}, \bibinfo{person}{Benedikt B\"{u}nz}, \bibinfo{person}{Benjamin Lubin}, {and} \bibinfo{person}{Sven Seuken}.} \bibinfo{year}{2017}\natexlab{}.
\newblock \showarticletitle{Computing {B}ayes- {N}ash Equilibria in Combinatorial Auctions with Continuous Value and Action Spaces}. In \bibinfo{booktitle}{\emph{Proceedings of the 26th International Joint Conference on Artificial Intelligence}} (Melbourne, Australia) \emph{(\bibinfo{series}{IJCAI'17})}. \bibinfo{publisher}{AAAI Press}, \bibinfo{pages}{119–127}.
\newblock
\showISBNx{9780999241103}


\bibitem[Braverman et~al\mbox{.}(2018)]%
        {braverman2018selling}
\bibfield{author}{\bibinfo{person}{Mark Braverman}, \bibinfo{person}{Jieming Mao}, \bibinfo{person}{Jon Schneider}, {and} \bibinfo{person}{Matt Weinberg}.} \bibinfo{year}{2018}\natexlab{}.
\newblock \showarticletitle{Selling to a no-regret buyer}. In \bibinfo{booktitle}{\emph{Proceedings of the 2018 ACM Conference on Economics and Computation}}. \bibinfo{pages}{523--538}.
\newblock


\bibitem[Cai et~al\mbox{.}(2022)]%
        {cai2022finite}
\bibfield{author}{\bibinfo{person}{Yang Cai}, \bibinfo{person}{Argyris Oikonomou}, {and} \bibinfo{person}{Weiqiang Zheng}.} \bibinfo{year}{2022}\natexlab{}.
\newblock \showarticletitle{Finite-time last-iterate convergence for learning in multi-player games}.
\newblock \bibinfo{journal}{\emph{Advances in Neural Information Processing Systems}}  \bibinfo{volume}{35} (\bibinfo{year}{2022}), \bibinfo{pages}{33904--33919}.
\newblock


\bibitem[Cai and Papadimitriou(2014)]%
        {CP14}
\bibfield{author}{\bibinfo{person}{Yang Cai} {and} \bibinfo{person}{Christos Papadimitriou}.} \bibinfo{year}{2014}\natexlab{}.
\newblock \showarticletitle{Simultaneous {B}ayesian Auctions and Computational Complexity}. In \bibinfo{booktitle}{\emph{Proceedings of the Fifteenth ACM Conference on Economics and Computation}} (Palo Alto, California, USA) \emph{(\bibinfo{series}{EC '14})}. \bibinfo{publisher}{Association for Computing Machinery}, \bibinfo{address}{New York, NY, USA}, \bibinfo{pages}{895–910}.
\newblock
\showISBNx{9781450325653}
\urldef\tempurl%
\url{https://doi.org/10.1145/2600057.2602877}
\showDOI{\tempurl}


\bibitem[{Cesa-Bianchi} and Lugosi(2006)]%
        {cesa-bianchiPredictionLearningGames2006}
\bibfield{author}{\bibinfo{person}{Nicol{\`o} {Cesa-Bianchi}} {and} \bibinfo{person}{G{\'a}bor Lugosi}.} \bibinfo{year}{2006}\natexlab{}.
\newblock \bibinfo{booktitle}{\emph{Prediction, Learning, and Games}}.
\newblock \bibinfo{publisher}{{Cambridge University Press}}, \bibinfo{address}{{Cambridge; New York}}.
\newblock
\showISBNx{978-0-511-19178-7 978-0-511-54692-1 978-0-511-18995-1 978-0-511-19059-9 978-0-511-19091-9 978-0-511-19131-2 978-0-521-84108-5}


\bibitem[Chawla and Hartline(2013)]%
        {CH18}
\bibfield{author}{\bibinfo{person}{Shuchi Chawla} {and} \bibinfo{person}{Jason~D. Hartline}.} \bibinfo{year}{2013}\natexlab{}.
\newblock \showarticletitle{Auctions with Unique Equilibria}. In \bibinfo{booktitle}{\emph{Proceedings of the Fourteenth ACM Conference on Electronic Commerce}} (Philadelphia, Pennsylvania, USA) \emph{(\bibinfo{series}{EC '13})}. \bibinfo{publisher}{Association for Computing Machinery}, \bibinfo{address}{New York, NY, USA}, \bibinfo{pages}{181–196}.
\newblock
\showISBNx{9781450319621}
\urldef\tempurl%
\url{https://doi.org/10.1145/2482540.2483188}
\showDOI{\tempurl}


\bibitem[Chen and Deng(2006)]%
        {CD06}
\bibfield{author}{\bibinfo{person}{Xi Chen} {and} \bibinfo{person}{Xiaotie Deng}.} \bibinfo{year}{2006}\natexlab{}.
\newblock \showarticletitle{Settling the Complexity of Two-Player {N}ash Equilibrium}. In \bibinfo{booktitle}{\emph{Proceedings of the 47th Annual IEEE Symposium on Foundations of Computer Science}} \emph{(\bibinfo{series}{FOCS '06})}. \bibinfo{publisher}{IEEE Computer Society}, \bibinfo{address}{USA}, \bibinfo{pages}{261–272}.
\newblock
\showISBNx{0769527205}
\urldef\tempurl%
\url{https://doi.org/10.1109/FOCS.2006.69}
\showDOI{\tempurl}


\bibitem[Chen and Peng(2023)]%
        {chen2023complexity}
\bibfield{author}{\bibinfo{person}{Xi Chen} {and} \bibinfo{person}{Binghui Peng}.} \bibinfo{year}{2023}\natexlab{}.
\newblock \showarticletitle{Complexity of Equilibria in First-Price Auctions under General Tie-Breaking Rules}.
\newblock \bibinfo{journal}{\emph{arXiv preprint arXiv:2303.16388}} (\bibinfo{year}{2023}).
\newblock


\bibitem[Cheung and Piliouras(2021)]%
        {CP19}
\bibfield{author}{\bibinfo{person}{Yun~Kuen Cheung} {and} \bibinfo{person}{Georgios Piliouras}.} \bibinfo{year}{2021}\natexlab{}.
\newblock \showarticletitle{Vortices Instead of Equilibria in MinMax Optimization: Chaos and Butterfly Effects of Online Learning in Zero-Sum Games}. In \bibinfo{booktitle}{\emph{Proceedings of Thirty Second Conference on Learning Theory}} \emph{(\bibinfo{series}{Proceedings of Machine Learning Research}, Vol.~\bibinfo{volume}{99})}. \bibinfo{publisher}{PMLR}, \bibinfo{pages}{807--834}.
\newblock
\urldef\tempurl%
\url{https://proceedings.mlr.press/v99/cheung19a.html}
\showURL{%
\tempurl}


\bibitem[Clark(2003)]%
        {clark2003infinite}
\bibfield{author}{\bibinfo{person}{Stephen~A. Clark}.} \bibinfo{year}{2003}\natexlab{}.
\newblock \showarticletitle{An infinite-dimensional {LP} duality theorem}.
\newblock \bibinfo{journal}{\emph{Mathematics of Operations Research}} \bibinfo{volume}{28}, \bibinfo{number}{2} (\bibinfo{year}{2003}), \bibinfo{pages}{233--245}.
\newblock


\bibitem[Conitzer and Sandholm(2008)]%
        {CS08}
\bibfield{author}{\bibinfo{person}{Vincent Conitzer} {and} \bibinfo{person}{Tuomas Sandholm}.} \bibinfo{year}{2008}\natexlab{}.
\newblock \showarticletitle{New complexity results about {N}ash equilibria}.
\newblock \bibinfo{journal}{\emph{Games and Economic Behavior}} \bibinfo{volume}{63}, \bibinfo{number}{2} (\bibinfo{year}{2008}), \bibinfo{pages}{621--641}.
\newblock
\showISSN{0899-8256}
\urldef\tempurl%
\url{https://doi.org/10.1016/j.geb.2008.02.015}
\showDOI{\tempurl}
\newblock
\shownote{Second World Congress of the Game Theory Society}.


\bibitem[Daskalakis et~al\mbox{.}(2009)]%
        {DGP09}
\bibfield{author}{\bibinfo{person}{Constantinos Daskalakis}, \bibinfo{person}{Paul~W. Goldberg}, {and} \bibinfo{person}{Christos~H. Papadimitriou}.} \bibinfo{year}{2009}\natexlab{}.
\newblock \showarticletitle{The Complexity of Computing a {N}ash Equilibrium}.
\newblock \bibinfo{journal}{\emph{SIAM J. Comput.}} \bibinfo{volume}{39}, \bibinfo{number}{1} (\bibinfo{year}{2009}), \bibinfo{pages}{195--259}.
\newblock
\urldef\tempurl%
\url{https://doi.org/10.1137/070699652}
\showDOI{\tempurl}
\showeprint{https://doi.org/10.1137/070699652}


\bibitem[Deng et~al\mbox{.}(2022)]%
        {deng2022nash}
\bibfield{author}{\bibinfo{person}{Xiaotie Deng}, \bibinfo{person}{Xinyan Hu}, \bibinfo{person}{Tao Lin}, {and} \bibinfo{person}{Weiqiang Zheng}.} \bibinfo{year}{2022}\natexlab{}.
\newblock \showarticletitle{Nash convergence of mean-based learning algorithms in first price auctions}. In \bibinfo{booktitle}{\emph{Proceedings of the ACM Web Conference 2022}}. \bibinfo{pages}{141--150}.
\newblock


\bibitem[D{\"u}tting et~al\mbox{.}(2014)]%
        {dutting2014mechanism}
\bibfield{author}{\bibinfo{person}{Paul D{\"u}tting}, \bibinfo{person}{Thomas Kesselheim}, {and} \bibinfo{person}{{\'E}va Tardos}.} \bibinfo{year}{2014}\natexlab{}.
\newblock \showarticletitle{Mechanism with unique learnable equilibria}. In \bibinfo{booktitle}{\emph{Proceedings of the fifteenth ACM conference on Economics and computation}}. \bibinfo{pages}{877--894}.
\newblock


\bibitem[Einy et~al\mbox{.}(2022)]%
        {einy2022strong}
\bibfield{author}{\bibinfo{person}{Ezra Einy}, \bibinfo{person}{Ori Haimanko}, {and} \bibinfo{person}{David Lagziel}.} \bibinfo{year}{2022}\natexlab{}.
\newblock \showarticletitle{Strong robustness to incomplete information and the uniqueness of a correlated equilibrium}.
\newblock \bibinfo{journal}{\emph{Economic Theory}} \bibinfo{volume}{73}, \bibinfo{number}{1} (\bibinfo{year}{2022}), \bibinfo{pages}{91--119}.
\newblock


\bibitem[Feldman et~al\mbox{.}(2016)]%
        {FLN16}
\bibfield{author}{\bibinfo{person}{Michal Feldman}, \bibinfo{person}{Brendan Lucier}, {and} \bibinfo{person}{Noam Nisan}.} \bibinfo{year}{2016}\natexlab{}.
\newblock \showarticletitle{Correlated and Coarse Equilibria of Single-Item Auctions}. In \bibinfo{booktitle}{\emph{Web and Internet Economics}}, \bibfield{editor}{\bibinfo{person}{Yang Cai} {and} \bibinfo{person}{Adrian Vetta}} (Eds.). \bibinfo{publisher}{Springer Berlin Heidelberg}, \bibinfo{address}{Berlin, Heidelberg}, \bibinfo{pages}{131--144}.
\newblock
\showISBNx{978-3-662-54110-4}


\bibitem[Feng et~al\mbox{.}(2021)]%
        {FGLMS21}
\bibfield{author}{\bibinfo{person}{Zhe Feng}, \bibinfo{person}{Guru Guruganesh}, \bibinfo{person}{Christopher Liaw}, \bibinfo{person}{Aranyak Mehta}, {and} \bibinfo{person}{Abhishek Sethi}.} \bibinfo{year}{2021}\natexlab{}.
\newblock \showarticletitle{Convergence Analysis of No-Regret Bidding Algorithms in Repeated Auctions}.
\newblock \bibinfo{journal}{\emph{The 35th AAAI Conference on Artificial Intelligence (AAAI-21)}}  \bibinfo{volume}{35}, \bibinfo{pages}{5399--5406}.
\newblock
Issue 6.


\bibitem[Fibich and Gavish(2011)]%
        {fibich2011numerical}
\bibfield{author}{\bibinfo{person}{Gadi Fibich} {and} \bibinfo{person}{Nir Gavish}.} \bibinfo{year}{2011}\natexlab{}.
\newblock \showarticletitle{Numerical simulations of asymmetric first-price auctions}.
\newblock \bibinfo{journal}{\emph{Games and Economic Behavior}} \bibinfo{volume}{73}, \bibinfo{number}{2} (\bibinfo{year}{2011}), \bibinfo{pages}{479--495}.
\newblock


\bibitem[Filos-Ratsikas et~al\mbox{.}(2024)]%
        {filos2024computation}
\bibfield{author}{\bibinfo{person}{Aris Filos-Ratsikas}, \bibinfo{person}{Yiannis Giannakopoulos}, \bibinfo{person}{Alexandros Hollender}, {and} \bibinfo{person}{Charalampos Kokkalis}.} \bibinfo{year}{2024}\natexlab{}.
\newblock \showarticletitle{On the Computation of Equilibria in Discrete First-Price Auctions}.
\newblock \bibinfo{journal}{\emph{arXiv preprint arXiv:2402.12068}} (\bibinfo{year}{2024}).
\newblock


\bibitem[Filos-Ratsikas et~al\mbox{.}(2021)]%
        {filosratsikas2021fixp}
\bibfield{author}{\bibinfo{person}{Aris Filos-Ratsikas}, \bibinfo{person}{Yiannis Giannakopoulos}, \bibinfo{person}{Alexandros Hollender}, \bibinfo{person}{Philip Lazos}, {and} \bibinfo{person}{Diogo Po\c{c}as}.} \bibinfo{year}{2021}\natexlab{}.
\newblock \showarticletitle{On the Complexity of Equilibrium Computation in First-Price Auctions}. In \bibinfo{booktitle}{\emph{Proceedings of the 22nd ACM Conference on Economics and Computation}} (Budapest, Hungary) \emph{(\bibinfo{series}{EC '21})}. \bibinfo{publisher}{Association for Computing Machinery}, \bibinfo{address}{New York, NY, USA}, \bibinfo{pages}{454–476}.
\newblock
\showISBNx{9781450385541}
\urldef\tempurl%
\url{https://doi.org/10.1145/3465456.3467627}
\showDOI{\tempurl}


\bibitem[Forges(1993)]%
        {forges1993five}
\bibfield{author}{\bibinfo{person}{Fran{\c{c}}oise Forges}.} \bibinfo{year}{1993}\natexlab{}.
\newblock \showarticletitle{Five legitimate definitions of correlated equilibrium in games with incomplete information}.
\newblock \bibinfo{journal}{\emph{Theory and decision}}  \bibinfo{volume}{35} (\bibinfo{year}{1993}), \bibinfo{pages}{277--310}.
\newblock


\bibitem[Forges(2023)]%
        {forges2023correlated}
\bibfield{author}{\bibinfo{person}{Fran{\c{c}}oise Forges}.} \bibinfo{year}{2023}\natexlab{}.
\newblock \showarticletitle{Correlated Equilibrium in Games with Incomplete Information}.
\newblock \bibinfo{journal}{\emph{Revue {\'e}conomique}} \bibinfo{volume}{74}, \bibinfo{number}{4} (\bibinfo{year}{2023}), \bibinfo{pages}{529--539}.
\newblock


\bibitem[Fudenberg and Levine(1998)]%
        {fudenberg1998theory}
\bibfield{author}{\bibinfo{person}{Drew Fudenberg} {and} \bibinfo{person}{David~K Levine}.} \bibinfo{year}{1998}\natexlab{}.
\newblock \bibinfo{booktitle}{\emph{The theory of learning in games}}. Vol.~\bibinfo{volume}{2}.
\newblock \bibinfo{publisher}{MIT press}.
\newblock


\bibitem[Fujii(2023)]%
        {Fujii2023}
\bibfield{author}{\bibinfo{person}{Kaito Fujii}.} \bibinfo{year}{2023}\natexlab{}.
\newblock \bibinfo{title}{Bayes correlated equilibria and no-regret dynamics}.
\newblock
\newblock
\urldef\tempurl%
\url{https://doi.org/10.48550/ARXIV.2304.05005}
\showDOI{\tempurl}


\bibitem[Golowich et~al\mbox{.}(2020)]%
        {golowich2020tight}
\bibfield{author}{\bibinfo{person}{Noah Golowich}, \bibinfo{person}{Sarath Pattathil}, {and} \bibinfo{person}{Constantinos Daskalakis}.} \bibinfo{year}{2020}\natexlab{}.
\newblock \showarticletitle{Tight last-iterate convergence rates for no-regret learning in multi-player games}.
\newblock \bibinfo{journal}{\emph{Advances in neural information processing systems}}  \bibinfo{volume}{33} (\bibinfo{year}{2020}), \bibinfo{pages}{20766--20778}.
\newblock


\bibitem[Greenwald and Jafari(2003)]%
        {greenwald2003general}
\bibfield{author}{\bibinfo{person}{Amy Greenwald} {and} \bibinfo{person}{Amir Jafari}.} \bibinfo{year}{2003}\natexlab{}.
\newblock \showarticletitle{A general class of no-regret learning algorithms and game-theoretic equilibria}. In \bibinfo{booktitle}{\emph{Learning Theory and Kernel Machines: 16th Annual Conference on Learning Theory and 7th Kernel Workshop, COLT/Kernel 2003, Washington, DC, USA, August 24-27, 2003. Proceedings}}. Springer, \bibinfo{pages}{2--12}.
\newblock


\bibitem[Grinold and Hopkins(1973)]%
        {grinold1973duality}
\bibfield{author}{\bibinfo{person}{Richard~C. Grinold} {and} \bibinfo{person}{David~S.P. Hopkins}.} \bibinfo{year}{1973}\natexlab{}.
\newblock \showarticletitle{Duality overlap in infinite linear programs}.
\newblock \bibinfo{journal}{\emph{J. Math. Anal. Appl.}} \bibinfo{volume}{41}, \bibinfo{number}{2} (\bibinfo{year}{1973}), \bibinfo{pages}{333--335}.
\newblock


\bibitem[Harsanyi and Selten(1988)]%
        {HarsanyiSelten1988}
\bibfield{editor}{\bibinfo{person}{John~C. Harsanyi} {and} \bibinfo{person}{Reinhard Selten}} (Eds.). \bibinfo{year}{1988}\natexlab{}.
\newblock \bibinfo{booktitle}{\emph{A general theory of equilibrium selection in games} (\bibinfo{edition}{1st} ed.)}.
\newblock \bibinfo{publisher}{MIT Press}.
\newblock
\showISBNx{978-0-26-258238-4}


\bibitem[Hartline et~al\mbox{.}(2015)]%
        {HST15}
\bibfield{author}{\bibinfo{person}{Jason Hartline}, \bibinfo{person}{Vasilis Syrgkanis}, {and} \bibinfo{person}{\'{E}va Tardos}.} \bibinfo{year}{2015}\natexlab{}.
\newblock \showarticletitle{No-Regret Learning in {B}ayesian Games}. In \bibinfo{booktitle}{\emph{Proceedings of the 28th International Conference on Neural Information Processing Systems - Volume 2}} (Montreal, Canada) \emph{(\bibinfo{series}{NIPS'15})}. \bibinfo{publisher}{MIT Press}, \bibinfo{address}{Cambridge, MA, USA}, \bibinfo{pages}{3061–3069}.
\newblock


\bibitem[Heliou et~al\mbox{.}(2017)]%
        {heliou2017learning}
\bibfield{author}{\bibinfo{person}{Am{\'e}lie Heliou}, \bibinfo{person}{Johanne Cohen}, {and} \bibinfo{person}{Panayotis Mertikopoulos}.} \bibinfo{year}{2017}\natexlab{}.
\newblock \showarticletitle{Learning with bandit feedback in potential games}.
\newblock \bibinfo{journal}{\emph{Advances in Neural Information Processing Systems}}  \bibinfo{volume}{30} (\bibinfo{year}{2017}).
\newblock


\bibitem[Kellerer(1988)]%
        {kellerer1988measuretheoretic}
\bibfield{author}{\bibinfo{person}{Hans~G. Kellerer}.} \bibinfo{year}{1988}\natexlab{}.
\newblock \showarticletitle{Measure theoretic versions of linear programming}.
\newblock \bibinfo{journal}{\emph{Mathematische Zeitschrift}} \bibinfo{volume}{198}, \bibinfo{number}{3} (\bibinfo{year}{1988}), \bibinfo{pages}{367--400}.
\newblock


\bibitem[Kleinberg et~al\mbox{.}(2009)]%
        {kleinberg2009multiplicative}
\bibfield{author}{\bibinfo{person}{Robert Kleinberg}, \bibinfo{person}{Georgios Piliouras}, {and} \bibinfo{person}{{\'E}va Tardos}.} \bibinfo{year}{2009}\natexlab{}.
\newblock \showarticletitle{Multiplicative updates outperform generic no-regret learning in congestion games}. In \bibinfo{booktitle}{\emph{Proceedings of the forty-first annual ACM symposium on Theory of computing}}. \bibinfo{pages}{533--542}.
\newblock


\bibitem[Kolumbus and Nisan(2022)]%
        {kolumbus2022auctions}
\bibfield{author}{\bibinfo{person}{Yoav Kolumbus} {and} \bibinfo{person}{Noam Nisan}.} \bibinfo{year}{2022}\natexlab{}.
\newblock \showarticletitle{Auctions between regret-minimizing agents}. In \bibinfo{booktitle}{\emph{Proceedings of the ACM Web Conference 2022}}. \bibinfo{pages}{100--111}.
\newblock


\bibitem[Krishna(2010)]%
        {Krishna2010}
\bibfield{editor}{\bibinfo{person}{Vijay Krishna}} (Ed.). \bibinfo{year}{2010}\natexlab{}.
\newblock \bibinfo{booktitle}{\emph{Auction Theory} (\bibinfo{edition}{2nd} ed.)}.
\newblock \bibinfo{publisher}{Academic Press}, \bibinfo{address}{San Diego}.
\newblock
\showISBNx{978-0-12-374507-1}
\urldef\tempurl%
\url{https://doi.org/10.1016/B978-0-12-374507-1.00034-0}
\showDOI{\tempurl}


\bibitem[Lopomo et~al\mbox{.}(2011)]%
        {lopomo2011lp}
\bibfield{author}{\bibinfo{person}{Giuseppe Lopomo}, \bibinfo{person}{Leslie~M. Marx}, {and} \bibinfo{person}{Peng Sun}.} \bibinfo{year}{2011}\natexlab{}.
\newblock \showarticletitle{Bidder collusion at first-price auctions}.
\newblock \bibinfo{journal}{\emph{Review of Economic Design}} \bibinfo{volume}{15}, \bibinfo{number}{3} (\bibinfo{year}{2011}), \bibinfo{pages}{177--211}.
\newblock
\showISSN{0364765X, 15265471}
\urldef\tempurl%
\url{https://doi.org/10.1007/s10058-010-0104-9}
\showDOI{\tempurl}


\bibitem[Makris and Renou(2023)]%
        {makris2023information}
\bibfield{author}{\bibinfo{person}{Miltiadis Makris} {and} \bibinfo{person}{Ludovic Renou}.} \bibinfo{year}{2023}\natexlab{}.
\newblock \showarticletitle{Information design in multistage games}.
\newblock \bibinfo{journal}{\emph{Theoretical Economics}} \bibinfo{volume}{18}, \bibinfo{number}{4} (\bibinfo{year}{2023}), \bibinfo{pages}{1475--1509}.
\newblock


\bibitem[Martin et~al\mbox{.}(2016)]%
        {martin2016slater}
\bibfield{author}{\bibinfo{person}{Kipp Martin}, \bibinfo{person}{Christopher~T. Ryan}, {and} \bibinfo{person}{Matt Stern}.} \bibinfo{year}{2016}\natexlab{}.
\newblock \showarticletitle{The {S}later conundrum: duality and pricing in infinite-dimensional optimization}.
\newblock \bibinfo{journal}{\emph{SIAM Journal on Optimization}} \bibinfo{volume}{26}, \bibinfo{number}{1} (\bibinfo{year}{2016}), \bibinfo{pages}{111--138}.
\newblock


\bibitem[Maskin and Riley(1984)]%
        {maskin1984optimal}
\bibfield{author}{\bibinfo{person}{Eric Maskin} {and} \bibinfo{person}{John Riley}.} \bibinfo{year}{1984}\natexlab{}.
\newblock \showarticletitle{Optimal auctions with risk averse buyers}.
\newblock \bibinfo{journal}{\emph{Econometrica: Journal of the Econometric Society}} (\bibinfo{year}{1984}), \bibinfo{pages}{1473--1518}.
\newblock


\bibitem[Maskin and Riley(2003)]%
        {maskin2003uniqueness}
\bibfield{author}{\bibinfo{person}{Eric Maskin} {and} \bibinfo{person}{John Riley}.} \bibinfo{year}{2003}\natexlab{}.
\newblock \showarticletitle{Uniqueness of equilibrium in sealed high-bid auctions}.
\newblock \bibinfo{journal}{\emph{Games and Economic Behavior}} \bibinfo{volume}{45}, \bibinfo{number}{2} (\bibinfo{year}{2003}), \bibinfo{pages}{395--409}.
\newblock


\bibitem[Mertikopoulos et~al\mbox{.}(2018)]%
        {MPP18}
\bibfield{author}{\bibinfo{person}{Panayotis Mertikopoulos}, \bibinfo{person}{Christos Papadimitriou}, {and} \bibinfo{person}{Georgios Piliouras}.} \bibinfo{year}{2018}\natexlab{}.
\newblock \bibinfo{booktitle}{\emph{Cycles in Adversarial Regularized Learning}}.
\newblock \bibinfo{pages}{2703--2717}.
\newblock
\urldef\tempurl%
\url{https://doi.org/10.1137/1.9781611975031.172}
\showDOI{\tempurl}
\showeprint{https://epubs.siam.org/doi/pdf/10.1137/1.9781611975031.172}


\bibitem[Mertikopoulos and Zhou(2019)]%
        {MZ19}
\bibfield{author}{\bibinfo{person}{Panayotis Mertikopoulos} {and} \bibinfo{person}{Zhengyuan Zhou}.} \bibinfo{year}{2019}\natexlab{}.
\newblock \showarticletitle{Learning in Games with Continuous Action Sets and Unknown Payoff Functions}.
\newblock \bibinfo{journal}{\emph{Math. Program.}} \bibinfo{volume}{173}, \bibinfo{number}{1–2} (\bibinfo{date}{jan} \bibinfo{year}{2019}), \bibinfo{pages}{465–507}.
\newblock
\showISSN{0025-5610}
\urldef\tempurl%
\url{https://doi.org/10.1007/s10107-018-1254-8}
\showDOI{\tempurl}


\bibitem[Milgrom and Weber(1982)]%
        {milgrom1982atheory}
\bibfield{author}{\bibinfo{person}{Paul~R. Milgrom} {and} \bibinfo{person}{Robert~J. Weber}.} \bibinfo{year}{1982}\natexlab{}.
\newblock \showarticletitle{A Theory of Auctions and Competitive Bidding}.
\newblock \bibinfo{journal}{\emph{Econometrica}} \bibinfo{volume}{50}, \bibinfo{number}{5} (\bibinfo{year}{1982}), \bibinfo{pages}{1089--1122}.
\newblock
\showISSN{00129682, 14680262}
\urldef\tempurl%
\url{http://www.jstor.org/stable/1911865}
\showURL{%
\tempurl}


\bibitem[Milionis et~al\mbox{.}(2022)]%
        {MPPS22}
\bibfield{author}{\bibinfo{person}{Jason Milionis}, \bibinfo{person}{Christos Papadimitriou}, \bibinfo{person}{Georgios Piliouras}, {and} \bibinfo{person}{Kelly Spendlove}.} \bibinfo{year}{2022}\natexlab{}.
\newblock \bibinfo{title}{Nash, {C}onley, and Computation: Impossibility and Incompleteness in Game Dynamics}.
\newblock
\newblock
\urldef\tempurl%
\url{https://doi.org/10.48550/ARXIV.2203.14129}
\showDOI{\tempurl}


\bibitem[Myerson(1981)]%
        {myerson1984}
\bibfield{author}{\bibinfo{person}{Roger~B. Myerson}.} \bibinfo{year}{1981}\natexlab{}.
\newblock \showarticletitle{Optimal Auction Design}.
\newblock \bibinfo{journal}{\emph{Mathematics of Operations Research}} \bibinfo{volume}{6}, \bibinfo{number}{1} (\bibinfo{year}{1981}), \bibinfo{pages}{58--73}.
\newblock
\showISSN{0364765X, 15265471}
\urldef\tempurl%
\url{http://www.jstor.org/stable/3689266}
\showURL{%
\tempurl}


\bibitem[Myerson(1982)]%
        {myerson1982optimal}
\bibfield{author}{\bibinfo{person}{Roger~B Myerson}.} \bibinfo{year}{1982}\natexlab{}.
\newblock \showarticletitle{Optimal coordination mechanisms in generalized principal--agent problems}.
\newblock \bibinfo{journal}{\emph{Journal of mathematical economics}} \bibinfo{volume}{10}, \bibinfo{number}{1} (\bibinfo{year}{1982}), \bibinfo{pages}{67--81}.
\newblock


\bibitem[Myerson(1986)]%
        {myerson1986multistage}
\bibfield{author}{\bibinfo{person}{Roger~B. Myerson}.} \bibinfo{year}{1986}\natexlab{}.
\newblock \showarticletitle{Multistage games with communication}.
\newblock \bibinfo{journal}{\emph{Econometrica: Journal of the Econometric Society}} (\bibinfo{year}{1986}), \bibinfo{pages}{323--358}.
\newblock


\bibitem[Nekipelov et~al\mbox{.}(2015)]%
        {nekipelov2015econometrics}
\bibfield{author}{\bibinfo{person}{Denis Nekipelov}, \bibinfo{person}{Vasilis Syrgkanis}, {and} \bibinfo{person}{Eva Tardos}.} \bibinfo{year}{2015}\natexlab{}.
\newblock \showarticletitle{Econometrics for learning agents}. In \bibinfo{booktitle}{\emph{Proceedings of the sixteenth acm conference on economics and computation}}. \bibinfo{pages}{1--18}.
\newblock


\bibitem[P.~Andrade et~al\mbox{.}(2023)]%
        {andrade2023turing}
\bibfield{author}{\bibinfo{person}{Gabriel P.~Andrade}, \bibinfo{person}{Rafael Frongillo}, {and} \bibinfo{person}{Georgios Piliouras}.} \bibinfo{year}{2023}\natexlab{}.
\newblock \showarticletitle{No-Regret Learning in Games is Turing Complete}. In \bibinfo{booktitle}{\emph{Proceedings of the 24th ACM Conference on Economics and Computation}} (, London, United Kingdom,) \emph{(\bibinfo{series}{EC '23})}. \bibinfo{publisher}{Association for Computing Machinery}, \bibinfo{address}{New York, NY, USA}, \bibinfo{pages}{111}.
\newblock
\showISBNx{9798400701047}
\urldef\tempurl%
\url{https://doi.org/10.1145/3580507.3597714}
\showDOI{\tempurl}


\bibitem[Papadimitriou and Piliouras(2019)]%
        {PP19}
\bibfield{author}{\bibinfo{person}{Christos Papadimitriou} {and} \bibinfo{person}{Georgios Piliouras}.} \bibinfo{year}{2019}\natexlab{}.
\newblock \showarticletitle{Game Dynamics as the Meaning of a Game}.
\newblock \bibinfo{journal}{\emph{SIGecom Exch.}} \bibinfo{volume}{16}, \bibinfo{number}{2} (\bibinfo{date}{May} \bibinfo{year}{2019}), \bibinfo{pages}{53–63}.
\newblock
\urldef\tempurl%
\url{https://doi.org/10.1145/3331041.3331048}
\showDOI{\tempurl}


\bibitem[Pavlov(2023)]%
        {pavlov2023correlated}
\bibfield{author}{\bibinfo{person}{Gregory Pavlov}.} \bibinfo{year}{2023}\natexlab{}.
\newblock \showarticletitle{Correlated equilibria and communication equilibria in all-pay auctions}.
\newblock \bibinfo{journal}{\emph{Review of Economic Design}} (\bibinfo{year}{2023}), \bibinfo{pages}{1--33}.
\newblock


\bibitem[Piliouras et~al\mbox{.}(2022)]%
        {piliouras2022evolutionary}
\bibfield{author}{\bibinfo{person}{Georgios Piliouras}, \bibinfo{person}{Mark Rowland}, \bibinfo{person}{Shayegan Omidshafiei}, \bibinfo{person}{Romuald Elie}, \bibinfo{person}{Daniel Hennes}, \bibinfo{person}{Jerome Connor}, {and} \bibinfo{person}{Karl Tuyls}.} \bibinfo{year}{2022}\natexlab{}.
\newblock \showarticletitle{Evolutionary dynamics and phi-regret minimization in games}.
\newblock \bibinfo{journal}{\emph{Journal of Artificial Intelligence Research}}  \bibinfo{volume}{74} (\bibinfo{year}{2022}), \bibinfo{pages}{1125--1158}.
\newblock


\bibitem[Rasooly and Gavidia-Calderon(2020)]%
        {RGC20}
\bibfield{author}{\bibinfo{person}{Itzhak Rasooly} {and} \bibinfo{person}{Carlos Gavidia-Calderon}.} \bibinfo{year}{2020}\natexlab{}.
\newblock \bibinfo{title}{The importance of being discrete: on the inaccuracy of continuous approximations in auction theory}.
\newblock
\newblock
\urldef\tempurl%
\url{https://doi.org/10.48550/ARXIV.2006.03016}
\showDOI{\tempurl}


\bibitem[Roughgarden(2015)]%
        {roughgarden2015intrinsic}
\bibfield{author}{\bibinfo{person}{Tim Roughgarden}.} \bibinfo{year}{2015}\natexlab{}.
\newblock \showarticletitle{Intrinsic robustness of the price of anarchy}.
\newblock \bibinfo{journal}{\emph{Journal of the ACM (JACM)}} \bibinfo{volume}{62}, \bibinfo{number}{5} (\bibinfo{year}{2015}), \bibinfo{pages}{1--42}.
\newblock


\bibitem[Shapiro(2001)]%
        {shapiro2001duality}
\bibfield{author}{\bibinfo{person}{Alexander Shapiro}.} \bibinfo{year}{2001}\natexlab{}.
\newblock \showarticletitle{On duality theory of conic linear problems}.
\newblock \bibinfo{journal}{\emph{Nonconvex Optimization and its Applications}}  \bibinfo{volume}{57} (\bibinfo{year}{2001}), \bibinfo{pages}{135--155}.
\newblock


\bibitem[Tatarenko and Kamgarpour(2019)]%
        {tatarenko2019learning}
\bibfield{author}{\bibinfo{person}{Tatiana Tatarenko} {and} \bibinfo{person}{Maryam Kamgarpour}.} \bibinfo{year}{2019}\natexlab{}.
\newblock \showarticletitle{Learning Nash equilibria in monotone games}. In \bibinfo{booktitle}{\emph{2019 IEEE 58th Conference on Decision and Control (CDC)}}. IEEE, \bibinfo{pages}{3104--3109}.
\newblock


\bibitem[Vickrey(1961)]%
        {vickrey1961counterspeculation}
\bibfield{author}{\bibinfo{person}{William Vickrey}.} \bibinfo{year}{1961}\natexlab{}.
\newblock \showarticletitle{Counterspeculation, auctions, and competitive sealed tenders}.
\newblock \bibinfo{journal}{\emph{Journal of Finance}} \bibinfo{volume}{16}, \bibinfo{number}{1} (\bibinfo{year}{1961}), \bibinfo{pages}{8--37}.
\newblock


\bibitem[Wang et~al\mbox{.}(2020)]%
        {wang2020bayesian}
\bibfield{author}{\bibinfo{person}{Zihe Wang}, \bibinfo{person}{Weiran Shen}, {and} \bibinfo{person}{Song Zuo}.} \bibinfo{year}{2020}\natexlab{}.
\newblock \showarticletitle{Bayesian nash equilibrium in first-price auction with discrete value distributions}. In \bibinfo{booktitle}{\emph{Proceedings of the 19th International Conference on Autonomous Agents and MultiAgent Systems}}. \bibinfo{pages}{1458--1466}.
\newblock


\bibitem[Wang et~al\mbox{.}(2023)]%
        {wang2023noregret}
\bibfield{author}{\bibinfo{person}{Zifan Wang}, \bibinfo{person}{Yi Shen}, \bibinfo{person}{Michael Zavlanos}, {and} \bibinfo{person}{Karl~Henrik Johansson}.} \bibinfo{year}{2023}\natexlab{}.
\newblock \bibinfo{title}{No-Regret Learning in Strongly Monotone Games Converges to a Nash Equilibrium}.
\newblock
\newblock
\urldef\tempurl%
\url{https://openreview.net/forum?id=Ey2ePmtABj}
\showURL{%
\tempurl}


\bibitem[Xiao(2010)]%
        {Xiao10}
\bibfield{author}{\bibinfo{person}{Lin Xiao}.} \bibinfo{year}{2010}\natexlab{}.
\newblock \showarticletitle{Dual Averaging Methods for Regularized Stochastic Learning and Online Optimization}.
\newblock \bibinfo{journal}{\emph{Journal of Machine Learning Research}} \bibinfo{volume}{11}, \bibinfo{number}{88} (\bibinfo{year}{2010}), \bibinfo{pages}{2543--2596}.
\newblock
\urldef\tempurl%
\url{http://jmlr.org/papers/v11/xiao10a.html}
\showURL{%
\tempurl}


\bibitem[Zhang and Sandholm(2022)]%
        {zhang2022polynomial}
\bibfield{author}{\bibinfo{person}{Brian Zhang} {and} \bibinfo{person}{Tuomas Sandholm}.} \bibinfo{year}{2022}\natexlab{}.
\newblock \showarticletitle{Polynomial-time optimal equilibria with a mediator in extensive-form games}.
\newblock \bibinfo{journal}{\emph{Advances in Neural Information Processing Systems}}  \bibinfo{volume}{35} (\bibinfo{year}{2022}), \bibinfo{pages}{24851--24863}.
\newblock


\end{thebibliography}

\appendix

\section{Notations Table}
\small
\begin{tabular}{p{4cm} p{1cm} p{9cm}}
		\hline
		General & & \\ 
		\hline
		$\mathbb{N}$ & & Set of natural numbers, in the convention without $0$ \\
        $[N]$ & & For $N \in \mathbb{N}$, the set $\{1,2,...,N\}$ \\
		$\mathbb{R}$ & & Set of real numbers \\
		$\mathbb{R}_+$ & & Set of non-negative real numbers \\
		$\Delta(S)$ & & Set of probability distributions over $S$ \\
		$S \rightarrow T$ & & Set of functions $S \rightarrow T$ \\
		$\otimes_{\ell \in \Lambda}$ & & For $s_\ell : S_\ell \rightarrow \mathbb{R}$ and $x_\ell \in S_\ell$, the map $(x_\ell)_{\ell \in \Lambda} \mapsto \prod_{\ell \in \Lambda} s_\ell(x_\ell)$ \\
		$\mathbb{I}[s = s'], \delta_{ss'}$ & & Indicator function, equals $1$ if $s = s'$ and $0$ otherwise \\
		\hline
		Games \& Auctions & &  \\
		\hline
        Normal Form Game & & \\ 
        \hline
        $\Gamma$ && A normal form game \\
        $N$ & $i,j$ & Number, set of agents / players \\
        $\Theta = \times_{i \in [N]} \Theta_i$ & $\theta$ & Set of agents' types \\
        $F^N \in \Delta(\Theta)$ & & Prior distribution over types \\ 
        $A = \times_{i \in [N]} A_i$ & $a$ & Set of action profiles / outcomes \\
        $u_i : A \times \Theta_i \rightarrow \mathbb{R}$ & & Utility function for buyer $i$ \\
        $s_i \in \Delta(A_i)^{\Theta_i}$ & & Mixed strategy for agent $i$ \\
        $\sigma \in \Delta(\times_{i\in [N]} A_i^{\Theta_i})$ & & Correlated distribution over pure strategies \\
        \hline
        Auctions & &\\
        \hline
         $A$ && An auction \\
         $N, [N]$ & $i,j$ & Number, set of buyers \\
        $\valset = \times_{i \in [N]} V$ & $v$ & Set of buyers' valuation profiles \\
        $F^N \in \Delta(\valset)$ & & Prior distribution over valuations, i.i.d. from $F \in \Delta(V)$ \\
        $\bidset = \times_{i \in [N]} B$ & $b$ & Set of bidding profiles \\
        $x_i, p_i : \bidset \rightarrow \mathbb{R}$ & & Allocation and payment rules \\
        $u_i : \bidset \times V \rightarrow \mathbb{R}$ & & Utility function for buyer $i$, $u_i(b|v) = x_i(b) \cdot v - p_i(b)$ \\
        $si, \lambda_i \in \Delta(B)^V$ & & Mixed strategy for buyer $i$, $\lambda$ often taken to be pure \\
        $\sigma \in \Delta(\times_{i\in [N]} B^V)$ & & Correlated distribution over pure bidding strategies \\
        $ \Lambda \subseteq \times_{i \in [N]}$ & & A subset of bidding strategies, with equality until Section \ref{sec:theory-cont} \\
        $\mathcal{A}$ & $FP, AP$ & Auction format\\
        $\beta, \lambda^*$ & & Equilibrium bidding strategies \\
        \hline
        Common Subscripts & & \\
        \hline
        $i,j$ & & Quantity relating to buyer / agent $i, j$ \\
        $-i, (-i)$ & & Quantity relating to buyers / agents other than $i$ \\
        $n$ & & Quantity relating to a discretised auction, indexed by $n$ \\
        \end{tabular}
        
        \noindent\begin{tabular}{p{5cm} p{9.2cm}}
        \hline
		Extended Form LP, Finite  & \\ 
		\hline
		$\sigma \in \Delta(\Lambda_n)$  & Primal variable, probability distribution over $\Lambda_n$ \\
        $c_{n\lambda}$  & Linear objective, non-negative and zero only when $\lambda = \beta$ \\
        $\epsilon(v_i, b'_i)$  & Dual variable, primal BCCE constraint \\
        $\gamma$  & Dual variable, primal constraint that $\sum_{\lambda \in \Lambda_n} \sigma(\lambda) = 1$ \\
        \hline
        LP Relaxation & \\
        \hline
        sub/superscript $[\ell]$  & Quantity relates to subset of buyers $[\ell]$ \\
        $\sigma^{[\ell]} \in \Delta(B_n^\ell)^{V_n^\ell}$  & \text{Marginal distribution of bids for buyers $[\ell]$, given valuations} \\
        \hline
        Extended Form LP, Infinite & \\
        \hline
        $\theta_i \in [0,1]$ & Type of buyer $i$, $\theta_i = F^{-i}(v_i)$ \\
        $c : \Lambda \rightarrow \mathbb{R}$ & Linear objective, often taken to be $\sum_{i \in [N]} \int_0^1 d\theta_i \cdot \alpha(\theta_i|\lambda)$ for some function $\alpha : [0,1] \times \Lambda \rightarrow \mathbb{R}$, equals zero only when $\lambda_i = \beta$ for every buyer $i$ \\
        $\sigma \in \Delta(\Lambda)$ & A Borel probability measure over support $\Lambda$ of a BCCE \\
        $\epsilon(\theta_i,b'_i)$ & BCCE constraint for buyer $i$, type $\theta_i$ and bid $b'_i$. We commonly enforce $\epsilon_i(\theta_i,\beta(\theta_i)) \equiv \epsilon(\theta_i)$, and drop all other BCCE constraints. \\
        $\gamma$ & The primal constraint $\int_\Lambda d\sigma(\lambda) = 1$ \\
        $U_i : \Lambda \times [0,1] \rightarrow \mathbb{R}$ & Expected utility shorthard, $U_i(\lambda|\theta_i) = \int_{[0,1]^{N-1}} d\theta_{-i} \cdot u_i(\lambda(\theta)|\theta_i)$. May for convenience replace buyer $i$'s bidding strategy for a fixed constant bid. \\
\end{tabular}
\normalsize
\section{Further Proofs}

\subsection{Section \ref{sec:uniqueness-strict}}

\begin{proposition*}[Proposition \ref{prop:weak-duality-strict}, all-pay auction]
    Let $\sigma$ be a Borel measure on $\Lambda$, and $\epsilon : [0,1] \rightarrow \mathbb{R}_+$ a continuous function. Then for an all-pay auction, 
  $$ \int_{\profset \times [0,1]} d(\sigma \times \theta_i) \cdot \epsilon(\theta_i) \cdot \left| U_i(\beta,\lambda_{-i} | \theta_i) - U_i(\lambda|\theta_i) \right| < \infty.$$
  As a consequence, (\ref{cond:fubini}) holds, hence weak duality too for the primal-dual pair (\ref{opt:inf-AP-strict})-(\ref{opt:dual-inf-AP-strict}).
\end{proposition*}

\begin{proof}
    For the all-pay auction, we have 
    \begin{align*}
        \left| U_i(\beta,\lambda_{-i}| \theta_i) - U_i(\lambda|\theta_i) \right| & \leq \left| U_i(\beta,\lambda_{-i} | \theta_i) \right| + \left| U_i(\lambda|\theta_i) \right| \\
        & \leq \left| P_\beta(\lambda,\theta_i)^{N-1} \cdot v(\theta_i)\right| + \left|\beta(\theta_i) \right| + \left| \theta_i^{N-1} \cdot v(\theta_i) \right| + \left| \lambda(\theta_i) \right| \\
        & \leq v(\theta_i) + \beta(1) + v(\theta_i) + M.
    \end{align*}
    The finiteness of the iterated integral 
    $\int_\profset d\sigma(\lambda) \cdot \int_0^1 d\theta_i \cdot \epsilon(\theta_i) \cdot \left| U_i(\beta,\lambda_{-i} | \theta_i) - U_i(\lambda|\theta_i) \right|$
    then follows since $v$ is bounded, hence integrable.
\end{proof}

\begin{theorem*}[Theorem \ref{thm:thm-ap-uniq}]
    Suppose for an all-pay auction that $\Lambda$ is a uniformly bounded set of symmetric, strictly increasing, differentiable \& normalised bidding strategies containing the equilibrium bidding strategies $\beta$. Then $\epsilon(\theta_i) = \theta^{N-2}\cdot (\beta(1)-\beta(\theta_i))$ is a solution to (\ref{opt:dual-inf-AP-strict}) of value $0$. As a consequence, if a Borel measure $\sigma$ on $\Lambda$ is a continuous BCCE, it places probability $1$ on the equilibrium bidding strategies~$\beta$.
\end{theorem*}

\begin{proof}

Again, given an appropriate dual solution, Corollary \ref{cor:weak-dual-consequence} ensures that any open set not containing the equilibrium bidding strategies is assigned zero probability by $\sigma$. Proceeding as in the proof of Theorem \ref{thm:thm-fp-uniq}, for $0 \leq \underline{\theta} < \overline{\theta} \leq 1$, denote by $\gamma(\lambda|\underline{\theta},\overline{\theta})$,
$$ \gamma(\lambda|\underline{\theta},\overline{\theta}) = \int_{\underline{\theta}}^{\overline{\theta}} d\theta_1 \cdot \left( \alpha(\theta_1|\lambda) - \epsilon(\theta_1) \cdot  \left( U_1(\beta,\lambda_{-1} | \theta_1) - U_1(\lambda|\theta_1) \right) \right), $$
and again divide the interval $[0,1]$ into subintervals, $[0,1] = \cup_{k} [\underline{\theta}_k,\overline{\theta}_k]$, such that for any $k$:
\begin{enumerate}
  \item $\overline{\theta}_k < 1$, or
  \item $\overline{\theta}_k = 1$ and $\beta(\theta) \geq \lambda(\theta)$ for any $\theta \in [\underline{\theta}_k,1]$, or
  \item $\overline{\theta}_k = 1$ and $\beta(\theta) \leq \lambda(\theta)$ for any $\theta \in [\underline{\theta}_k,1]$,
\end{enumerate}
where again for each subinterval $\beta(\underline{\theta}_k) = \lambda(\underline{\theta}_k)$, and also $\beta(\overline{\theta}_k) = \lambda(\overline{\theta}_k)$ so long as $\overline{\theta}_k \neq 1$.

To establish that $\beta - \lambda$ is a ascent direction, proceed by case analysis. For a fixed interval $k$, first suppose that $\overline{\theta}_k < 1$. Taking the first-order variation of $\gamma(\lambda|\underline{\theta}_k,\overline{\theta}_k)$, we get
\begin{align*}
    & \delta\gamma(\lambda | \underline{\theta}_k , \overline{\theta}_k ) = \int_{\underline{\theta}_k}^{\overline{\theta}_k} d\theta \cdot \Bigg[\theta^{N-2} \cdot (\lambda(\theta)-\beta(\theta)) \cdot \delta\lambda(\theta) \ldots \\ &  + \epsilon(\theta) \cdot \frac{(N-1)\lambda^{-1}\beta(\theta)^{N-2}}{\lambda' \lambda^{-1} \beta(\theta)} \cdot v(\theta) \cdot \delta\lambda(\lambda^{-1}\beta(\theta)) - \epsilon(\theta) \cdot \delta\lambda(\theta) \Bigg].
\end{align*}
We want all terms of the integrand to be multiplied by $\delta\lambda(\theta)$, the pointwise variation in $\lambda$. Therefore, for the term 
$ \int_{\underline{\theta}_k}^{\overline{\theta}_k} d\theta \cdot \epsilon(\theta) \cdot \frac{(N-1)\lambda^{-1}\beta(\theta)^{N-2}}{\lambda' \lambda^{-1} \beta(\theta)} \cdot v(\theta) \cdot \delta\lambda(\lambda^{-1}\beta(\theta)) $
we again make a change of variables,
\begin{align*}
    \theta' & = \lambda^{-1} \beta (\theta), \\
    d\theta' & = \frac{\beta'(\theta)}{\lambda'\lambda^{-1}\beta(\theta)} d\theta.
\end{align*}
By the assumption, we have $\underline{\theta}_k = \lambda^{-1}\beta(\underline{\theta}_k)$ and $\overline{\theta}_k = \lambda^{-1}\beta(\overline{\theta}_k)$. Moreover, for the all-pay auction, we have $\beta'(\theta) = (N-1) \cdot \theta^{N-2} \cdot v(\theta)$. We thus have, after making the change of variables and relabelling $\theta' \leftarrow \theta$,
\begin{align*}
    \delta\gamma(\lambda | \underline{\theta}_k , \overline{\theta}_k ) = \int_{\underline{\theta}_k}^{\overline{\theta}_k} d\theta \cdot \delta\lambda(\theta) \cdot \left[\theta^{N-2}\cdot(\lambda(\theta)-\beta(\theta)) + \epsilon\beta^{-1}\lambda(\theta) \cdot \frac{\theta^{N-2}}{\beta^{-1}\lambda(\theta)^{N-2}} - \epsilon(\theta) \right].
\end{align*}
This time, we shall immediately fix $\epsilon(\theta) = \theta^{N-2} \cdot (\beta(1)-\beta(\theta))$. Then we have 
\begin{align*}
  \epsilon\beta^{-1}\lambda(\theta)\cdot \frac{\theta^{N-2}}{\beta^{-1}\lambda(\theta)^{N-2}} - \epsilon(\theta) & = \beta^{-1}\lambda(\theta)^{N-2} (\beta(1)-\lambda(\theta)) \cdot \frac{\theta^{N-2}}{\beta^{-1}\lambda(\theta)^{N-2}} - \theta^{N-2} (\beta(1)-\beta(\theta)) \\
  & = \theta^{N-2} \cdot (\beta(\theta)-\lambda(\theta)).
\end{align*}
As a consequence, the variation in direction $\beta-\lambda$ equals 
\begin{align*}
  \delta\gamma(\lambda | \underline{\theta}_k , \overline{\theta}_k ) & = \int_{\underline{\theta}_k}^{\overline{\theta}_k} d\theta \cdot (\beta(\theta)-\lambda(\theta)) \cdot \left[\theta^{N-2}\cdot(\lambda(\theta)-\beta(\theta)) + \epsilon\beta^{-1}\lambda(\theta) \cdot \frac{\theta^{N-2}}{\beta^{-1}\lambda(\theta)^{N-2}} - \epsilon(\theta) \right] \\
  & = \int_{\underline{\theta}_k}^{\overline{\theta}_k} d\theta \cdot (\beta(\theta)-\lambda(\theta)) \cdot 0 = 0.
\end{align*}

Now suppose instead that $\overline{\theta}_k = 1$, and moreover, $\beta(1) \neq \lambda(1)$. As before; first consider the subcase where $\lambda(1) > \beta(1)$ (and hence, by assumption, $\lambda(\theta) \geq \beta(\theta)$ for any $\theta > \underline{\theta}_k$). 
In this case, we have $\lambda^{-1}\beta(1) < 1$, and 
\begin{align*}
    \delta\gamma(\lambda | \underline{\theta}_k , 1 ) & = \int_{\underline{\theta}_k}^{1} d\theta \cdot \Bigg[\frac{\delta \alpha(\theta|\lambda)}{\delta\lambda} \cdot \delta\lambda(\theta) \ldots \\ & + \epsilon(\theta) \cdot \frac{(N-1)\lambda^{-1}\beta(\theta)^{N-2}}{\lambda' \lambda^{-1} \beta(\theta)} \cdot v(\theta) \cdot \delta\lambda(\lambda^{-1}\beta(\theta)) - \epsilon(\theta) \cdot \delta\lambda(\theta) \Bigg] \\
    & = \int_{\lambda^{-1}\beta(1)}^{1} d\theta \cdot \delta\lambda(\theta) \cdot \left[\theta^{N-2} \cdot (\beta(1) - \beta(\theta))  - \epsilon(\theta) \right] \ldots \\
    & + \int_{\underline{\theta}_k}^{\lambda^{-1}\beta(1)} d\theta \cdot \delta\lambda(\theta) \cdot \left[\theta^{N-2}\cdot(\lambda(\theta)-\beta(\theta)) + \epsilon\beta^{-1}\lambda(\theta) \cdot \frac{\theta^{N-2}}{\beta^{-1}\lambda(\theta)^{N-2}} - \epsilon(\theta) \cdot \theta^{N-2} \right].
\end{align*}
Recall now that under our assumptions, $\beta(\theta) - \lambda(\theta) \leq 0$ for any $\theta \in (\underline{\theta}_k, \overline{\theta}_k)$. When we consider the variation in direction $\beta-\lambda$, the second term is non-negative by the previous case considered. Meanwhile, the first term also equals $0$ by the definition of $\epsilon$.

Finally suppose that $\lambda(1) < \beta(1)$. In this case, for $\theta > \beta^{-1}\lambda(1)$, a buyer wins with probability $1$ after deviating to $\beta(\theta)$, and we have
\begin{align*}
    \gamma(\lambda|\underline{\theta}_k, 1)
    & = \int_{\underline{\theta}_k}^1 d\theta \cdot \left( \alpha(\theta|\lambda) + \epsilon(\theta) \cdot (\theta^{N-1} v(\theta) - \lambda(\theta)) \right) \\ & - \int_{\underline{\theta}_k}^{\beta^{-1}\lambda(1)} d\theta \cdot \epsilon(\theta) \cdot (\lambda^{-1}\beta(\theta)^{N-1} v(\theta) - \beta(\theta) ) \\ & - \int_{\beta^{-1}\lambda(1)}^1 d\theta \cdot \epsilon(\theta) \cdot (v(\theta) - \beta(\theta))
\end{align*}
Again, considering its first-order variation, we get
\begin{align*}
    \delta\gamma(\lambda | \underline{\theta}_k , 1 ) & = \int_{\underline{\theta}_k}^1 d\theta \cdot \left(\theta^{N-2} \cdot (\lambda(\theta)-\beta(\theta)) \cdot \delta\lambda(\theta) - \epsilon(\theta) \cdot  \delta\lambda(\theta) \right) \\
    & + \int_{\underline{\theta}_k}^{\beta^{-1}\lambda(1)} d\theta \cdot \epsilon(\theta) \cdot \frac{(N-1) \lambda^{-1}\beta(\theta)^{N-2}}{\lambda' \lambda^{-1}\beta(\theta)} v(\theta) \cdot  \delta\lambda( \lambda^{-1} \beta(\theta) ) \\
    & - \frac{1}{\beta' \beta^{-1} \lambda(1)} \cdot \epsilon\beta^{-1}\lambda(1) \cdot (v\beta^{-1}\lambda(1) - \lambda(1)) \cdot \delta\lambda(1) \\
    & + \frac{1}{\beta' \beta^{-1} \lambda(1)} \cdot  \epsilon\beta^{-1}\lambda(1) \cdot (v\beta^{-1}\lambda(1) - \lambda(1)) \cdot \delta\lambda(1).
\end{align*}
The latter two terms of course cancel out, and after the change of variables as before, we get
\begin{align*}
    & \delta\gamma(\lambda | \underline{\theta}_k , 1 ) = \int_{\underline{\theta}_k}^1 d\theta \cdot \delta\lambda(\theta) \cdot \left(\theta^{N-2} \cdot (\lambda(\theta)-\beta(\theta))  - \epsilon(\theta) + \epsilon\beta^{-1}\lambda(\theta) \cdot \frac{\theta^{N-2}}{\beta^{-1}\lambda(\theta)^{N-2}} \right) .
\end{align*}
We thus have a reduction from this case to the first case we considered.

\end{proof}

\subsection{Section \ref{sec:weak-uniqueness}}

\begin{proposition*}[Proposition \ref{prop:wasserstein-prior}, all-pay auction]
  For any bidding strategies $\lambda$ satisfying Assumption \ref{asmptn:sup-weak} in an all-pay auction, whenever $1/(\beta^{-1})^{N-2}v\beta^{-1}$ as a lower bound \underbar{on the magnitude} of its derivative, $\int_0^1 d\theta \cdot \alpha(\theta|\lambda) \geq \int_0^1 d\theta \cdot \theta^{N-2} \cdot \frac{(\lambda(\theta)-\beta(\theta))^2}{2} \cdot C^{AP}$, where
    $$C^{AP} = \inf\left\{ -\frac{d}{d\mu} \left( \frac{1}{\beta^{-1}(\mu)^{N-2} \cdot v\beta^{-1}(\mu)} \right) \ \Bigg| \ \mu \in (0,\beta(1)]  \right\} \cup \left\{ \frac{1}{v(1) \max B} \right\} < 0.$$
    This holds if $d/d\theta [1/v(\theta)]$ has a non-zero limit at $0$.
\end{proposition*}

\begin{proof}
   For the all-pay auction, whenever $\lambda(1) > \beta(1)$, 
  \begin{align*}
    & \int_0^1 d\theta \cdot \frac{\delta\alpha(\theta|\lambda)}{\delta\lambda} \cdot \delta\lambda(\theta) \\ 
   =  & \int_{\lambda^{-1}\beta(1)}^{1} d\theta \cdot \left( \frac{\delta\lambda(\theta)}{v(\theta)} \right) + \int_{0}^{\lambda^{-1}\beta(1)} d\theta \cdot \delta\lambda(\theta) \cdot \theta^{N-2}\cdot \left( \frac{1}{\theta^{N-2}v(\theta)} - \frac{1}{\beta^{-1}\lambda(\theta)^{N-2}v\beta^{-1}\lambda(\theta)} \right).
  \end{align*}
  When we set $\delta\lambda = \beta - \lambda$, the resulting integrand should be pointwise lesser than the integrand of
  \begin{align*}
    & \int_0^1 d\theta \cdot \delta\lambda(\theta) \cdot \theta^{N-2} \cdot \frac{\delta}{\delta\lambda}\left(\frac{(\lambda(\theta)-\beta(\theta))^2}{2}\right) \cdot C^{AP}.
  \end{align*} 
  Now, if $\theta > \lambda^{-1}\beta(\theta)$, then this necessitates 
  \begin{align*}\frac{\beta(\theta) - \lambda(\theta)}{v(\theta)} \leq -(\beta(\theta) - \lambda(\theta))^2 \cdot \theta^{N-2} \cdot C^{AP} \Rightarrow C^{AP} 
    & \leq -\frac{1}{v(\theta) (\beta(\theta) - \lambda(\theta)) \theta^{N-2}} \\
    & \leq \frac{1}{v(\theta)\lambda(\theta)}.
  \end{align*}
  This is guaranteed whenever $C^{AP} \leq 1/(v(1) \max B)$. If instead $\theta < \lambda^{-1}\beta(1)$, then we require 
  $$ (\beta(\theta) - \lambda(\theta)) \cdot \left( \frac{1}{\theta^{N-2}v(\theta)} - \frac{1}{\beta^{-1}\lambda(\theta)^{N-2}v\beta^{-1}\lambda(\theta)} \right) \leq -(\beta(\theta) - \lambda(\theta))^2 \cdot C^{AP}.$$
  This is satisfied whenever 
  $$ C^{AP} \leq -\frac{1}{\beta(\theta) - \lambda(\theta)} \cdot \left(\frac{1}{\theta^{N-2}v(\theta)} - \frac{1}{\beta^{-1}\lambda(\theta)^{N-2}v\beta^{-1}\lambda(\theta)} \right).$$
  However, by the mean value theorem, the right hand side equals $-d/d\mu \ (1/\beta^{-1}(\mu)^{N-2} v\beta^{-1}(\mu))$, evaluated at some point $\mu$ between $\lambda(\theta)$ and $\beta(\theta)$. Under our sufficient condition, as $\mu \rightarrow 0$, this either converges to some negative constant or diverges to $-\infty$, which implies that choosing 
  $$ 0 < C^{AP} \leq \inf\left\{ -\frac{d}{d\mu} \left( \frac{1}{\beta^{-1}(\mu)^{N-2} \cdot v\beta^{-1}(\mu)} \right) \ \Bigg| \ \mu \in (0,\beta(1)]  \right\}$$ is sufficient for our purposes.
\end{proof}

\subsection{Section \ref{sec:bounds-theory}}

The following lemma is used in the proof of the Transport Lemma (Lemma \ref{lem:transport}) for a suitable permutation of bids for the first-price auction.

\begin{lemma}\label{lem:rearr}
  For some integer $z \in \mathbb{N}$, let $\sigma : [z] \rightarrow [z]$ be a non-identity permutation. Then there exists $i < j$ in $[z]$ such that $\sigma(i) \geq j$ and $\sigma(j) \leq j$.
\end{lemma}

\begin{proof}
  Let $j' = \max \{j'' \in [z] | \sigma(j'') \neq j'' \}$. As for any $j'' > j', \sigma(j'') = j''$, we have $\sigma(j') < j'$. We let $i = \sigma^{-1}(j')$. Now, consider the image $\sigma([i+1,z])$; since $\sigma$ is a bijection, the image contains $z-i$ elements. Moreover, it cannot be that $\sigma([i+1,z]) = [i+1,z]$, as that would imply that $\sigma^{-1}(\{j'\})$ is not a singleton, in contradiction to $\sigma$ being a permutation. Therefore, there exists $j \in [i+1,z]$ such that $\sigma(j) \notin [i+1,z]$, i.e. $i < j$ and $\sigma(j) \leq i$. By maximality of $j'$, $\sigma(i) \geq j$. 
\end{proof}

\begin{lemma*}[Transport Lemma / Lemma \ref{lem:transport}, all-pay auction]
  Suppose that $\lambda \in \Lambda_n$ satisfies Assumption \ref{asmptn:final}, then for the all-pay auction,
    \begin{align*}
      & \int_0^1 d\theta_i \cdot \frac{1}{v(\theta_i)} \left( U_i(\beta,\lambda_{-i}|\theta_i) - U_i(\beta,\mon \lambda|\theta_i) - U_i(\lambda_{-i}|\theta_i) + U_i(\mon \lambda |\theta_i) \right) \\ 
      \geq & \int_0^1 d \theta_i \cdot \theta^{N-2} \cdot \left( \frac{(\lambda_i(\theta_i)-\beta(\theta_i))^2}{2} -  \frac{(\mon \lambda_i(\theta_i)-\beta(\theta_i))^2}{2} \right)  \cdot D^{AP}, \\
      & \textnormal{ where } D^{AP} = \inf_{\theta \in (0,1]} \frac{ v'(\theta)}{(2N-3)v(\theta)^3}.
    \end{align*} 
    As a consequence, a sufficient condition for $D^{AP} > 0$ to hold is for $1/v(\theta)$ to have an upper bound on its derivative on $(0,1]$.
\end{lemma*}

\begin{proof}
    For the all-pay auction, the left hand side of (\ref{eq:transport-intermediate}) is the integral, after some cancellations,
  \begin{equation}\label{eq:all-pay-compare-1}
    \int_0^{1/\ell} d\theta \cdot \left(\frac{1}{v(\phi_k + \theta)} - \frac{1}{v(\phi_{k'} + \theta)}\right)\left( \lambda_m(\phi_k + \theta) - \lambda_m(\phi_{k'}+\theta) \right).
  \end{equation}
  We want to compare the integrands of (\ref{eq:Wasserstein-compare}) and (\ref{eq:all-pay-compare-1}) pointwise. Write $\theta^\pm$ for the higher and the lower types, and divide both integrands by $\lambda_m(\phi_k+\theta) - \lambda_m(\phi_{k'}+\theta)$, which is positive. It is thus sufficient to show that 
  $$ D^{AP} \cdot (\beta(\theta^+)\theta^{+ \ N-2} - \beta(\theta^-)\theta^{- \ N-2}) \leq \frac{1}{v(\theta^-)} - \frac{1}{v(\theta^+)} \textnormal{ for any } \theta^+ \geq \theta^-.$$
  Equality holds whenever $\theta^+ = \theta^-$, so if the derivative of the right-hand side with respect to $\theta^+$ is greater than that of the left-hand side, we are done. This gives us the bound,
  $$ D^{AP} \leq \inf_{\theta \in (0,1]} \frac{-\frac{d}{d\theta} \frac{1}{v(\theta)}}{\frac{d}{d\theta} \left( \beta(\theta) \theta^{N-2} \right)}.$$
  To eliminate the $\beta$ dependence, we note that $\beta'(\theta) = (N-1)\theta^{N-2}v(\theta) \leq (N-1)v(\theta)$ and $\beta(\theta) \leq v(\theta)$ for the all-pay auction, which implies that 
  $$\frac{d}{d\theta}(\beta(\theta)\theta^{N-2}) \leq \theta^{N-3}(\theta \beta'(\theta) + (N-2) \beta(\theta) ) \leq (2N-3) v(\theta).$$
  As a consequence a sufficient choice of $D^{AP}$ satisfies
  $$D^{AP} \leq \inf_{\theta \in (0,1]} \frac{ -\frac{d}{d\theta} \frac{1}{v(\theta)}}{(2N-3)v(\theta)} = \inf_{\theta \in (0,1]} \frac{ v'(\theta)}{(2N-3)v(\theta)^3}.$$
\end{proof}

\begin{proposition*}[Proposition \ref{prop:bounds}]
    Let $A_n$ be a $\delta$-fine family of discretised auctions, with a power law prior distribution with parameter $\alpha$. Then any BCCE $\sigma^n$ of $A_n$ with symmetric support satisfies 
    $$ \sum_{\lambda \in \Lambda_n} \sigma(\lambda) \cdot \sum_{i \in N} \int_0^1 d\theta_i \cdot \theta^{N-2} \cdot \frac{(\lambda_i(\theta_i)-\beta(\theta_i))^2}{2} \leq K(\alpha,N) N \begin{cases}
        4\delta(n) (1-\ln(4\delta(n)) & \alpha = 1 \\
        4\delta(n) \left( \frac{\alpha}{\alpha-1} \right) - \frac{(4\delta(n))^\alpha}{\alpha-1} & \alpha \neq 1
    \end{cases},$$
    \begin{enumerate}
        \item where if $A_n$ is an all-pay auction and $\alpha > 0$, 
       $K(\alpha,N) = \max\{ \max B, \alpha(2N-3) \}$, and
        \item if $A_n$ is a first-price auction and $\alpha \in (0,1)$, $K(\alpha, N) = \max\left\{ \max B, \alpha(2N-3), \frac{\alpha(N-1)}{1-\alpha} \right\}$.
    \end{enumerate}
\end{proposition*}

\begin{proof}
The form of the $\delta(n)$ dependence follows from Proposition \ref{prop:disc-to-approx-cont} and by solving, for each $\alpha > 0$, for the expression $$\int_{F(4\delta(n))}^1 d\theta_i \cdot \frac{4\delta(n)}{v(\theta_i)} = \int_{(4\delta(n))^\alpha}^1 d\theta_i \cdot 4\delta(n)\theta^{-1/\alpha} = \begin{cases}
        -4\delta(n) \ln(4\delta(n)) & \alpha = 1 \\
        \frac{\alpha}{\alpha-1} \cdot \left( 4\delta(n) - (4\delta(n))^\alpha \right) & \alpha \neq 1
    \end{cases}.$$
Then, following Theorem \ref{thm:main-disc-result}, Proposition \ref{prop:wasserstein-prior} and Lemma \ref{lem:transport}, we only need to compute sufficiently small $C^{AP}, D^{AP}, C^{FP}, D^{FP}$. We need to fix both $C^{AP}, C^{FP} \leq 1/ \max B$. For $C^{AP}$, we also require 
\begin{align*}
C^{AP} & \leq \inf_{\mu \in (0,\beta(1)]} \frac{d}{d\mu} \left( \frac{-1}{\beta^{-1}(\mu)^{N-2} \cdot v\beta^{-1}(\mu)} \right) \\
& = \inf_{\mu \in (0,\beta(1)]} \frac{1}{\beta'\beta^{-1}(\mu)} \cdot \left( \frac{(N-2) v\beta^{-1}(\mu) + v'\beta^{-1}(\mu) \beta^{-1}(\mu)}{\beta^{-1}(\mu)^{N-1} \cdot v\beta^{-1}(\mu)} \right)\end{align*}
Now, recall that the all-pay auction satisfies $\beta'(\theta) = (N-1) \theta^{N-2} v(\theta)$. This implies that for any pair $\theta = \beta^{-1}(\mu) \in (0,1]$, we have
\begin{align*}
    \frac{v'(\theta)}{(N-1)v(\theta)^2} & \leq \frac{v'(\theta)}{(N-1)\theta^{N-2}v(\theta)^2} = \frac{v'(\theta)}{\beta'(\theta) \cdot v(\theta)} \\
    & \leq \frac{v'(\theta)}{\beta'(\theta) \theta^{N-2} v(\theta)} \leq \frac{(N-2) v(\theta) + v'(\theta) \theta}{ \beta'(\theta) \theta^{N-1} v(\theta)}.
\end{align*}
It is thus sufficient to also pick 
$$C^{AP} \leq \inf_{\theta \in (0,1]} \frac{v'(\theta)}{(N-1) v(\theta)^2} = \inf_{\theta \in (0,1]} \frac{\theta^{-1-1/\alpha}}{\alpha(N-1)} = \frac{1}{\alpha(N-1)}.$$
Meanwhile, the sufficient bound on $D^{AP}$ is simply
\begin{align*}
    D^{AP} & = \inf_{\theta \in (0,1]} \frac{v'(\theta)}{(2N-3)v(\theta)^3} = \inf_{\theta \in (0,1]} \frac{\theta^{-1-2/\alpha}}{\alpha (2N-3)} = \frac{1}{\alpha(2N-3)}.
\end{align*}
It remains to calculate the relevant quantities for the first price auction. The proof of Proposition \ref{prop:wasserstein-prior} suggests that we may pick 
$$ C^{FP} \leq \inf_{\theta \in (0,1]} \frac{1}{N-1} \cdot \frac{d}{d\theta} \left( \frac{-\theta}{v(\theta)} \right) = \frac{1-\alpha}{\alpha(N-1)}.$$
Meanwhile, by Lemma \ref{lem:transport},
$$D^{FP} = \inf_{\theta \in (0,1]} \frac{v'(\theta)\theta^2}{(2N-3)v(\theta)^3} = \inf_{\theta \in (0,1]} \frac{\theta^{1-2/\alpha}}{\alpha(2N-3)} = \frac{1}{\alpha(2N-3)}.$$
\end{proof}

\end{document}